\DeclareMathOperator*{\argmin}{arg\,min}
\tikzset{bullet/.style={circle,fill,inner sep=2pt}}
\title{Robust Arbitrage Conditions for Financial Markets}
\author{Derek Singh, \, Shuzhong Zhang}
\affil{Department of Industrial and Systems Engineering, University of Minnesota \\ singh644@umn.edu, \, zhangs@umn.edu}
\date{\vspace{-5ex}}
\newcommand{\sgn}{\operatorname{sgn}}
\newcommand{\quotes}[1]{``#1''}
\theoremstyle{case}
\newtheorem{case}{Case}
\newtheorem{remark}{Remark}
\newtheorem{corollaryP}{Corollary}[prop]
\newtheorem*{theorem*}{Theorem}
\providecommand{\keywords}[1]
{
  \small	
  \textbf{\textit{Keywords---}} #1
}
\DeclareMathAlphabet{\mathcal}{OMS}{cmsy}{m}{n}
\begin{document}

\maketitle

\begin{abstract}
This paper investigates arbitrage properties of financial markets under distributional uncertainty using Wasserstein distance as the ambiguity measure. The weak and strong forms of the classical arbitrage conditions are considered. A relaxation is introduced for which we coin the term statistical arbitrage. The simpler dual formulations of the robust arbitrage conditions are derived. A number of interesting questions arise in this context. One question is: can we compute a critical Wasserstein radius beyond which an arbitrage opportunity exists? What is the shape of the curve mapping the degree of ambiguity to statistical arbitrage levels? Other questions arise regarding the structure of best (worst) case distributions and optimal portfolios. Towards answering these questions, some theory is developed and computational experiments are conducted for specific problem instances. Finally some open questions and suggestions for future research are discussed.
\end{abstract}

\keywords{arbitrage, statistical arbitrage, Farkas lemma, robust optimization, Wasserstein distance, Lagrangian duality}

\section{Introduction and Overview}
\subsection{The Characterization of Arbitrage in Financial Markets}


\indent Financial arbitrage with respect to securities pricing is a fundamental concept regarding the behavior of financial markets developed by Ross in the 1970s. A couple of his seminal papers include \textit{Return, Risk, and Arbitrage} \citep{ross1973return} and \textit{The Arbitrage Theory of Capital Asset Pricing} \citep{ross1976arbitrage}. In the author's own words the arbitrage model or arbitrage pricing theory (APT) was developed as an alternate approach to the (mean variance) Capital Asset Pricing Model (CAPM) \citep{sharpe1964capital} which was itself an extension of the foundational work on Modern Portfolio Theory by Harry Markowitz \citep{markowitz1952portfolio}. Ross argued that APT imposed less restrictions on the capital markets as did CAPM such as its requirement that the market be in equilibrium and its consideration of (only) a single market risk factor as measured by variance of asset returns. Recall that CAPM uses the security market line to relate the expected return on an asset to its beta or sensitivity to systematic (market) risk. APT, on the other hand, is a multi-factor cross sectional model that explains the expected return on an asset in linear terms of betas to multiple market risk factors that capture systematic risk \citep{ross1973return},  \citep{ross1976arbitrage}. \par 
The motivating idea behind APT is the no-arbitrage principle as characterized by the no-arbitrage conditions. This principle asserts that in a securities market it should not be possible to construct a zero cost portfolio that guarantees per scenario either a riskless profit or no chance of losses, across all possible market scenarios. If this were the case, one would be able to make money from nothing, so to speak. Ross formulates the no-arbitrage conditions and via duality theory of linear programming shows the equivalent existence of a state price vector to recover market prices \citep{ross1973return}. Existing results in the literature \citep{delbaen2006mathematics} have shown the equivalence between the single period and multi period no-arbitrage properties (on a finite probability space). To simplify the analysis, we focus on the discrete single period setting. \par
As a further refinement, the notions of \textit{weak} and \textit{strong} arbitrage were developed. A portfolio $w \in \mathbb{R}^n$ of $n$ market securities is designated a weak arbitrage opportunity if $w \cdot S_0 \leq 0$ but $\Pr(w \cdot S_1 \geq 0) = 1$ \textit{and} $\Pr(w \cdot S_1 > 0) > 0$ for initial asset price vector $S_0$ and time 1 asset price vector $S_1$. Similarly, a portfolio $w \in \mathbb{R}^n$ is designated a strong arbitrage opportunity if $w \cdot S_0 < 0$ but $\Pr(w \cdot S_1 \geq 0) = 1$. In a discrete setting with $s$ market states, given security price vector $p \in \mathbb{R}^n$ and payoff matrix $X \in \mathbb{R}^{n \times s}$, a weak arbitrage opportunity is a portfolio $w \in \mathbb{R}^n$ that satistifes $X^\top w \gneq 0$ and $p^\top w \leq 0$. Similarly, a strong arbitrage opportunity is a portfolio $w \in \mathbb{R}^n$ that satistifes $X^\top w \geq 0$ and $p^\top w < 0$. Note there are cases of weak arbitrage portfolios which are not strong arbitrage portfolios \citep[cf. e.g.][]{leroy2014principles}. \par 

In a discrete setting, the well known Farkas Lemma can be used to characterize the property of (weak) strong arbitrage. The Farkas Lemma characterization says that security price vectors $p$ exclude (weak) strong arbitrage iff given payoff matrix $X$ (across all market scenarios) there exists a (strictly) positive solution $q$ to $p = X q$. The normalized state price vectors $q^*_s = q_s / \sum_s q_s$ \textit{become} the set of discrete risk neutral probabilities that defines the measure $Q$ \citep[cf. e.g.][]{leroy2014principles}.  The fundamental theorem of asset pricing (also: of arbitrage, of finance) equates the non-existence of arbitrage opportunities in a financial market to the existence of a risk neutral (or martingale) probability measure $Q$ which can be used to compute the fair market value of all assets. A financial market is said to be complete if such a measure $Q$ is unique \citep[cf. e.g.][]{follmer2011stochastic}. The unique measure $Q$ is frequently used in mathematical finance and the pricing of derivative securities in particular, in both discrete time \citep{shreve2005stochastic} and continuous time settings \citep{shreve2004stochastic}. \par

In the context of distributional uncertainty, a natural question arises as to how to characterize the notion of arbitrage. One would presumably seek a balance of generality and practicality in developing a framework to study the arbitrage properties. Some structure is needed to develop intuition and understanding. On the other hand, too much structure could be restrictive and limit useful degrees of freedom. The approach taken in this line of research is to start from the fundamental (weak and strong) no-arbitrage conditions and investigate how the market model transitions from one of no-arbitrage to arbitrage or vice versa. Distributional uncertainty is characterized via the Wasserstein metric for a couple reasons. The Wasserstein metric is a (reasonably) well understood metric and a natural, intuitive way to compare two probability distributions using ideas of transport cost. It is also a flexible approach that encompasses parametric and non-parametric distributions of either discrete or continuous form. Furthermore, recent duality results and structural results on the worst case distributions could help us to understand and/or quantify the market model transitions as well as measure (in a relative sense) the degree of arbitrage or no-arbitrage inherent to a given market model. \par

Logical reasoning dictates that it should be possible to distort a no-arbitrage measure into an arbitrage admissible measure. For a simple discrete example, consider a one-period binomial tree of stock prices where $0 < S_d < 1+r < S_u$, $p_u + p_d = 1$, $p_u > 0 \implies p_d > 0$ are the conditions that characterize an arbitrage-free market \citep{shreve2005stochastic}. If we now distort the above $Q$ measure into a $P$ measure such that $p_d = 0$, it is clear to see that a zero cost portfolio that is long the stock and short a riskless bond will make profit with probability 1. So then, how ``far" is this distorted measure $P$ from the original no-arbitrage measure $Q$? Can we safeguard ourselves within a ball of (only) arbitrage-free probability measures $Q'$ of distance at most $\delta$ from the reference measure $Q$?  What is the structure of the worst case distributions and optimal portfolios within this ball? Is there a critical radius $\delta^*$ for this ball of arbitrage-free measures \textit{beyond} which an arbitrage admissible measure is sure to exist? Alternatively, suppose the reference measure $Q$ admitted arbitrage. What is the nearest arbitrage-free measure to this measure? Is that minimal distance, call it $\delta^*_g$, computable? These questions are the motivation for the line of research conducted in this paper. As mentioned above, this research uses the Wasserstein distance metric \citep[cf. e.g.][]{Villani08}. To the best of our knowledge, this paper is the first to investigate these notions under the Wasserstein metric and develop a mixture of theoretical and computational answers to these questions. \par
The contributions of this paper are as follows. Primal problem formulations for the classical and statistical arbitrage conditions (under distributional uncertainty using Wasserstein ambiguity) are developed. Using recent duality results \citep{Gao16}, \citep{blanchetFirst}, simpler dual formulations that only involve the reference arbitrage-free probability measure are constructed and solved. The $\max$-$\min$ and $\max$-$\max$ dual problems are formulated as nonlinear programming problems (NLPs). The structure of the best (worst) case distributions is analyzed. A formal proof for the NP hardness of the dual no-arbitrage problem is also given. Using this theoretical machinery, the critical radii $\delta^*$, the best (worst) case distributions, and/or optimal portfolios are computed for a few specific problem instances involving real world financial market data. The complementary problem to compute the minimal distance $\delta^*_g$ to an arbitrage-free measure for a reference measure that admits arbitrage is formulated and solved. We make use of the fundamental theorem of asset pricing to do this \citep[cf. e.g.][]{leroy2014principles, follmer2011stochastic}. \par

An outline of this paper is as follows. Section 1 gives an overview of the financial concepts of arbitrage and statistical arbitrage as well as a literature review. Section 2 develops the main theoretical results to characterize arbitrage under distributional uncertainty using Wasserstein distance. Section 3 extends this machinery to cover the notion of statistical arbitrage. Section 4 presents applications of the theory developed in Sections 2 and 3. Section 5 gives formal proofs for the NP hardness of the no-arbitrage problem. Section 6 is a computational study of the arbitrage properties for a few specific problem instances and computes numerical solutions. Section 7 discusses conclusions and suggestions for further research. \par

\subsection{The Characterization of Statistical Arbitrage in Financial Markets}
Statistical arbitrage denotes a class of data driven quantitative trading and algorithmic investment strategies, for a set of securities, to exploit deviations in relative market prices from their \quotes{true} distributions. Classical notions of statistical arbitrage opportunites involve estimation and use of statistical time series models (such as cointegration or kalman filter) to describe structural properties of asset prices such as mean reversion, volatility, etc.\ and help identify temporal deviations in market prices that present trading and/or investment opportunites before the market ``reverts" to its equilibrium behavior \citep{focardi2016new}. One particular sub-class of such strategies that is prevalent in both the literature and industry practice is known as pairs trading. The canonical example here is the coke vs.\ pepsi trade where one identifies a price dislocation and then simultaneously shorts the over-priced asset and buys the under-priced asset and waits for the relative prices to restore to equilibrium, and closes out the position, thus realizing a profit for the arbitrageur \citep{krauss2017statistical}. \par
Practitioners, such as investment banks and hedge funds, employ a wide array of professionals to work in multiple aspects of this: such as trading systems design and technology support, data collection, model development, trade execution, risk management, reporting, business development, and so on. The actual practice of statistical arbitrage typically involves a mixture of art and science. The science component is reflected through the estimation and use of statistical time series models and incorporation of emerging trends in the academic literature and technology (for the practical aspects of trade execution and risk management). The art component is reflected through incorporation of investment professionals knowledge, experience, and beliefs about financial markets' current state and future outlook \citep{lazzarino2018statistical}. \par
Classical notions of statistical arbitrage ``already" have an intrinsic notion of variability, hence their name. The motivation for the line of research in this paper is to extend this notion to incorporate distributional uncertainty within the framework of Wasserstein distance and the corresponding duality results. In this sense, the objectives are analagous, with the topic of focus shifted from classical arbitrage to statistical arbitrage. The first steps are to define notions of statistical arbitrage and robust statistical arbitrage and characterize their meaning. A survey of the literature reveals that no universal definition of statistical arbitrage currently exists \citep{lazzarino2018statistical}. With that in hand, next steps are to quantify the best case $(\alpha^{bc})$ and worst case $(\alpha^{wc})$ levels of statistical arbitrage as a function of the degree of distributional uncertainty, as represented by the radius $\delta$ of the Wasserstein ball. A related, complementary, problem is how to find the nearest probability measure (to the original, reference measure) that guards against statistical arbitrage of level $\alpha$ close to 1. \par

\subsection{Literature Review}
In conducting the literature review for this research, not many references were found that have investigated the topic of arbitrage under distributional uncertainty. From Section 1.1 above, one can see that considerable research has been done in academic circles regarding the classical notions of arbitrage in financial markets. Indeed, several academic papers and financial textbooks have been written that cover these topics from their origin in the 1970s until today. It was surprising to us, at least, to find only a few papers that address and/or extend the classical notions of arbitrage under the presence of some form of distributional uncertainty. This subsection gives an overview of what we found in the academic literature. \par

An earlier paper by \citet{jeyakumar2011robust} took a Farkas Lemma approach to describe linear systems subject to data uncertainty in the form of bounded uncertainty sets. The authors develop a notion of a robust Farkas Lemma in terms of the closure of a convex cone they call the robust characteristic cone. As an application of the lemma, they characterize robust solutions of conic linear programs with data contained in closed convex uncertainty sets. Recently \citet{dinh2017robust} applied the robust Farkas Lemma approach to characterize weakly minimal elements of multi-objective optimization problems with uncertain constraints. Note that weakly minimal elements correspond to the notion of optimal solution in the scalar (singleton vector) case. The authors remark that their results are consistent with existing literature in the scalar case. \par

One seminal paper of note by Ostrovskii used the total variation (TV) metric to characterize a radius $\delta_{TV}$ such that all probability measures $Q'$ within this distance from a weak arbitrage-free reference measure $Q$ are also weak arbitrage-free. The author remarks that $\delta_{TV}$ can be interpreted as the minimal probability of success that a zero cost initial portfolio $w \in \mathbb{R}^n$ achieves positive value $w \cdot S_1$ at time 1. The additional constraint on the selected portfolio $w$ is that it must have a strictly positive probability of profit under the reference measure $P$. This allows $\delta_{TV} > 0$ to hold. This lemma is proven using tools from probability theory and real analysis. The main result relating $\delta_{TV}$ to the minimal probability of success is established via proof by contradiction \citep{ostrovski2013stability}. The bound appears to be tight although this result is not proven in the paper. \par
The author remarks that the probability measures $Q$ and $Q'$ could have different support and/or generate different probability spaces. Furthermore, Ostrovski describes the no-arbitrage conditions and computes the critical radius $\delta_{TV}$ for a one-period binomial and trinomial tree respectively. The conditions for the one-period binomial tree are given in Section 1.1 above. The corresponding radius $\delta_{TV}$ is $\min(p_u,p_d)$. For the one-period trinomial tree, different configurations are possible. Let $q_d, q_m, q_u$ denote the one-period transition probabilities to the down, middle, and up nodes respectively. For the case $S_d < S_m < 1 + r < S_u$ the trinomial tree would allow arbitrage iff $q_d = q_m = 0$ or $q_u = 0$. In the first case, the TV distance between the binomial and trinomial trees would be $\max(1 - p_u, p_d) = \max(p_d,p_d) = p_d$. In the second case it would be $\max(p_u, q_m, |p_d - q_d|) \geq p_u$. Thus the trinomial model would be arbitrage-free if the TV distance to the binomial model were less than $\min(p_u,p_d)$. The other cases $S_d < S_m = 1+ r < S_u$ and $S_d < 1+ r < S_m < S_u$ can be handled similarly \citep{ostrovski2013stability}. While these results are tractable it was not clear (to us) how to apply these results to develop a dual formulation to study the market model transitions from no-arbitrage to arbitrage or vice versa. Furthermore, total variation distance has been described as a strong notion of distance in the academic literature. Given our motivation to avoid (strong) restrictions in our characterization of robust no-arbitrage markets, it would seem that a different notion of distance between probability measures might be more appropriate. \par

A recent paper by \citet{Bartl17} \textit{explicitly} incorporates a no-arbitrage constraint directly into the worst case European call option pricing problem under Wasserstein ambiguity. We consider this problem from a different perspective in this paper, namely we restrict the Wasserstein ball of probability measures to \textit{implicitly} consider only those measures which are arbitrage-free without the need to enforce an explicit constraint. In Section 2, the theoretical machinery to compute a critical radius $\delta^*_{w(s)}$ is developed to pursue this approach. Simpler worst case option pricing formulas (that omit the explicit no-arbitrage constraint) are derived as well. \par 

Finally, another recent paper by the same author \citep{bartl2019exponential} investigates the robust exponential utility maximization problem in a discrete time setting. The worst case expected utility is maximized under a family of probabilistic models of endowment that satisfy no-arbitrage conditions \textit{by assumption}. The authors show that an optimal trading strategy exists and they provide a dual representation for the primal optimization problem. Furthermore, the optimal value is shown to converge to the robust superhedging price as the risk aversion parameter increases.

\subsection{Arbitrage Framework}
This section lays out the foundations for our framework to investigate the arbitrage properties under distributional uncertainty. Recall the approach taken here is to start from the classical no-arbitrage conditions and introduce a notion of distributional uncertainty via the Wasserstein distance metric. As such, we include definitions for these terms as well as commentary on some important results: 
\begin{enumerate}[label=(\roman*)]
\item definitions for no-arbitrage and statistical arbitrage conditions;
\item Lagrangian duality to formulate the dual problem for robust arbitrage in financial markets;
\item existence and structure of worst case distributions;
\item computation of Wasserstein distance between distributions.
\end{enumerate}
\subsubsection{Weak and Strong No-Arbitrage (NA) Conditions}
The set of admissible portfolio weights for the  weak no-arbitrage conditions is 
\begin{equation*}\label{weakW}
 \Gamma_w(S_0) := \{ w \in \mathbb{R}^n : w \cdot S_0 = 0; \:  w \neq 0 \}. \tag{WW}
\end{equation*}
The set of admissible portfolio weights for the strong no-arbitrage conditions is  
\begin{equation*}\label{strongW}
 \Gamma_s(S_0) := \{ w \in \mathbb{R}^n : w \cdot S_0 < 0 \}. \tag{SW}
\end{equation*}
The no-arbitrage condition to be evaluated under probability measure $Q$ in both cases is
$\Pr( w \cdot S_1 \geq 0 ) = \mathbb{E}^Q [ \, \mathbbm{1}_{\{ w \cdot S_1 \geq 0 \}} \, ] < 1$.
Note that portfolio weight vectors $w$ satisfy the positive homogeneity property (of degree zero) since $\Pr( w \cdot S_1 \geq 0 ) = \Pr( \tilde{w} \cdot S_1 \geq 0 )$ for $\tilde{w} = c w \; \text{and} \; c > 0$. It is the proportions of the holdings in the assets that distinguish $w$ vectors, not their absolute sizes.
Weak arbitrage requires two conditions to hold: $\Pr(w \cdot S_1 \geq 0) = 1$ \textit{and} $\Pr(w \cdot S_1 > 0) > 0$. The second condition is not easily incorporated into the duality framework of this paper and hence it is omitted. Consequently the critical radius $\delta^*_w$ that is developed in Section 2 may not be tight. Strong arbitrage requires just one condition hence the bound $\delta^*_s$ will be tight. \par
For a given measure $Q$, no weak arbitrage means that 
$\sup_{w \in \Gamma_w} \mathbb{E}^Q [ \, \mathbbm{1}_{\{ w \cdot S_1 \geq 0 \}} \, ] < 1$.
Similarly, for a given measure $Q$, no strong arbitrage means that 
$\sup_{w \in \Gamma_s} \mathbb{E}^Q [ \, \mathbbm{1}_{\{ w \cdot S_1 \geq 0 \}} \, ] < 1$.
The empirical measure, $Q_N$, is defined as
$Q_N(dz) = \frac{1}{N} \sum_{i=1}^{N} \mathbbm{1}_{s_{(1,i)}} (dz)$.
To simplify the notation, the leading subscript on $s_{(1,i)}$ is suppressed and going forward we refer to the realization of time 1 asset price vector $s_{(1,i)}$ as just $s_i$. In the context of this work, the uncertainty set for probability measures is
$\mathcal{U}_{\delta}(Q_N) = \{Q: D_c(Q,Q_N) \leq \delta\}$
where $D_c$ is the optimal transport cost or Wasserstein discrepancy for cost function $c$ \citep {blanchetMV}.
The definition for $D_c$ is
\[ D_c(Q,Q') = \inf \{ \mathbb{E}^\pi[c(X,Y)]: \pi \in \mathcal{P}(\mathbb{R}^n \times \mathbb{R}^n), \pi_X = Q, \pi_Y = Q' \} \]
where $\mathcal{P}$ denotes the space of Borel probability measures and $\pi_X$ and $\pi_Y$ denote the distributions of $X$ and $Y$. 
Here $X$ denotes asset prices $S_X \in \mathbb{R}^n$ and $Y$ denotes asset prices $S_Y \in \mathbb{R}^n$ respectively.
This work uses the cost function $c$ where
$c(u,v) = \| u-v \|_2$.

\subsubsection{Note on Equivalence of Single and Multi Period NA}
For clarity we cite the following result from the literature \citep{delbaen2006mathematics}. Let $S = (S_t)_{t=0}^T$ be a discrete price process (with unit increments and $T \in \mathbb{N}$) on a finite probability space $(\Omega,\mathcal{F},\mathbb{P})$. Then the following are equivalent:
\begin{enumerate}[label=(\roman*)]
	\item $S$ satisfies the no-arbitrage property;
	\item For each $0 \leq t < T$, we have that the one-period market $(S_t,S_{t+1})$ with respect to the filtration $( \mathcal{F}_t, \mathcal{F}_{t+1})$ satisfies the no-arbitrage property.
\end{enumerate} 

Further detail on the equivalence of single and multi period no-arbitrage can be found in e.g.\ \citet{leroy2014principles}. As our focus in this paper is on the discrete single period setting, the above relationship suffices. One direction for further research would be to consider the robust no-arbitrage properties in a multi period \textit{continuous} time setting for a suitable class of admissible trading strategies. A more general version of the fundamental theorem of asset pricing applies there. See \citet{delbaen2006mathematics} for additional detail on this topic.

\subsubsection{Weak and Strong Statistical Arbitrage (SA) Conditions}
To characterize the situation where a profitable trading opportunity is highly likely yet not necessarily certain, we introduce a notion of statistical arbitrage. Recall that no universal definition of statistical arbitrage currently exists \citep{lazzarino2018statistical}. Towards that end, we propose using a relaxation of the classical arbitrage conditions to define a notion of statistical arbitrage. In particular, let us write the best case (bc) statistical arbitrage (of level $\alpha^{bc} \in (0,1)$) condition under probability measure $Q$ as $\Pr( w \cdot S_1 \geq 0 ) = \mathbb{E}^Q [ \, \mathbbm{1}_{\{ w \cdot S_1 \geq 0 \}} \, ] \leq \alpha^{bc}$. The set of admissible portfolio weights for the weak (strong) condition is $w \in \Gamma_{w(s)}$ as before (see Section 1.4.1). Intuitively, the best case statistical arbitrage condition says that it should not be possible to construct a zero (or negative) cost portfolio that returns either a profit or no chance of losses with probability $\alpha^{bc}$ close to 1. In the limit $\alpha^{bc} \rightarrow 1$ one recovers the classical arbitrage condition. Similarly, the worst case (wc) condition (of level $\alpha^{wc} \in (0,1)$) is $\Pr( w \cdot S_1 \geq 0 ) = \mathbb{E}^Q [ \, \mathbbm{1}_{\{ w \cdot S_1 \geq 0 \}} \, ] \geq \alpha^{wc}$. Probability $\alpha^{wc}$ close to 0 describes a no-win situation.

\subsubsection{Restatement of Lagrangian Duality Result}
In Section 2 we formulate the primal stochastic optimization problem for distributionally robust arbitrage-free markets. As in our earlier work \citep{Singh19xva1} a key step in the approach is to use recent Lagrangian duality results to formulate the equivalent dual problem.  The dual problem is much more tractable than the primal problem since it only involves the reference probability measure as opposed to a Wasserstein ball of probability measures (of some finite radius). This allows us to solve a \textit{maximin} optimization problem under the original empirical measure defined by the selected data set. A brief restatement of this duality result follows next. \par
For real valued upper semicontinuous objective function $f \in L^1$ and non-negative lower semicontinuous cost function $c$ such that $\{ (u,v) : c(u,v) < \infty \}$ is Borel measurable and non-empty, it holds that \citep{BlanchetML}
\[ \sup_{ Q \in \mathcal{U}_{\delta}(Q_N) } \mathbb{E}^Q[ f(X) ] = \inf_{ \lambda \geq 0 } \: [ \lambda \delta + \frac{1}{N} \sum_{i=1}^n \Psi_\lambda(x_i) ] \]
where
\[ \Psi_\lambda(x_i) := \sup_{ u \in \text{dom} (f) } [ f(u) - \lambda c(u,x_i) ]. \]
The primal problem (LHS above) is concerned with the worst case expected loss for some objective function $f$ with respect to a Wasserstein ball of probability measures of finite radius $\delta$. The Wasserstein ball is used to reflect some (real world) uncertainty about the true underlying distribution for random variable (or vector) $X$. Note that the primal problem is an infinite dimensional stochastic optimization problem and thus difficult to solve directly. The simplicity and tractability of the dual problem (RHS above) make it quite attractive as an analytical and/or computational tool in our toolkit. \par
Further details, including proofs and concrete examples, can be found in the papers by \citet{blanchetFirst}, \citet{Gao16}, and \citet{ Esfahani17}. These authors independently derived these results around the same time although \citet{blanchetFirst} did so in a more general setting. The duality result has been applied by the above authors and others in several papers on topics in data driven distributionally robust stochastic optimization such as robust machine learning, portfolio selection, and risk management. For these types of robust optimization problems, the incorporation of distributional uncertainty can be viewed as adding a penalty term (similar to penalized regression) to the optimal solution \citep{blanchetMV}. This gives us a nice intuitive way to think about the \textit{cost} of robustness.

\subsubsection{Characterization of Worst Case Distributions}
Simply put, the set of worst case (wc) distributions (when non-empty) can be defined as $WC(f,\delta) := \{ Q^* : \mathbb{E}^{Q^*} [f(X)] = \sup_{ Q \in \mathcal{U}_{\delta}(Q_N) } \mathbb{E}^Q[ f(X) ] \}$. Another recent set of results from the literature describes the existence and structure of the worst case distribution(s) when they exist \citep{blanchetFirst}, \citep{Gao16}, \citep{ Esfahani17}.  The boundedness conditions for existence are tied to the growth rate $\kappa := \limsup\limits_{d(X,X_0) \rightarrow \infty} \frac{f(X) - f(X_0)}{d(X,X_0)}$ for fixed $X_0$ and the value of the dual minimizer $\lambda^*$. For empirical reference distributions, supported on $N$ points, such that $WC(f,\delta)$ is non-empty, there exists \textit{a} worst case distribution that is \textit{another} empirical distribution supported on at most $N+1$ points. This worst case distribution can be constructed via a greedy approach. For up to $N$ points, they can be identified as solving $x_i^* \in \argmin_{\tilde{x} \in dom(f)} [ \lambda^* c(\tilde{x},x_i) - f(\tilde{x}) ]$. At most one point has its probability mass split into two pieces (according to budget constraint $\delta$) that solve $x_{i_0}^{*},x_{i_0}^{**}  \in \argmin_{\tilde{x} \in dom(f)} [ \lambda^* c(\tilde{x},x_{i_0}) - f(\tilde{x}) ]$. Details can be found in \citet{Gao16}. For our problem setting, the growth rate conditions are satisfied and hence we proceed to formulate and then apply a greedy algorithm (see Section 2.2) to compute the worst case distribution for a concrete example in Section 5. A similar example from the literature, which uses a greedy algorithm to compute the minimal (worst case) membership to a given set $C$, is covered in \citep{Gao16}. Note that other worst case distributions can be constructed with different support sets and/or probability mass functions (PMFs). It can be insightful to examine how the reference distribution can be perturbed for a given objective $f$ as $\delta$ varies. See Section 2.2 for specific commentary on the structure and construction of the worst case distribution(s) for the robust NA problem.

\subsubsection{On Computing Wasserstein Distance}
This section introduces some standard and recent results on computing Wasserstein distance between distributions. The recent results are focused on discrete distributions since our problems of interest are data driven. The standard results (below) are taken from the online document by \citet{LWonWD}. Wasserstein distance has simple expressions for univariate distributions. The Wasserstein distance of order $p$ is defined over the set of joint distributions $\mathcal{P}$ with marginals $Q$ and $Q'$ as 
\[
W_p(Q,Q') = \left(  \inf_{\pi \in \mathcal{P}(X,Y)} \int \| x - y \|^p \: d \pi(x,y) \right)^{1/p}.
\]
Note that in this work we consider Wasserstein distance of order $p=1$. When $d=1$ there is the formula 
\[
W_p(Q,Q') = \left( \int_{0}^{1} | F^{-1}(z) - G^{-1}(z) |^p \: dz \right)^{1/p}.
\]
For empirical distributions with $N$ points, there is the formula using order statistics on $(X,Y)$
\[
W_p(Q,Q') = \left( \sum_{i=1}^{N} \| X_{(i)} - Y_{(i)} \|^p \right)^{1/p}.
\]
Additional closed forms are known for: (i) normal distributions, (ii) mappings that relate Wasserstein distance to multiresolution $L_1$ distance. See \citet{LWonWD} for details. This concludes the brief survey of standard (closed form) results.

For discrete distributions, at least a couple of methods have been recently developed to compute approximate and/or (in the limit) exact Wasserstein distance. The commentary on these methods is taken from \citet{xie2018fast}. For distributions with finite support, and cost matrix $C$, one can compute $W(Q,Q') := \min_{\pi} \langle C, \pi \rangle$ with probability simplex constraints using linear programming (LP) methods of $O(N^3)$ complexity. An entropy regularized version of this, using regularizer $h(\pi) := \sum \pi_{i,j} \log \pi_{i,j}$ gives rise to the Sinkhorn distance
\[
W_{\epsilon}(Q,Q') := \min_{\pi} \langle C, \pi \rangle + \epsilon h(\pi)
\]
which can be solved using iterative Bregman projections via the Sinkhorn algorithm. However, the authors comment that certain problems (such as generative model learning and barycenter computation) experience performance degradation for a moderately sized $\epsilon$ but opting for a small size can be computationally expensive. To address these shortcomings, they develop their own approach called inexact proximal point method for optimal transport (IPOT). The proximal point iteration takes the form
\[
\pi^{(t+1)} = \argmin_{\pi} \langle C, \pi \rangle + \beta^{(t)} D_h( \pi, \pi^{(t)} )
\]
where $\beta$ denotes a parameter of the method and $D_h$ denotes the Bregman divergence based on the entropy function. Substitution for Bregman divergence gives the form
\[
\pi^{(t+1)} = \argmin_{\pi} \langle C - \beta^{(t)} \log \pi^{(t)}, \pi \rangle + \beta^{(t)} h( \pi ).
\]
It turns out that this iteration can also be solved via the Sinkhorn algorithm. However the authors propose an inexact method that improves efficiency while maintaining convergence. See \citet{xie2018fast} for details.

\section{Theory: Robust Arbitrage Conditions for Financial Markets}
This section develops the theory for robust arbitrage in financial markets. In Section 2.1, the primal problem is formulated using classical notions of arbitrage as discussed in Section 1.4.1. The dual problem is formulated using the Lagrangian duality result from Section 1.4.4. Note that the dual problem is a \textit{maximin} stochastic optimization problem. The inner optimization problem (evaluating $\Psi_{\lambda,w}$) can be solved analytically using the Projection Theorem \citep{calafiore2014optimization}. The middle optimization problem (evaluating the dual objective function over $\inf_{\lambda \geq 0}$) can be solved via execution of a simple linear search algorithm over a finite set of points. The outer optimization problem (evaluating over $\sup_{w \in \Gamma_{w(s)}}$) can be formulated as an NLP. Finally, the middle and outer problems can be solved jointly via a \textit{maximin} NLP approach. \par

Section 2.2 gives details on the worst case distributions and Sections 2.3 and 2.4 show how to incorporate portfolio restrictions (such as short sales) in a straightforward manner. Section 2.5 introduces the complementary problem of how to find the nearest arbitrage-free measure to the arbitrage admissible reference measure. This machinery gives us a practical approach to explore applications of our framework for robust arbitrage.

\subsection{Robust Weak and Strong No-Arbitrage (NA) Conditions}
The robust weak no-arbitrage conditions can be expressed as
\begin{equation*}\label{weakP}
\sup_{ w \in \Gamma_w } \: { \sup_{ Q \in \mathcal{U}_{\delta}(Q_N) } \mathbb{E}^Q [ \, \mathbbm{1}_{\{ w \cdot S_1 \geq 0 \}} \, ] } < 1 \tag{WP}
\end{equation*}
where $\Gamma_w$ is defined in \ref{weakW}.
Note the indicator function $\mathbbm{1}_{\{ w \cdot S_1 \geq 0 \}}$ on closed set ${\{ w \cdot S_1 \geq 0 \}}$ is upper semicontinuous  hence we can apply the duality theorem (see Section 1.4.4) to obtain the dual formulation
\begin{equation*}\label{weakD}
\sup_{ w \in \Gamma_w } \: { \inf_{ \lambda \geq 0 } \: [ \: \lambda \delta + \frac{1}{N} \sum_{i=1}^N \Psi_{\lambda,w} (s_i) \: ] } < 1 \tag{WD}
\end{equation*}
where $\Psi_{\lambda,w}$ is defined, in terms of cost function $c$, as
$\Psi_{\lambda,w} = \sup_{\tilde{s} \in \mathbb{R}^n} [ \, \mathbbm{1}_{\{ w \cdot \tilde{s} \, \geq \, 0 \}} - \lambda c( \tilde{s}, s_i) \, ]$.
Similarly, for the robust strong no-arbitrage conditions 
\begin{equation*}\label{strongP}
\sup_{ w \in \Gamma_s } \: { \sup_{ Q \in \mathcal{U}_{\delta}(Q_N) } \mathbb{E}^Q [ \, \mathbbm{1}_{\{ w \cdot S_1 \geq 0 \}} \, ] } < 1 \tag{SP}
\end{equation*}
where $\Gamma_s$ is defined in \ref{strongW}, the dual formulation is
\begin{equation*}\label{strongD}
\sup_{ w \in \Gamma_s } \: { \inf_{ \lambda \geq 0 } \: [ \: \lambda \delta + \frac{1}{N} \sum_{i=1}^N \Psi_{\lambda,w} (s_i) \: ] } < 1. \tag{SD}
\end{equation*}

\subsubsection{Inner Optimization Problem}
The objective here is to evaluate $\Psi_{ \lambda,w }$ in closed form. There are two cases to consider.
\begin{case}
\[ \mathbbm{1}_{\{ w \cdot s_i \geq 0 \}} = 1 \implies \Psi_{ \lambda,w } (s_i) = 1 - \lambda \cdot 0 = 1 \quad \text{which is optimal.} \]
\end{case}

\begin{case}
\[ \mathbbm{1}_{\{ w \cdot s_i \geq 0 \}} = 0 \implies \Psi_{ \lambda,w } (s_i) = [ 1 - \lambda c( s_i^*, s_i) ]^+ \quad \text{where} \quad s_i^* = \argmin \| \tilde{s} - s_i\|_2 \quad \text{is optimal.} \]
\end{case}
By the Projection Theorem \citep{calafiore2014optimization},  $\| s_i^* - s_i\|_2 = \frac{|w^\top s_i|}{\|w\|_2} \implies \Psi_{ \lambda,w } (s_i) = [ 1 - \lambda c_i ]^+ \text{\; for \;} \\ c_i = \frac{| w^\top s_i | }{ \| w \|_2 } \in \mathbb{R}^n_+$.

\begin{prop}
\[ \frac{1}{N} \sum_{i=1}^N \Psi_{\lambda,w}(s_i) = K_0(w) + K_1(\lambda,w) \]
where \: $K_0(w) = \frac{1}{N} \sum_{i=1}^N \mathbbm{1}_{\{ w \cdot s_i \geq 0 \}}$ \: and \: $K_1(\lambda,w) = \frac{1}{N} \sum_{i=1}^N \mathbbm{1}_{\{ w \cdot s_i < 0 \}} [ 1 - \lambda c_i ]^+ $ \: for \: $c_i = \frac{| w^\top s_i |}{ \| w \|_2 } \in \mathbb{R}^n_+$.
\end{prop}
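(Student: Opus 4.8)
The plan is to prove the identity pointwise in the summation index $i$ and then sum, relying entirely on the two-case closed form for $\Psi_{\lambda,w}(s_i)$ already established in the preceding Inner Optimization subsection. The essential observation is that for each fixed $i$ the events $\{w \cdot s_i \geq 0\}$ and $\{w \cdot s_i < 0\}$ partition the sample realizations, so exactly one of the indicators $\mathbbm{1}_{\{w \cdot s_i \geq 0\}}$ and $\mathbbm{1}_{\{w \cdot s_i < 0\}}$ equals $1$. This lets me combine the two cases into a single expression without any loss.

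First I would invoke Case 1, which gives $\Psi_{\lambda,w}(s_i) = 1$ whenever $w \cdot s_i \geq 0$, and Case 2, which gives $\Psi_{\lambda,w}(s_i) = [1 - \lambda c_i]^+$ with $c_i = |w^\top s_i| / \|w\|_2$ whenever $w \cdot s_i < 0$; the latter value follows from the Projection Theorem computation $\|s_i^* - s_i\|_2 = |w^\top s_i|/\|w\|_2$ carried out just above the statement. Using the indicator partition, I can write the unified per-term identity
\[
\Psi_{\lambda,w}(s_i) = \mathbbm{1}_{\{w \cdot s_i \geq 0\}} \cdot 1 + \mathbbm{1}_{\{w \cdot s_i < 0\}} \cdot [1 - \lambda c_i]^+,
\]
which reduces to the correct case value depending on the sign of $w \cdot s_i$.

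Next I would sum this identity over $i = 1, \dots, N$ and multiply by $1/N$. By linearity of the finite sum, the two indicator-weighted pieces separate into precisely the definitions of $K_0(w) = \frac{1}{N}\sum_{i=1}^N \mathbbm{1}_{\{w \cdot s_i \geq 0\}}$ and $K_1(\lambda,w) = \frac{1}{N}\sum_{i=1}^N \mathbbm{1}_{\{w \cdot s_i < 0\}}[1 - \lambda c_i]^+$, yielding the claimed decomposition.

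I do not expect any genuine obstacle here, since the inner optimization analysis has already done the analytical work. The only point requiring a line of care is the degenerate case $w \cdot s_i = 0$: here the convention $\mathbbm{1}_{\{w \cdot s_i \geq 0\}} = 1$ applies, and one checks this is consistent with the Case 2 formula in the limit, because $c_i = |w^\top s_i|/\|w\|_2 = 0$ forces $[1 - \lambda c_i]^+ = 1$, so both cases agree on the boundary and the partition is unambiguous. With that consistency noted, the identity follows immediately.
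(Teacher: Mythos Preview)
Your proposal is correct and follows essentially the same approach as the paper, which simply states that the result ``follows by a straightforward application of the two cases above.'' Your version spells out the per-term indicator partition and the boundary consistency check at $w\cdot s_i = 0$, which is more detail than the paper provides but amounts to the same argument.
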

\begin{proof}
This follows by a straightforward application of the two cases above.
\end{proof}

\subsubsection{Middle Optimization Problem}
\begin{remark}
In this subsubsection, the dependency of $\lambda^*$ on $(w,\delta)$ is suppressed to ease the notation.
\end{remark}
Now the objective is to evaluate $\inf_{ \lambda \geq 0} H(\lambda) := [ \: \lambda \delta + K_0(w) + K_1(\lambda,w) \: ]$.
Since $H(\lambda)$ is a convex function of $\lambda$, the first order optimality condition suffices to determine $\lambda^* = \argmin_{\lambda \geq 0} H(\lambda)$. Note that $H(\lambda)$ may have kinks so we look for $\lambda^*$ such that $0 \in \partial H(\lambda^*)$. Following the approach in our earlier work \citep{Singh19xva1}, we arrive at the following result.
\begin{prop}
$
Let \: \lambda^{*} = \sup_{\lambda \geq 0} \{ \lambda :  \delta - \frac{1}{N} [ \sum_{i \in J^+_1(\lambda)} \mathbbm{1}_{\{ w \cdot s_i < 0 \}} c_i ] \leq 0  \} = \inf_{\lambda \geq 0} \{  \lambda :  \delta - \frac{1}{N} [ \sum_{i \in J_1(\lambda)} \mathbbm{1}_{\{ w \cdot s_i < 0 \}} c_i ] \geq 0 \}, where\\
$
$J^+_1(\lambda) \: = \{ i \in \{1,\dots,N\} : 1 - c_i \lambda > 0 \}, \:$
$J_1(\lambda) \: = \{ i \in \{1,\dots,N\} : 1 - c_i \lambda \geq 0 \}$. 
In the degenerate case, where $\sup_{\lambda \geq 0}$ is taken over an empty set, select $\lambda^* = 0 \implies H(\lambda^*) = 0$.
\end{prop}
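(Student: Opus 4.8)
The plan is to exploit the convexity of $H$ (already recorded in the text) together with an explicit computation of its one-sided derivatives, and then to translate the first-order optimality condition $0 \in \partial H(\lambda^*)$ into the two stated formulas. First I would note that $K_0(w)$ is constant in $\lambda$ and that $\lambda\delta$ contributes the constant slope $\delta$, so all $\lambda$-dependence sits in $K_1(\lambda,w) = \frac{1}{N}\sum_i \mathbbm{1}_{\{w \cdot s_i < 0\}}[1 - \lambda c_i]^+$. Each summand $[1 - \lambda c_i]^+$ is convex and piecewise linear with a single kink at $\lambda = 1/c_i$: its slope is $-c_i$ for $\lambda < 1/c_i$, is $0$ for $\lambda > 1/c_i$, and its subdifferential at the kink is $[-c_i,0]$. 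For the relevant indices ($w \cdot s_i < 0$) we have $c_i = |w^\top s_i|/\|w\|_2 > 0$, so $1/c_i$ is finite, while indices with $w \cdot s_i \geq 0$ are killed by the indicator and contribute nothing.

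Summing these contributions I would obtain closed forms for the left and right derivatives of $H$:
\[
H'_+(\lambda) = \delta - \tfrac{1}{N}\sum_{i \in J_1^+(\lambda)} \mathbbm{1}_{\{w \cdot s_i < 0\}}\, c_i,
\qquad
H'_-(\lambda) = \delta - \tfrac{1}{N}\sum_{i \in J_1(\lambda)} \mathbbm{1}_{\{w \cdot s_i < 0\}}\, c_i .
\]
The point is that the strict set $J_1^+(\lambda)$ records exactly the terms still carrying slope $-c_i$ immediately to the right of $\lambda$, whereas the non-strict set $J_1(\lambda)$ additionally includes any term whose kink sits exactly at $\lambda$, which is what the left derivative sees. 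Since each summand is convex, $H$ is convex and both $H'_\pm$ are non-decreasing; as $\lambda$ grows the active index sets shrink, the subtracted sums decrease, and $H'_\pm$ indeed increase, with $H'_+$ right-continuous and $H'_-$ left-continuous as step functions of $\lambda$.

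Next I would invoke the first-order optimality condition for a convex function on $[0,\infty)$: a point $\lambda^* > 0$ minimizes $H$ iff $0 \in \partial H(\lambda^*) = [H'_-(\lambda^*), H'_+(\lambda^*)]$, that is, $H'_-(\lambda^*) \leq 0 \leq H'_+(\lambda^*)$. Because $H'_+$ is non-decreasing and right-continuous, the set $\{\lambda \geq 0 : H'_+(\lambda) \leq 0\}$ is an interval whose supremum is the largest $\lambda$ with $H'_+(\lambda) \leq 0$, which yields the first (sup) characterization. Dually, because $H'_-$ is non-decreasing and left-continuous, the infimum of $\{\lambda \geq 0 : H'_-(\lambda) \geq 0\}$ is the smallest $\lambda$ with $H'_-(\lambda) \geq 0$, giving the second (inf) characterization. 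I would then confirm the two agree by a short case split: either the slope jumps across zero at a kink, where $H'_-(\lambda^*) < 0 < H'_+(\lambda^*)$ and both the sup and the inf land on that kink, or zero is attained in the interior of a linear piece, where $H'_- = H'_+ = 0$ and both formulas return the same endpoint.

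The main obstacle, and the step demanding the most care, is exactly this bookkeeping at the kinks: matching the strict inequality defining $J_1^+$ to the right derivative and the non-strict inequality defining $J_1$ to the left derivative, and verifying that the sup-formula and the inf-formula pin down the same $\lambda^*$ (and that, in the non-generic flat case $\delta = \frac{1}{N}\sum_{i} \mathbbm{1}_{\{w \cdot s_i < 0\}} c_i$ over some active set, $H$ is merely constant on an interval so the minimal value is unaffected by the choice of endpoint). The degenerate case is then immediate: if $H'_+(0) = \delta - \frac{1}{N}\sum_i \mathbbm{1}_{\{w \cdot s_i < 0\}} c_i > 0$, the slope is positive throughout, the set in the supremum is empty, $H$ is strictly increasing on $[0,\infty)$, and the constrained minimizer is the boundary point $\lambda^* = 0$, which is the stated convention.
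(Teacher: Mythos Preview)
Your proposal is correct and follows essentially the same route as the paper: compute the one-sided derivatives of the convex piecewise-linear function $H$, invoke the first-order condition $0 \in \partial H(\lambda^*)$, and use monotonicity of $H'_\pm$ in $\lambda$ to obtain the $\sup$ and $\inf$ characterizations. Your write-up is in fact more careful than the paper's own proof in matching $J_1^+$ to the right derivative and $J_1$ to the left derivative, and in explicitly verifying that the two characterizations coincide at a kink.
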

\begin{proofsketch}
This result follows from writing down the first order conditions for left and right derivatives for convex objective function $H(\lambda)$. For each additional index $i \in J^+_1 (J_1)$ such that at least one indicator function is true, we pick up an additional $c_i$ term in the left (right) derivative. Search on $\lambda$ (from the left or the right) until we find $\lambda^*$ such that $0 \in \partial H(\lambda^*)$.
\end{proofsketch}
\begin{proof}
The first order optimality condition says
\[ \delta - \frac{1}{N} \sum_{i \in J^+_1(\lambda)} \mathbbm{1}_{\{ w \cdot s_i < 0 \}} c_i \leq 0 \leq \delta - \frac{1}{N} \sum_{i \in J_1(\lambda)} \mathbbm{1}_{\{ w \cdot s_i < 0 \}} c_i. \]
Note the LHS is an increasing function in $\lambda$. Hence one can write
\[ \lambda^* = \sup_{\lambda \geq 0} \{ \lambda : \delta - \frac{1}{N} \sum_{i \in J^+_1(\lambda)} \mathbbm{1}_{\{ w \cdot s_i < 0 \}} c_i \leq 0 \}. \]
Similarly the RHS is also an increasing function in $\lambda$. Equivalently, one can write
\[ \lambda^* = \inf_{\lambda \geq 0} \{ \lambda : \delta - \frac{1}{N} \sum_{i \in J_1(\lambda)} \mathbbm{1}_{\{ w \cdot s_i < 0 \}} c_i \geq 0 \}. \]
\end{proof}

\begin{prop}
Equivalently, $\lambda^*$ can be computed via a linear seach over $\{ \frac{1}{c_i} \}$ as in Algorithm 1 (listed below). 
\end{prop}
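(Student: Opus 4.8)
The plan is to exploit the piecewise-linear convex structure of $H(\lambda) = \lambda\delta + K_0(w) + K_1(\lambda,w)$ and show that its minimizer over $\lambda \ge 0$ must lie in the finite candidate set $\{1/c_i\}$ (together with the boundary point $\lambda=0$), so that the continuous search of the previous proposition collapses to a discrete linear search. First I would note that $\lambda\delta + K_0(w)$ is affine in $\lambda$ and that each summand of $K_1$, namely $\tfrac{1}{N}\mathbbm{1}_{\{w\cdot s_i<0\}}[1-\lambda c_i]^+$, is a convex nonincreasing piecewise-linear function of $\lambda$ with a single kink at $\lambda = 1/c_i$ whenever $c_i>0$ (and is constant when $c_i=0$). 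Summing, $H$ is convex and piecewise linear, and its set of breakpoints is contained in $\{1/c_i : w\cdot s_i<0,\ c_i>0\}$.

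Next I would invoke the standard fact that a convex piecewise-linear function on $[0,\infty)$ attains its minimum either at the left endpoint $\lambda=0$ or at one of its breakpoints, provided the minimum is finite; finiteness holds here because the right-slope of $H$ equals $\delta - \tfrac{1}{N}\sum_{i\in J^+_1(\lambda)}\mathbbm{1}_{\{w\cdot s_i<0\}}c_i$, which increases to $\delta>0$ as $\lambda\to\infty$ once all kinks are passed. Hence $\lambda^*\in\{0\}\cup\{1/c_i\}$, which is exactly the search domain of Algorithm 1.

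The key step is to show that sweeping through the sorted thresholds reproduces the first-order condition of the previous proposition. I would sort the values $\{1/c_i : w\cdot s_i<0,\ c_i>0\}$ in ascending order and maintain the running quantity $\delta - \tfrac{1}{N}\sum_{i\in J^+_1(\lambda)}\mathbbm{1}_{\{w\cdot s_i<0\}}c_i$. Each time $\lambda$ passes a threshold $1/c_i$, index $i$ leaves $J^+_1$ (and $J_1$), so the subtracted sum drops by $\tfrac{1}{N}c_i$ and this quantity jumps upward; being monotone nondecreasing in $\lambda$, it crosses from negative to nonnegative at exactly one threshold. Identifying that threshold — the one at which the stated $\sup/\inf$ inequalities flip, i.e.\ where $0\in\partial H(\lambda^*)$ — yields $\lambda^*$, so the linear search returns precisely the value characterized by the previous proposition.

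The main obstacle I anticipate is the careful bookkeeping at the breakpoint where the slope changes sign, together with the edge cases: ties among the $1/c_i$ (several indices leaving $J^+_1$ simultaneously, so the slope jumps by their aggregate $\tfrac{1}{N}\sum c_i$), indices with $c_i=0$ or $w\cdot s_i\ge 0$ that contribute no breakpoint, and the degenerate situation in which the slope is already nonnegative at $\lambda=0$, so that $\lambda^*=0$ and $H(\lambda^*)=0$. Keeping the left/right derivative distinction ($J^+_1$ versus $J_1$) straight, so that the stopping test selects exactly the breakpoint with $0\in\partial H(\lambda^*)$ rather than an adjacent one, is where I would spend the most care.
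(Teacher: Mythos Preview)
Your proposal is correct and follows essentially the same approach as the paper: both arguments observe that $H$ is convex piecewise linear with breakpoints exactly at $\{1/c_i\}$, so $\lambda^*$ lies in $\{0\}\cup\{1/c_i\}$, and then sort the $c_i$ and exploit the nesting property of the index sets $J_1,J_1^+$ to locate the sign change of the subgradient via a monotone sweep. The paper's version is terser (it sorts the $c_i$ increasing and searches backwards for the smallest $k^*$ with $\sum_{i\le k^*}\mathbbm{1}_{\{w\cdot s_i<0\}}c_i\ge N\delta$), while you spell out the convexity and edge-case bookkeeping more explicitly, but the substance is the same.
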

\begin{proof}
The break points for $J_1 (J^+_1)$ are $\{ c_i : i \in \{1,\dots,N\} \}$. Observe that the only possible candidates for $\lambda^*$, as given in  Proposition 2.2, are $\{ \frac{1}{c_i }: i \in \{1,\dots,N\} \}$ or 0. One can sort and relabel the $c_i$ to be in increasing order. Note that $(1 - c_j \lambda) > 0 \implies (1 - c_i \lambda) > 0 \:\: \forall \:\: c_i \leq c_j$. Thus $m \in J_1(J^+_1) \implies \{1,\dots,m \} \in J_1(J^+_1)$. Search backwards to find the smallest index $k^* \in \{1,\dots,N\}$ such that $\sum_{i=1}^{k^*} \mathbbm{1}_{\{ w \cdot s_i < 0 \}} c_i \geq N \delta$. If no such index $k^*$ is found, return $\lambda^* = 0$ else return $\lambda^* = \frac{1}{c_{k^*}}$.
\end{proof}

\begin{algorithm}[H]
\DontPrintSemicolon
	\KwInput{$\{\frac{1}{c_i}\},\: w \:,\: \{s_i\}\:,\: N\:,\: \delta$}
  	\KwOutput{$\lambda^* = \sup_{\lambda \geq 0} \{ \lambda :  \delta - \frac{1}{N} [ \sum_{i \in J^+_1(\lambda)} \mathbbm{1}_{\{ w \cdot s_i < 0 \}} c_i ] \leq 0  \} = \inf_{\lambda \geq 0} \{  \lambda :  \delta - \frac{1}{N} [ \sum_{i \in J_1(\lambda)} \mathbbm{1}_{\{ w \cdot s_i < 0 \}} c_i ] \geq 0 \}$ }
	Set $Q^* = Q_N$ \;
	Sort $\{ c_i \}$ Increasing \;
	Compute $\{ V_k \}$ where $V_k := \sum_{i=1}^k \mathbbm{1}_{\{ w \cdot s_i < 0 \}} c_i$ \;
    	$k = N$ \;
	\If{ $V_k < N \delta$ }
    {
		return $\lambda^* := 0$ \;
    }
    \Else
    {
		\While{ $k \geq 1$ and $V_k \geq N \delta$ }
		{
			$k = k - 1$ \;
		}
		$k^* = k + 1$ \;
		return $\lambda^* := \frac{1}{c_{k^*}}$ \;
    }
\caption{Linear Search over $\{ \frac{1}{c_i} \}$ to compute $\lambda^*$}
\end{algorithm}

\subsubsection{Outer Optimization Problem}
The weak no-arbitrage conditions can \textit{now} be expressed as
\begin{equation*}\label{weakD2}
v_w(\delta) := \sup_{ w \in \Gamma_w } \: \{ \lambda^*(w,\delta) \delta + K_0(w) + K_1(\lambda^*(w,\delta),w) \} < 1. \tag{WD2}
\end{equation*}
Similarly, for the strong no-arbitrage conditions  
\begin{equation*}\label{strongD2}
v_s(\delta) := \sup_{ w \in \Gamma_s } \: \{ \lambda^*(w,\delta) \delta + K_0(w) + K_1(\lambda^*(w,\delta),w) \} < 1. \tag{SD2}
\end{equation*}
The authors are not aware of any such pairing of mixed integer nonlinear program (MINLP) formulation and solver that can return the (global) optimal values $v_{w(s)}(\delta)$ for arbitrary problem instances. Our attempts at such an MINLP formulation to be solved using Neos / Baron MINLP solvers \citep{ts:05} and/or Neos / Knitro solvers \citep{byrdintegrated} were successful on small but not large problem instances. Difficulties were encountered in finding feasible solutions and/or returning optimal solutions.
Given the findings above, our original solution strategy was revised to focus on solving an equivalent NLP \textit{maximin} problem formulation to \textit{local} optimality using the Matlab \textit{fminimax} solver and the identity $\max_{x} \min_{k} F_k(x) = - \min_{x} \max_{k} (-F_k(x)) $. The equivalent formulation is constructed from the observation that $\lambda^* \in \{ \frac{1}{c_k }: k \in \{1,\dots,N\} \} \cup \{ \lambda_0 := 0 \}$. Developing a \textit{global} solution strategy would be an interesting area for further research.

\begin{theorem}
$v_w(\delta)$ is approximated by the (global) solution to nonlinear program (NLP) N\_WNA (listed below).
\end{theorem}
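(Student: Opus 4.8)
The plan is to turn the $\max$-$\min$ dual value (WD2) into a \emph{finite} minimax program and then invoke the standard epigraph reformulation that the \emph{fminimax} solver handles. First I would fix $w \in \Gamma_w$ and recall from Propositions 2.2 and 2.3 that the inner objective $H(\lambda) = \lambda\delta + K_0(w) + K_1(\lambda,w)$ is convex and piecewise-linear in $\lambda$, with kinks only at the breakpoints $\{1/c_i(w)\}$; since $K_1(\lambda,w)\to 0$ as $\lambda\to\infty$, for $\delta>0$ the infimum over $\lambda\ge 0$ is attained at a point of the finite candidate set $\Lambda(w) := \{0\} \cup \{\,1/c_i(w) : i = 1,\dots,N\,\}$. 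Consequently the inner infimum collapses to a minimum over finitely many indices,
\[ \inf_{\lambda \geq 0} H(\lambda) = \min_{k \in \{0,1,\dots,N\}} \big[\, \lambda_k\,\delta + K_0(w) + K_1(\lambda_k, w) \,\big] =: \min_{k} F_k(w), \]
where $\lambda_k = 1/c_k(w)$ for $k \geq 1$ and $\lambda_0 := 0$ (direct substitution in the Proposition 2.1 closed forms shows the $\lambda_0$ term always evaluates to $1$).

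Substituting this identity into (WD2) yields the purely finite problem $v_w(\delta) = \sup_{w \in \Gamma_w} \min_{k} F_k(w)$. Next I would exploit the positive homogeneity of degree zero of each $F_k$ (inherited from $c_i = |w^\top s_i|/\|w\|_2$ and the indicators) to replace the open condition $w \neq 0$ in $\Gamma_w$ by the normalization $\|w\|_2 = 1$, so that the feasible region becomes the compact set $\{\,w : w \cdot S_0 = 0,\ \|w\|_2 = 1\,\}$ and the supremum is attained. Finally, applying the identity $\max_x \min_k F_k(x) = -\min_x \max_k (-F_k(x))$ quoted in the text, I would introduce the hypograph variable and cast the problem as the minimax NLP N\_WNA, whose constraints encode the (normalized) set $\Gamma_w$, the assignments $\lambda_k = 1/c_k$, and the closed forms for $K_0$ and $K_1$; its optimal value then equals $v_w(\delta)$.

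The hard part is that each $F_k$ carries the indicator functions $\mathbbm{1}_{\{w\cdot s_i \geq 0\}}$ through $K_0$ and $K_1$, and these are discontinuous in $w$ across the hyperplanes $\{w\cdot s_i = 0\}$. This piecewise-constant sign structure is precisely the combinatorial content that makes the exact problem an MINLP. On each region of fixed sign pattern the $F_k$ are smooth, so \emph{fminimax} operates on the correct objective locally, but it cannot by itself certify that the sign pattern to which it converges is globally optimal. This is the source of the qualifier "approximated": N\_WNA is an exact reformulation whose \emph{global} optimum equals $v_w(\delta)$, yet because the indicator discontinuities preclude a guaranteed global solve, the value returned is in general only a local optimizer. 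Rigorously matching the discontinuous objective to the smooth minimax solver is therefore the principal obstacle; the remaining algebra (inserting the Proposition 2.1 closed forms and verifying the homogeneity normalization) is routine.
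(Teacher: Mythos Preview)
Your proposal is correct and follows essentially the same line as the paper: both rest on the single observation (Proposition~2.3) that the optimal dual multiplier lies in the finite set $\{0\}\cup\{1/c_k(w)\}$, which collapses the inner $\inf_{\lambda\ge 0}$ to a finite $\min_k$ and yields the displayed max--min NLP. The paper's proof is in fact a one-line appeal to this fact together with~(WD2); your write-up supplies the surrounding justification (piecewise-linear convexity, attainment at kinks, the meaning of the ``approximated'' qualifier) that the paper leaves implicit. One small mismatch: you normalize via $\|w\|_2=1$, whereas the stated program N\_WNA encodes $\Gamma_w$ through the big-$M$ box $|w_i|\le M$ and the $\sum_j|w_j|\ge\epsilon$ constraint; both are legitimate by homogeneity, but since the theorem refers to N\_WNA as listed, you should either adopt the paper's constraints or note explicitly that the two encodings are equivalent by degree-zero homogeneity.
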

\noindent The constraints on variables below, with index $i$, apply for $i \in \{1,\dots,N\}$, although this is suppressed. Also recall that weight vectors $w$ satisfy homogeneity, hence the use of ``big M" to express $w \in \Gamma_{w(s)}$ is appropriate.
\begin{maxi}|l|<b>
{\substack{w \in \mathbb{R}^n} }
{\min_{\lambda_k \: : \: k \in \{0,1,...,N\}} \quad F_k(w) := \lambda_k \delta + \frac{1}{N} \bigg[ \sum_{i=1}^N \mathbbm{1}_{\{ w \cdot s_i \geq 0 \}} + \sum_{i=1}^N z^+_i \mathbbm{1}_{\{ w \cdot s_i < 0 \}} \bigg] }{\label{M_WNA}}
{v_w(\delta)=}
\addConstraint{c_i}{= \frac{| w^\top s_i |}{\| w \|_2}}
\addConstraint{\lambda_k}{=\frac{1}{c_k} \quad \forall k \in \{1,\dots,N\}}
\addConstraint{\lambda_0}{=0}
\addConstraint{|w_i|}{\leq M }
\addConstraint{w \cdot S_0}{= 0}
\addConstraint{\sum_{j=1}^n|w_j|}{\geq \epsilon}
\addConstraint{z_i}{= [1-\lambda_k c_i]}
\end{maxi}
\begin{proof}
The NLP formulation follows from equation \ref{weakD2} and the fact that $\lambda^* \in \{ \frac{1}{c_k }: k \in \{1,\dots,N\} \} \cup \{ \lambda_0 := 0 \}$.
\end{proof}

\begin{corollary}
$v_s(\delta)$ is approximated by the solution to NLP N\_SNA (described next). N\_SNA is very similar to N\_WNA. One just needs to omit the $\sum_{j=1}^n |w_j| \geq \epsilon$ constraint and replace the initial cost constraint $w \cdot S_0 = 0$ with $-M \leq w \cdot S_0 \leq - \epsilon$, or equivalently with $w \cdot S_0 = \kappa < 0$, ($\kappa$ arbitrary), using the homogeneity property of $w$.
\end{corollary}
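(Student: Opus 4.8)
The plan is to mirror the derivation of Theorem 2.1 (N\_WNA), changing only the encoding of the feasible portfolio set. I would begin from the dual strong no-arbitrage formulation \ref{strongD2},
\[
v_s(\delta) = \sup_{w \in \Gamma_s} \{\, \lambda^*(w,\delta)\,\delta + K_0(w) + K_1(\lambda^*(w,\delta),w) \,\},
\]
and reuse verbatim the structural observation, established in Propositions 2.2--2.3 and invoked in the proof of Theorem 2.1, that for each fixed $w$ the inner minimizer satisfies $\lambda^*(w,\delta) \in \{\, 1/c_k : k \in \{1,\dots,N\} \,\} \cup \{0\}$. This is exactly what turns $\inf_{\lambda \geq 0}$ into the finite inner minimization $\min_{k \in \{0,\dots,N\}} F_k(w)$ and produces the maximin NLP; the constraints defining $c_i$, $\lambda_k$, $z_i$, and the big-M box $|w_i| \le M$ therefore carry over unchanged.

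The only substantive work is to show that $\Gamma_s = \{\, w : w \cdot S_0 < 0 \,\}$ may be replaced, without changing the supremum, by the closed bounded region $\{\, w : -M \le w \cdot S_0 \le -\epsilon \,\}$ (equivalently by a single slice $\{\, w : w \cdot S_0 = \kappa \,\}$ for any fixed $\kappa < 0$), and that the constraint $\sum_j |w_j| \ge \epsilon$ can be dropped. The latter is immediate: $w \cdot S_0 < 0$ already forces $w \neq 0$, so the explicit nonzero constraint that was needed on the hyperplane $w \cdot S_0 = 0$ in N\_WNA is now redundant. For the former I would invoke the degree-zero homogeneity noted in Section 1.4.1: the objective is constant along each positive-scaling ray $w \mapsto cw$, $c > 0$. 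Hence for any $w \in \Gamma_s$ the rescaled vector $\tilde w = (\kappa/(w \cdot S_0))\,w$ has positive scaling factor $\kappa/(w \cdot S_0) > 0$, satisfies $\tilde w \cdot S_0 = \kappa$, and attains the same objective value; conversely every scaling orbit meeting $\Gamma_s$ also meets the slice and the box. The three suprema therefore coincide.

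Assembling these observations yields N\_SNA exactly as stated: delete $\sum_j |w_j| \ge \epsilon$ and replace $w \cdot S_0 = 0$ by $-M \le w \cdot S_0 \le -\epsilon$ (or $w \cdot S_0 = \kappa < 0$), with all remaining constraints inherited from N\_WNA, and with the same local-versus-global solution caveat. The main obstacle I anticipate is precisely the legitimacy of trading the open half-space for a closed, bounded constraint: one must confirm that no maximizing behavior escapes as $w \cdot S_0 \to 0^-$ or as $\|w\| \to \infty$. Here homogeneity does the real work, since it collapses each ray to a single representative and renders the boundary limit $w \cdot S_0 \to 0^-$ harmless (it corresponds to rescaling within a fixed orbit rather than to new objective values). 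I would state this invariance explicitly rather than treat it as routine.
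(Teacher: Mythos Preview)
Your proposal is correct and takes essentially the same approach as the paper; in fact, the paper's own proof is the single sentence ``There is a slight variation on the constraints to express $w \in \Gamma_s$. No other changes are needed,'' and you have simply unpacked what that sentence means (redundancy of the nonzero constraint once $w\cdot S_0<0$, and the homogeneity argument justifying the replacement of the open half-space by a bounded slice or a fixed $\kappa$).
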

\begin{proof}
There is a slight variation on the constraints to express $w \in \Gamma_s$. No other changes are needed.
\end{proof}

\begin{theorem}
The critical radius $\delta^*_{w(s)}$ can be expressed as $\inf \{\delta \geq 0 :v_{w(s)}(\delta) = 1\}$. Furthermore, $\delta^*_{w(s)}$ can be explicitly computed via binary search. Let $\delta_{w(s)} < \delta^*_{w(s)}$. For $Q_{w(s)} \in \mathcal{U}_{\delta_{w(s)}}(Q_N)$, it follows that $Q_{w(s)}$ is weak (strong) arbitrage-free. For $Q_{w(s)} \notin \mathcal{U}_{\delta^*_{w(s)}}(Q_N)$, it follows that $Q_{w(s)}$ may admit weak (strong) arbitrage.

\end{theorem}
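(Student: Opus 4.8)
The plan is to derive all four assertions from two structural facts about the dual value function $v_{w(s)}(\cdot)$: it is non-decreasing in $\delta$, and it is bounded above by $1$. First I would record the monotonicity. Since $\delta_1 \le \delta_2$ gives $\mathcal{U}_{\delta_1}(Q_N)\subseteq\mathcal{U}_{\delta_2}(Q_N)$ directly from the definition of the Wasserstein ball, enlarging $\delta$ enlarges the feasible set of the inner supremum, so the primal value $\sup_{w}\sup_{Q\in\mathcal{U}_{\delta}(Q_N)}\mathbb{E}^{Q}[\mathbbm{1}_{\{w\cdot S_1\ge 0\}}]$ is non-decreasing in $\delta$; by the duality identity of Section~1.4.4, $v_{w(s)}(\delta)$ inherits this monotonicity. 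Because the objective is the expectation of an indicator, $v_{w(s)}(\delta)\le 1$ for every $\delta\ge 0$.

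Next I would show the level set $\{\delta:v_{w(s)}(\delta)=1\}$ is non-empty, so that $\delta^*_{w(s)}:=\inf\{\delta\ge 0:v_{w(s)}(\delta)=1\}$ is finite. Fixing any admissible $w\in\Gamma_{w(s)}$ and transporting each atom $s_i$ with $w\cdot s_i<0$ onto the hyperplane $\{w\cdot s=0\}$ costs, by the Projection Theorem, $\tfrac1N\sum_{i:w\cdot s_i<0}c_i<\infty$ with $c_i=|w^{\top}s_i|/\|w\|_2$; the resulting measure lies in $\mathcal{U}_{\delta}(Q_N)$ for every $\delta$ at least this cost and puts probability one on $\{w\cdot S_1\ge 0\}$, forcing the primal value, hence $v_{w(s)}(\delta)$, to equal $1$. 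Combined with monotonicity and $v_{w(s)}\le 1$, the set $\{v_{w(s)}=1\}$ is an up-ray with infimum $\delta^*_{w(s)}<\infty$, so $v_{w(s)}(\delta)<1$ for $\delta<\delta^*_{w(s)}$ and $v_{w(s)}(\delta)=1$ for $\delta>\delta^*_{w(s)}$. This is exactly the claimed representation of the critical radius.

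The two measure-level assertions then follow from the strict separations in the statement. For $\delta_{w(s)}<\delta^*_{w(s)}$ one has $v_{w(s)}(\delta_{w(s)})<1$, which is precisely the robust no-arbitrage condition; by duality it certifies $\sup_{w}\sup_{Q\in\mathcal{U}_{\delta_{w(s)}}(Q_N)}\mathbb{E}^{Q}[\mathbbm{1}_{\{w\cdot S_1\ge 0\}}]<1$, so every $Q\in\mathcal{U}_{\delta_{w(s)}}(Q_N)$ satisfies the single-measure no-arbitrage condition and is weak (strong) arbitrage-free. For $Q\notin\mathcal{U}_{\delta^*_{w(s)}}(Q_N)$ one has $D_c(Q,Q_N)>\delta^*_{w(s)}$; choosing $\delta=D_c(Q,Q_N)$ gives $v_{w(s)}(\delta)=1$ by the up-ray structure, so the robust no-arbitrage condition fails on $\mathcal{U}_{\delta}(Q_N)\ni Q$ and some measure in that ball admits arbitrage, which is exactly the intended ``may admit'' conclusion. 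Finally, monotonicity makes $v_{w(s)}$ amenable to bisection: starting from the bracket $[0,\delta_{\mathrm{hi}}]$ with $\delta_{\mathrm{hi}}$ the finite transport cost above and $v_{w(s)}(0)=\sup_{w}K_0(w)<1$ whenever $Q_N$ is itself arbitrage-free, I would evaluate $v_{w(s)}$ at the midpoint by solving the NLP of Theorem~2.1 and halve the interval toward the crossing of the level $1$, localizing $\delta^*_{w(s)}$ to any tolerance.

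The main obstacle is the behavior exactly at $\delta=\delta^*_{w(s)}$. Because $v_{w(s)}$ is a supremum over $w$ of functions that are concave (hence continuous) in $\delta$, it is only guaranteed to be lower semicontinuous and could jump at the threshold, so the infimum need not be attained. I would argue this does not affect the theorem: every assertion is phrased with a strict separation ($\delta<\delta^*_{w(s)}$ and $D_c(Q,Q_N)>\delta^*_{w(s)}$) and therefore never invokes the boundary value, while monotonicity alone guarantees that bisection still converges to $\delta^*_{w(s)}$ irrespective of a possible jump. A secondary caveat worth flagging is that $v_{w(s)}(\delta)$ is evaluated through an NLP solved only to local optimality, so the computed radius is in practice an approximation of the true $\delta^*_{w(s)}$.
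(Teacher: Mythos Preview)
Your proposal is correct and follows essentially the same approach as the paper: the paper's proof invokes the definition of no-arbitrage via conditions (WD2)/(SD2), the asymptotics $v_{w(s)}(0)\le 1$ and $\lim_{\delta\to\infty}v_{w(s)}(\delta)=1$, and monotonicity of $v_{w(s)}$ in $\delta$ to justify binary search. Your argument supplies more detail---an explicit transport construction in place of the asymptotic limit, and a careful treatment of the boundary and local-optimality caveats---but the skeleton is the same.
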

\begin{proof}
This characterization of the critical radius $\delta^*_{w(s)}$ follows from the condition \ref{weakD2} (\ref{strongD2}) as well as the definition of weak (strong) no-arbitrage. The asymptotic properties of $v_{w(s)}$ are such that $v_{w(s)}(0) \leq 1$ and $\lim_{\delta \to \infty} v_{w(s)}(\delta) = 1$. Furthermore, since $v_{w(s)}(\delta)$ is a non-decreasing function of $\delta$, it follows that $\delta^*_{w(s)}$ can be computed via binary search.
\end{proof}
\noindent One can view the critical radius $\delta^*_{w(s)}$ as a relative measure of the \textit{degree} of weak (strong) arbitrage in the reference measure $Q_N$. Those $Q_N$ which are ``close" to allowing arbitrage will have a relatively smaller value of $\delta^*_{w(s)}$.

\subsection{Best Case Distribution for Arbitrage Condition}
This subsection expands on the commentary in Section 1.4.5 and works through the details for how this notion applies to the robust no-arbitrage problem. First recall from Section 1.4.5 the definition of the set of worst case distributions as $WC(f,\delta) := \{ Q^* : \mathbb{E}^{Q^*} [f(X)] = \sup_{ Q \in \mathcal{U}_{\delta}(Q_N) } \mathbb{E}^Q[ f(X) ] \}$ and $x_i^* \in \argmin_{\tilde{x} \in dom(f)} [ \lambda^* c(\tilde{x},x_i) - f(\tilde{x}) ]$. For the NA problem, $c_i$ represents $c(s^*_i,s_i)$ and the objective function is $f(S_1) := \mathbbm{1}_{\{w \cdot S_1 \geq 0\}}$ hence growth rate $\kappa = 0 \implies WC$ non-empty (growth rate condition satisfied). From an arbitrageur's perspective, $Q^*$ represents a best case distribution, hence let us relabel the set $WC$ as $BC$. We use the notation $BC(w,\delta)$ to emphasize the parametrization on $w$. In Section 6 the greedy algorithm (to be described below) is used to compute \textit{a} best case distribution $Q^*_w \in BC(w,\delta^*)$. Please note that although this $Q^*_w$ satisfies $\mathbb{E}^{Q^*} [f(S_1)] = 1$ it does not \textit{necessarily} allow arbitrage. Intuitively, an arbitrage distribution would use up budget $\delta \geq \delta^*$ to allow arbitrage whereas the greedy worst case distribution may not do so. An arbitrage distribution must satisfy
\[
\sup_{ w \in \Gamma_{w(s)} } \: { \sup_{ Q \in \mathcal{U}_{\delta^*}(Q_N) } \mathbb{E}^Q [ \, \mathbbm{1}_{\{ w \cdot S_1 \geq 0 \}} \, ] } = 1. 
\]
whereas a (greedy) worst case distribution with budget $\delta \geq \delta^*$ only needs to satisfy the condition that the inner $\sup$ evaluates to 1. However, selecting portfolio weights $w^*$ that satisfy the outer $\sup$ condition above, one \textit{can} recover $Q^*_{w^*}$ that allows arbitrage. 
\begin{algorithm}[!htb]
\DontPrintSemicolon
	\KwInput{$f\:,\: w \:,\: \{s_i\}\:,\: \{c_i\}\:, N\:,\: \delta$}
  	\KwOutput{$Q^*_w : \mathbb{E}^{Q^*_w} [f(X)] = \sup_{ Q \in \mathcal{U}_{\delta}(Q_N) } \mathbb{E}^Q[ f(X) ] $}

	Define $Q^*_w := \{ Q^*_v, Q^*_p \}$ where $Q^*_{v}$ denotes the support and $Q^*_{p}$ denotes probabilities \;
	Set $Q^*_w = Q_N$ so that those scenarios $\{ i \in \{1,\dots,N\} : \mathbbm{1}_{\{ w \cdot s_i \geq 0 \}} \}$ do not move \;
	Sort $\{ c_i \}$ Increasing \;
	Set $V_0 := 0$ and Compute $\{ V_k \}$ where $V_k := \sum_{i=1}^k \mathbbm{1}_{\{ w \cdot s_i < 0 \}} c_i$ \;
    	$k = 1$ \;
	\While{ $k \leq N$ and $V_k \leq N \delta$ }
	{
		\If{ $\mathbbm{1}_{\{ w \cdot s_k < 0 \}}$ and $(  1 - \lambda^* c_k ) \geq 0$ }
		{
			$Q^*_v(k) = s_k - \sgn{ (w \cdot s_k) } c_k \frac{w}{\|w\|}$ \;
		}
		$k = k + 1$ \;
	}
	\If{ $k \leq N$ and $V_k > N \delta$ and $\mathbbm{1}_{\{ w \cdot s_k < 0 \}}$ }
    {
		$p_0 = ( N \delta - V_{k-1} ) / V_k$ \;
		$Q^*_p(N+1) = \frac{p_0}{N}$ \;
		$Q^*_v(N+1) = s_k - \sgn{ (w \cdot s_k) } c_k \frac{w}{\|w\|}$ \;
		$Q^*_p(k) = \frac{1 - p_0}{N}$ \;
    }
\caption{Greedy Algorithm to compute $Q^*_w \in BC(w,\delta)$ for NA}
\end{algorithm}

\subsection{Portfolio Restrictions}
This subsection discusses refinements to the no-arbitrage conditions (see Section 2.1) to characterize portfolio restrictions such as short sales restrictions, min and max position constraints, and cardinality constraints \citep{cornuejols2018optimization}. For efficiency of presentation, we refer the reader to the N\_WNA and N\_SNA NLP problems discussed in Section 2.1.3 and do no restate those formulations here. An advantage of the computational machinery developed in this paper is that such portfolio restrictions can be readily incorporated into the existing framework. Note that these additional constraints may cause the restricted NLP problem to violate the homogeneity property of $w$ so one should exercise caution in formulating the new problem correctly. For example, for restricted N\_SNA one should use the $-M \leq w \cdot S_0 \leq -\epsilon$ constraint instead of $w \cdot S_0 = \kappa < 0$, ($\kappa$ arbitrary). Table 1 below describes the various portfolio restrictions (discussed here) and associated constraints. Others are possible as well. Note that the index set is $j \in \{1,\dots,n\}$ which is suppressed for brevity.

\renewcommand{\arraystretch}{1.5}
\begin{table}[h]
\normalsize
\begin{center}
\caption{Portfolio Restrictions}
\begin{tabular}{ |c|c|l| }
 \hline
\textit{Restriction} & \textit{MINLP Constraint} & \textit{No Restriction} \\
 \hline
Short Sales & $w_j \geq ss_j$ \:\: where $ss_j \in \mathbb{R}_{-}$ \:\: is short sales limit & $ss_j = -M$ \\ 
\hline
Min Positions & $|w_j| \geq \underline{w}$ \:\: where $\underline{w} \in \mathbb{R}_{+}$ \:\: denotes min position & $\underline{w} = 0$ \\ 
 \hline
Max Positions & $|w_j| \leq \overline{w}$ \:\: where $\overline{w} \in \mathbb{R}_{+}$ \:\: denotes max position & $\overline{w} = M$ \\ 
\hline
Cardinality & $\sum_{j=1}^n \mathbbm{1}_{\{ |w_j| \geq \epsilon \}} \leq m$ \:\: where $m \in \{1,\dots,n\}$ \:\: is cardinality constraint & $m=n$ \\ 
\hline
Allocations & $| \sum_{j \in A_k} w_j S_{0j} |  \leq \overline{A_k}$ \:\: where $\overline{A_k} \in \mathbb{R}_{+}$ \:\: is asset class $k$ allocation constraint & $\overline{A_k}=M n$ \\ 
\hline
\end{tabular} 
\end{center} 
\end{table}
\renewcommand{\arraystretch}{1}

\subsection{NA Conditions Under No Short Sales}
This subsection gives a brief summary (using the author's notation) of the work by \citet{oleaga2012arbitrage} to formulate equivalent (weak) no-arbitrage conditions, in terms of existence of risk neutral probability measures, under no short sales. A similar exercise could be conducted for strong no-arbitrage conditions although the author focuses on the weak conditions. From the previous subsection, no short sales conditions can be \textit{directly} imposed by setting $ss_j = 0$ for $j \in J$ for some index set $J \subseteq \{1,\dots,n\}$. Oleaga begins his paper with a remark that the Fundamental Theorem of Finance establishes the equivalence between the no-arbitrage conditions and the existence of a risk neutral probability measure (see Section 1.1 of this paper for details) under the assumption that short selling of risky securities is allowed. He remarks that when short sales are not allowed, the academic literature is scarce regarding equivalent conditions on probability measures. As motivation for his main result (which implies that existence of a risk neutral measure is not guaranteed under no short sales) the author develops two examples: one using a simple one-period binomial model with one risky asset, and another involving wagers in a stylized market where the assets are \text{Arrow-Debreu} securities. Using standard techniques in linear algebra, convex analysis, and the separating hyperplane theorem the author proves his main result which is stated below for convenience.
\begin{theorem*} (Arbitrage Theorem for No Short Sales). The market model $\mathcal{M}$ with $m$ scenarios for $n$ assets $X_j : j \in \{1,\dots,n\}$ has no-arbitrage opportunities iff there exists a probability measure $\pi$ such that the initial prices $x_j$ are greater than or equal to the discounted value of the expected future prices under $\pi$. Written in symbols we have:
\[
x_j \geq \frac{1}{1+r_0} \sum_{i=1}^m \pi_i X_{ij} \quad where \quad j \in \{1,\dots,n\}.
\]
Moreover, for those assets $X_j : j \in \{1,\dots,n\}$ where short selling is allowed, equality is achieved in the above relation.
In particular, the bank account or cash bond (used to execute the borrowing to purchase the portfolio at time 0) is treated as a special asset $X_0$ excluded from the above relation. It would hold with equality if included.
\end{theorem*}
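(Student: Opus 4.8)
The plan is to prove both implications by reducing everything to discounted net payoffs and then invoking a theorem of the alternative. First I would fix notation: for asset $j$ in scenario $i$ define the discounted net gain $G_{ij} := \frac{1}{1+r_0} X_{ij} - x_j$, and let $S \subseteq \{1,\dots,n\}$ be the set of assets for which short selling is permitted, with $L := \{1,\dots,n\}\setminus S$ the long-only assets. An admissible portfolio is a vector $w$ with $w_j \in \mathbb{R}$ for $j \in S$ and $w_j \geq 0$ for $j \in L$; financing each purchase through the bank account $X_0$ (borrowing/lending at rate $r_0$) shows that $\mathcal{M}$ admits an arbitrage precisely when some admissible $w$ yields $\sum_j w_j G_{ij} \geq 0$ for every $i$ with strict inequality for at least one $i$.

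For the easy direction (existence of $\pi$ implies no arbitrage), I would take any admissible $w$ and estimate its initial cost. Using $x_j \geq \frac{1}{1+r_0}\sum_i \pi_i X_{ij}$ together with $w_j \geq 0$ on $L$, and the equality on $S$ (where the sign of $w_j$ is unrestricted), one obtains $\sum_j w_j x_j \geq \frac{1}{1+r_0}\sum_i \pi_i \big(\sum_j w_j X_{ij}\big)$. Hence if the terminal payoff $\sum_j w_j X_{ij}$ is non-negative in every scenario, the right-hand side is non-negative because $\pi_i \geq 0$, so the cost is non-negative; with $\pi$ strictly positive, a payoff that is positive in some scenario forces strictly positive cost, which rules out both strong and weak arbitrage.

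The substance is the converse. No-arbitrage says the polyhedral cone $C := \{(\sum_j w_j G_{ij})_{i=1}^m : w \text{ admissible}\} \subseteq \mathbb{R}^m$, generated by $\{\pm G_{\cdot j}\}_{j\in S}\cup\{G_{\cdot j}\}_{j\in L}$, meets the non-negative orthant only at the origin. I would then apply a Stiemke-type theorem of the alternative (equivalently the separating hyperplane theorem for a closed convex cone sharing only $\{0\}$ with $\mathbb{R}^m_{\geq 0}$) to produce a strictly positive $\pi \in \mathbb{R}^m_{>0}$ with $\langle \pi, v\rangle \leq 0$ for all $v \in C$. Feeding the generators into this inequality gives $\sum_i \pi_i G_{ij} = 0$ for $j \in S$ (both signs are admissible, forcing equality) and $\sum_i \pi_i G_{ij} \leq 0$ for $j \in L$. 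After normalizing so that $\sum_i \pi_i = 1$ and substituting the definition of $G_{ij}$, these become $x_j = \frac{1}{1+r_0}\sum_i \pi_i X_{ij}$ on $S$ and $x_j \geq \frac{1}{1+r_0}\sum_i \pi_i X_{ij}$ on $L$, which is exactly the claimed characterization; the bank account, being deterministic with $X_{i0} = (1+r_0)x_0$, automatically satisfies the relation with equality, explaining the final remark.

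The main obstacle I anticipate is the converse direction, specifically extracting a \emph{strictly} positive $\pi$ rather than a merely non-zero non-negative functional, since full support of $\pi$ is what matches the weak-arbitrage (payoff $\geq 0$, strict somewhere) formulation. This requires the correct generalized Stiemke alternative for the mixed cone with both free and sign-constrained coordinates, and care that the separating functional's sign conditions split into equalities on $S$ and inequalities on $L$; the closedness needed for separation is automatic here because $C$ is polyhedral. A secondary subtlety is confirming that financing via the bank account legitimately reduces the two-date, positive-cost/non-negative-payoff arbitrage definition to the single discounted-gain inequality used above.
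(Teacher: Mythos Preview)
The paper does not actually supply its own proof of this theorem: it is quoted from \citet{oleaga2012arbitrage} (with a parallel reference to \citet{leroy2014principles}), and the surrounding text only reports that Oleaga proves it ``using standard techniques in linear algebra, convex analysis, and the separating hyperplane theorem.'' So there is no detailed in-paper argument to compare against line by line.

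That said, your proposal is consistent with that description and is essentially correct. Reducing to discounted net gains $G_{ij}$, splitting the admissible directions into free coordinates on $S$ and non-negative coordinates on $L$, and then invoking a Stiemke-type alternative (equivalently, separating the polyhedral cone $C$ from $\mathbb{R}^m_{\geq 0}$ to obtain a strictly positive $\pi$) is exactly the standard route and matches what the paper attributes to Oleaga. Your identification of the key subtlety---that one needs Stiemke rather than plain Farkas to secure $\pi>0$ and thereby exclude \emph{weak} arbitrage, and that the free generators $\pm G_{\cdot j}$ on $S$ force equality while the one-sided generators on $L$ give the inequality---is on point. The polyhedrality of $C$ handles the closedness needed for separation, and the bank-account remark is indeed just the observation that a deterministic asset with $X_{i0}=(1+r_0)x_0$ makes $G_{i0}\equiv 0$.
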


\noindent In an independent work, \citet{leroy2014principles} develop essentially the same results for both weak and strong no-arbitrage conditions. They show that for the weak conditions, the probability measure $\pi$ is such that $\pi > 0$ whereas for the strong conditions $\pi \geq 0$.

\subsection{Nearest NA Problem}
Recall that the motivating question here is how to find the nearest arbitrage-free measure to the arbitrage admissible reference measure. 
\subsubsection{Short Sales Allowed}
This subsection looks at the problem of computing the minimal distance $\delta^*_g$ to an arbitrage-free measure for a reference measure $Q_N$ that admits arbitrage. In a discrete setting, the nearest (strong) no-arbitrage problem can be formulated as
\begin{equation*}\label{NstrongP}
\delta^*_{ns} = \min_{\tilde{X}} \| X - \tilde{X} \|_F \:\: \text{such that} \:\:\: \exists \; q \geq 0 \: : \: p = \tilde{X} q \tag{NSP}
\end{equation*}
where $\|X\|_F$ denotes the Frobenius norm of matrix $X$. A penalty relaxation can be formulated as
\begin{equation*}\label{NstrongPR}
\delta^*_{nsr}(\beta) = \min_{\tilde{X}, q \geq 0} \| X - \tilde{X} \|_F + \beta \| p - \tilde{X} q \|_F^2  \tag{NSPR}
\end{equation*}
A tight lower bound $\delta^*_{nst} \leq \delta^*_{ns}$ to the relaxation problem \ref{NstrongPR} is given by 
\begin{equation*}\label{NstrongPRT}
\delta^*_{nst} = \sup_{\beta \geq 0} \: \delta^*_{nsr}(\beta) \tag{NSPRT}
\end{equation*} 
For a complete market with non-redundant securities, note that $X$ (and hence $\tilde{X}$) is a full rank, invertible square $n \times n$ matrix. 

\subsubsection{No Short Sales}
This subsection mimics the approach of the previous subsection, however we make use of the equivalent probability measure condition discussed in Section 2.4 \citep{oleaga2012arbitrage}, \citep{leroy2014principles}. In a discrete setting, the nearest (weak) no-arbitrage problem, under no short sales, can be formulated as
\begin{equation*}\label{NNweakP}
\delta^*_{nns} = \min_{\tilde{X}} \| X - \tilde{X} \|_F \:\: \text{such that} \:\:\: \exists \:\; \text{probability measure} \:\; q > 0 \: : \: p \geq \frac{\tilde{X}}{1+r_0} q. \tag{NNWP}
\end{equation*}
A penalty relaxation can be specified as
\begin{equation*}\label{NNweakPR}
\delta^*_{nnsr}(\beta) = \min_{\tilde{X}, q > 0} \| X - \tilde{X} \|_F + \beta \| (\tilde{X} q - (1+r_0)p)^+ \|_F^2  \tag{NNWPR}
\end{equation*}
A tight lower bound $\delta^*_{nnst} \leq \delta^*_{nns}$ to the relaxation problem \ref{NNweakPR} is given by 
\begin{equation*}\label{NNweakPRT}
\delta^*_{nnst} = \sup_{\beta \geq 0} \: \delta^*_{nnsr}(\beta) \tag{NNWPRT}
\end{equation*} 
Recall the bank account or cash bond (used to borrow) is excluded from the above relation. For a complete market with non-redundant securities, note that $X$ (and hence $\tilde{X}$) is a full rank, invertible square $n \times n$ matrix. 

\subsection{Alternate Robust NA Conditions}
For completeness, we comment on an alternate formulation of the robust NA conditions (from Section 2.1) that \textit{exchanges} the order of $\sup$ operators. Such conditions can be expressed as
\begin{equation*}\label{strongRNAP}
\sup_{ Q \in \mathcal{U}_{\delta}(Q_N) } \: \sup_{ w \in \Gamma_s } \: \mathbb{E}^Q [ \, \mathbbm{1}_{\{ w \cdot S_1 \geq 0 \}} \, ] < 1 \tag{RSNAP}
\end{equation*}
where $\Gamma_s$ is defined in \ref{strongW}. The intuitive meaning of this formulation is that the market player \textit{first} chooses a favorable distribution $Q \in \mathcal{U}_{\delta}(Q_N)$ and \textit{then} the portfolio manager chooses an optimal $w \in \Gamma_s$. It is clear that 
\[
\sup_{ Q \in \mathcal{U}_{\delta}(Q_N) } \: \sup_{ w \in \Gamma_s } \: \mathbb{E}^Q [ \, \mathbbm{1}_{\{ w \cdot S_1 \geq 0 \}} \, ] = \:
\sup_{ w \in \Gamma_s } \: { \sup_{ Q \in \mathcal{U}_{\delta}(Q_N) } \mathbb{E}^Q [ \, \mathbbm{1}_{\{ w \cdot S_1 \geq 0 \}} \, ] }.
\]

\section{Theory: Robust Statistical Arbitrage (SA) Conditions for Financial Markets}
This section develops the theory for robust statistical arbitrage in financial markets. We follow the same approach as in Section 2 for robust arbitrage. For simplicity, and to ease the notation, let us focus on the strong conditions. The weak conditions can be handled similarly, replacing $w \in \Gamma_s$ with $w \in \Gamma_w$, as in Section 2. In Section 3.1, the primal problem for the SA best case conditions is formulated using notions of statistical arbitrage as discussed in Section 1.4.3. The dual problem is formulated using the Lagrangian duality result from Section 1.4.4. The dual problem is a \textit{maximin} stochastic optimization problem. Section 3.2 touches on the best case SA distribution. In Section 3.3, the primal problem for the SA worst case conditions is formulated. The dual problem for this is \textit{maximax}. Both dual problems can be solved as in Section 2. Section 3.4 touches on the worst case SA distribution. Section 3.5 addresses portfolio restrictions. Section 3.6 covers the nearest SA problem. Section 3.7 discusses alternate robust SA conditions. Altogether, this machinery gives us a practical approach to explore applications of our framework in Sections 4 and 6.

\subsection{Robust SA Best Case Conditions}
The robust (strong) statistical arbitrage best case conditions (of level $\alpha^{bc} \in (0,1)$) can be expressed as
\begin{equation*}\label{strongSAP}
\sup_{ w \in \Gamma_s } \: { \sup_{ Q \in \mathcal{U}_{\delta}(Q_N) } \mathbb{E}^Q [ \, \mathbbm{1}_{\{ w \cdot S_1 \geq 0 \}} \, ] } \leq \alpha^{bc}, \tag{SSAP}
\end{equation*}
where $\Gamma_s$ is defined in \ref{strongW}.
As before, the indicator function $\mathbbm{1}_{\{ w \cdot S_1 \geq 0 \}}$ on closed set ${\{ w \cdot S_1 \geq 0 \}}$ is upper semicontinuous  hence we can apply the duality theorem (see Section 1.4.4) to obtain the dual formulation
\begin{equation*}\label{strongSAD}
\sup_{ w \in \Gamma_s } \: { \inf_{ \lambda \geq 0 } \: [ \: \lambda \delta + \frac{1}{N} \sum_{i=1}^N \Psi_{\lambda,w} (s_i) \: ] } \leq \alpha^{bc} \tag{SSAD}
\end{equation*}
where $\Psi_{\lambda,w}$ is defined, in terms of cost function $c$, as
$\Psi_{\lambda,w} = \sup_{\tilde{s} \in \mathbb{R}^n} [ \, \mathbbm{1}_{\{ w \cdot \tilde{s} \, \geq \, 0 \}} - \lambda c( \tilde{s}, s_i) \, ]$.

\subsubsection{Inner Optimization Problem}
The goal here is the same as for the robust no-arbitrage conditions in Section 2.1.1, namely to evaluate $\Psi_{\lambda,w}$ in closed form. As such the solution is also the same, therefore one can invoke Proposition 2.1 to compute $\frac{1}{N} \sum_{i=1}^N \Psi_{\lambda,w}(s_i)$.

\subsubsection{Middle Optimization Problem}
As before, in Section 2.1.2, the objective is to evaluate $\inf_{ \lambda \geq 0} H(\lambda) := [ \: \lambda \delta + K_0(w) + K_1(\lambda,w) \: ]$. As such the solution is also the same, therefore one can invoke Propositions 2.2, 2.3 and Algorithm 1 to compute $\lambda^*$ and $H(\lambda^*)$.

\subsubsection{Outer Optimization Problem}
As before, in Section 2.1.3, the objective is to evaluate $v_s(\delta) := \sup_{ w \in \Gamma_{s} } \: \{ \lambda^*(w,\delta) \delta + K_0(w) + K_1(\lambda^*(w,\delta),w) \}$. As such the solution is also the same, therefore one can invoke Theorem 2.1 and Corollary 2.1.1 to evaluate the above expression(s). The analog to Theorem 2.2 is given below.
\begin{theorem}
The critical radius $\delta^{bc}_{\alpha}$ can be expressed as $\inf \{\delta \geq 0 :v_s(\delta) \geq \alpha^{bc}\}$. Furthermore, $\delta^{bc}_{\alpha}$ can be explicitly computed via binary search. Let $\delta < \delta^{bc}_{\alpha}$. For $Q \in \mathcal{U}_{\delta}(Q_N)$, it follows that $Q$ is (strong) statistical arbitrage free, for level $\alpha > v_s(\delta^{bc}_{\alpha})$. For $Q \notin \mathcal{U}_{\delta^{bc}_{\alpha}}(Q_N)$, it follows that $Q$ may admit (strong) statistical arbitrage for level $\alpha > v_s(\delta^{bc}_{\alpha})$.
\end{theorem}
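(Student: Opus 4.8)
The plan is to mirror the proof of Theorem 2.2 almost verbatim, replacing the classical threshold $1$ by the statistical level $\alpha^{bc}$. The central object is the same function $v_s(\delta)$, defined through the outer problem of Section 3.1.3, which by the inner/middle analysis (Proposition 2.1, Proposition 2.2, Algorithm 1) together with the duality theorem of Section 1.4.4 equals the primal robust best-case quantity $\sup_{w \in \Gamma_s} \sup_{Q \in \mathcal{U}_{\delta}(Q_N)} \mathbb{E}^Q[\mathbbm{1}_{\{w \cdot S_1 \geq 0\}}]$ appearing in \ref{strongSAP} and dualized in \ref{strongSAD}. Since the best-case statistical no-arbitrage condition at level $\alpha^{bc}$ is precisely $v_s(\delta) \leq \alpha^{bc}$, the natural candidate for the critical radius is $\inf\{\delta \geq 0 : v_s(\delta) \geq \alpha^{bc}\}$, and the task reduces to verifying that this infimum has the claimed threshold meaning.

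First I would record the structural properties of $v_s$ already used for Theorem 2.2, all of which transfer unchanged since the dual objective is literally the same function: (i) $v_s$ is non-decreasing in $\delta$, because the Wasserstein balls are nested, $\mathcal{U}_{\delta_1}(Q_N) \subseteq \mathcal{U}_{\delta_2}(Q_N)$ for $\delta_1 \leq \delta_2$, so the inner supremum over a larger ball can only increase; and (ii) $\lim_{\delta \to \infty} v_s(\delta) = 1$, obtained by enlarging the ball enough to relocate essentially all mass into the winning halfspace $\{w \cdot S_1 \geq 0\}$ for a fixed admissible $w$. Because $\alpha^{bc} \in (0,1)$, property (ii) forces $v_s$ eventually to exceed $\alpha^{bc}$, so the set $\{\delta \geq 0 : v_s(\delta) \geq \alpha^{bc}\}$ is nonempty and, by (i), upward closed; hence $\delta^{bc}_{\alpha}$ is finite and well defined, and monotonicity is exactly what validates binary search. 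In the nondegenerate regime where $Q_N$ is itself best-case statistical arbitrage free at level $\alpha^{bc}$, i.e.\ $v_s(0) < \alpha^{bc}$, the critical radius is strictly positive; otherwise $\delta^{bc}_{\alpha} = 0$ and the subsequent claims hold vacuously.

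For the arbitrage-free direction I would fix $\delta < \delta^{bc}_{\alpha}$. By the definition of the infimum, $v_s(\delta) < \alpha^{bc}$, and since $v_s(\delta^{bc}_{\alpha}) \geq \alpha^{bc}$, every $Q \in \mathcal{U}_{\delta}(Q_N)$ satisfies $\sup_{w \in \Gamma_s} \mathbb{E}^Q[\mathbbm{1}_{\{w \cdot S_1 \geq 0\}}] \leq v_s(\delta) < \alpha^{bc} \leq v_s(\delta^{bc}_{\alpha})$; hence each such $Q$ is strong statistical arbitrage free at every level $\alpha > v_s(\delta^{bc}_{\alpha})$. Conversely, for $Q \notin \mathcal{U}_{\delta^{bc}_{\alpha}}(Q_N)$ one has $D_c(Q,Q_N) > \delta^{bc}_{\alpha}$, so $Q$ belongs to some ball of radius $\delta' > \delta^{bc}_{\alpha}$ on which $v_s(\delta') \geq \alpha^{bc}$; because the supremum over that larger ball is attained (or approached) by a best-case distribution of the type characterized in Section 1.4.5 and Section 2.2, there exist a portfolio $w \in \Gamma_s$ and a measure in the ball whose win probability meets or exceeds the level, which is why such $Q$ \emph{may} admit statistical arbitrage. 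The word \emph{may} is essential here: the conclusion is a statement about what is achievable within the ball, not a guarantee that every individual $Q$ outside the critical ball is itself arbitrage admitting.

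I expect the main obstacle to be bookkeeping of the level parameter rather than anything deep. Because $v_s$ is assembled from indicator functions, the term $K_0(w)$ is integer-valued and the dual value need not be continuous in $\delta$, so $v_s$ may jump; consequently $v_s(\delta^{bc}_{\alpha})$ at the critical radius can strictly exceed $\alpha^{bc}$, and the theorem must be phrased with the guard $\alpha > v_s(\delta^{bc}_{\alpha})$ rather than simply $\alpha \geq \alpha^{bc}$ in order to remain correct across such a discontinuity. I would therefore be most careful in treating the boundary case $\delta = \delta^{bc}_{\alpha}$ precisely, confirming that the chosen guard absorbs any jump and that the dichotomy between the two regimes is clean on either side of the critical radius.
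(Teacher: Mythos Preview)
Your proposal is correct and follows essentially the same approach as the paper: the paper's proof simply cites condition \ref{strongSAD} together with the definition of statistical arbitrage, records that $v_s(0)\leq 1$, $\lim_{\delta\to\infty} v_s(\delta)=1$, and that $v_s$ is non-decreasing, and concludes that binary search applies. Your write-up is considerably more detailed (spelling out the nested-ball monotonicity argument, the two directions of the threshold statement, and the boundary discontinuity issue), but the underlying logic is identical.
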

\begin{proof}
This characterization of the critical radius $\delta^{bc}_{\alpha}$ follows from the condition \ref{strongSAD} as well as the definition of (strong) statistical arbitrage. The asymptotic properties of $v_{s}$ are such that $v_{s}(0) \leq 1$ and $\lim_{\delta \to \infty} v_{s}(\delta) = 1$. Furthermore, since $v_{s}(\delta)$ is a non-decreasing function of $\delta$, it follows that $\delta^{bc}_{\alpha}$ can be computed via binary search.
\end{proof}
\noindent One can view critical radius $\delta^{bc}_{\alpha}$ as a relative measure of the \textit{degree} of (strong) statistical arbitrage in reference measure $Q_N$. Those $Q_N$ which are ``close" to admitting statistical arbitrage of level $\alpha^{bc}$ will have a relatively smaller value of $\delta^{bc}_{\alpha}$. 

\subsection{Best Case Distribution for SA Problem}
The characterization of best case distributions for NA problems carries over into the SA context. In particular, one is interested in best case distributions $Q^{\alpha}_w \in BC(w,\delta^{\alpha})$ such that $\mathbb{E}^{Q^{\alpha}_w} [\, \mathbbm{1}_{\{ w \cdot S_1 \geq 0 \}} \,] = \sup_{ Q \in \mathcal{U}_{\delta^\alpha}(Q_N) } \mathbb{E}^Q[\, \mathbbm{1}_{\{ w \cdot S_1 \geq 0 \}} \,]$.
As before, by selecting portfolio weights $w^\alpha$ that satisfy the outer $\sup$ condition
\[
\sup_{ w \in \Gamma_{s} } \: { \sup_{ Q \in \mathcal{U}_{\delta^\alpha}(Q_N) } \mathbb{E}^Q [ \, \mathbbm{1}_{\{ w \cdot S_1 \geq 0 \}} \, ] } \geq \alpha^{bc},
\]
one \textit{can} recover $Q^{\alpha}_{w^{\alpha}}$ that admits statistical arbitrage of level $\alpha^{bc}$. See Section 6.2 for a concrete example.

\subsection{Robust SA Worst Case Conditions}
The robust (strong) statistical arbitrage worst case conditions (of level $\alpha^{wc} \in (0,1)$) can be expressed as
\begin{equation*}\label{strongSAPwc}
\sup_{ w \in \Gamma_s } \: { \inf_{ Q \in \mathcal{U}_{\delta}(Q_N) } \mathbb{E}^Q [ \, \mathbbm{1}_{\{ w \cdot S_1 \geq 0 \}} \, ] } \geq \alpha^{wc}, \tag{SSAP\textsuperscript{wc}}
\end{equation*}
where $\Gamma_s$ is defined in \ref{strongW}.
Relaxing the objective function from $\mathbbm{1}_{\{ w \cdot S_1 \geq 0 \}}$ to $\mathbbm{1}_{\{ w \cdot S_1 > 0 \}}$ and using the relations $\mathbbm{1}_{\{ w \cdot S_1 > 0 \}} = 1 - \mathbbm{1}_{\{ w \cdot S_1 \leq 0 \}}$ and $\inf(S) = -\sup(-S)$ for bounded set $S$, we have the equivalent condition:
\begin{equation*}\label{strongSAP2wc}
\sup_{ w \in \Gamma_s } \:  -\left\{ \sup_{ Q \in \mathcal{U}_{\delta}(Q_N) } \mathbb{E}^Q [ \, \mathbbm{1}_{\{ w \cdot S_1 \leq 0 \}} - 1 \, ] \right\} \geq \alpha^{wc}. \tag{SSAP2\textsuperscript{wc}}
\end{equation*}
As before, the indicator function $\mathbbm{1}_{\{ w \cdot S_1 \leq 0 \}}$ on closed set ${\{ w \cdot S_1 \leq 0 \}}$ is upper semicontinuous  hence we can apply the duality theorem (see Section 1.4.4) to obtain the dual formulation
\begin{equation*}\label{strongSADwc}
\sup_{ w \in \Gamma_s } \: - \left\{ \inf_{ \lambda \geq 0 } \: [ \: \lambda \delta + \frac{1}{N} \sum_{i=1}^N \Psi^{wc}_{\lambda,w} (s_i) \: ] \right\} \geq \alpha^{wc} \tag{SSAD\textsuperscript{wc}}
\end{equation*}
where $\Psi^{wc}_{\lambda,w}$ is defined, in terms of cost function $c$, as
$\Psi^{wc}_{\lambda,w} = \sup_{\tilde{s} \in \mathbb{R}^n} [ \, \mathbbm{1}_{\{ w \cdot \tilde{s} \, \leq \, 0 \}} - \lambda c( \tilde{s}, s_i) - 1\, ]$.

\subsubsection{Inner Optimization Problem}
The goal here is the same as for the robust no-arbitrage conditions in Section 2.1.1, namely to evaluate $\Psi^{wc}_{\lambda,w}$ in closed form. There are two cases to consider.
\setcounter{case}{0}
\begin{case}
\[ \mathbbm{1}_{\{ w \cdot s_i \leq 0 \}} = 1 \implies \Psi^{wc}_{ \lambda,w } (s_i) = 1 - \, \lambda \cdot 0 \, - 1 = 0 \quad \text{which is optimal.} \]
\end{case}

\begin{case}
\[ \mathbbm{1}_{\{ w \cdot s_i \leq 0 \}} = 0 \implies \Psi^{wc}_{ \lambda,w } (s_i) = [ 1 - \lambda c( s_i^*, s_i) ]^+ - 1 \quad \text{where} \quad s_i^* = \argmin \| \tilde{s} - s_i\|_2 \quad \text{is optimal.} \]
\end{case}
By the Projection Theorem \citep{calafiore2014optimization},  $\| s_i^* - s_i\|_2 = \frac{|w^\top s_i|}{\|w\|_2} \implies \Psi^{wc}_{ \lambda,w } (s_i) = [ 1 - \lambda c_i ]^+ - 1 \text{\; for \;} \\ c_i = \frac{| w^\top s_i | }{ \| w \|_2 } \in \mathbb{R}^n_+$.

\begin{prop}
\[ \frac{1}{N} \sum_{i=1}^N \Psi^{wc}_{\lambda,w}(s_i) = K_0^{wc}(w) + K_1^{wc}(\lambda,w) = K_1^{wc}(\lambda,w) \]
where \: $K_0^{wc}(w) = \frac{1}{N} \sum_{i=1}^N \mathbbm{1}_{\{ w \cdot s_i \leq 0 \}} \cdot 0$ \: and \: $K_1^{wc}(\lambda,w) = \frac{1}{N} \sum_{i=1}^N \mathbbm{1}_{\{ w \cdot s_i > 0 \}} ( [ 1 - \lambda c_i ]^+ - 1 )$ \: for \: $c_i = \frac{| w^\top s_i |}{ \| w \|_2 } \in \mathbb{R}^n_+$.
\end{prop}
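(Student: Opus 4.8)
The plan is to mirror the proof of Proposition 2.1 exactly, using the two cases for $\Psi^{wc}_{\lambda,w}(s_i)$ that were just established in this subsubsection. The essential observation is that the indicator functions $\mathbbm{1}_{\{w \cdot s_i \leq 0\}}$ and $\mathbbm{1}_{\{w \cdot s_i > 0\}}$ form a partition of the index set $\{1,\dots,N\}$, so the sum separates cleanly into two disjoint groups.

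First I would partition the summation index set according to the sign of $w \cdot s_i$. For those indices $i$ with $w \cdot s_i \leq 0$ (Case 1), the closed-form evaluation gives $\Psi^{wc}_{\lambda,w}(s_i) = 0$; collecting these terms yields precisely $K_0^{wc}(w) = \frac{1}{N}\sum_{i=1}^N \mathbbm{1}_{\{w \cdot s_i \leq 0\}} \cdot 0$. For the complementary indices with $w \cdot s_i > 0$ (Case 2), the Projection Theorem evaluation gives $\Psi^{wc}_{\lambda,w}(s_i) = [1 - \lambda c_i]^+ - 1$ with $c_i = |w^\top s_i|/\|w\|_2$; collecting these yields $K_1^{wc}(\lambda,w)$. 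Adding the two groups and dividing by $N$ recovers the claimed decomposition $\frac{1}{N}\sum_{i=1}^N \Psi^{wc}_{\lambda,w}(s_i) = K_0^{wc}(w) + K_1^{wc}(\lambda,w)$.

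The final equality $K_0^{wc}(w) + K_1^{wc}(\lambda,w) = K_1^{wc}(\lambda,w)$ is then immediate, since every summand defining $K_0^{wc}(w)$ is multiplied by zero, forcing $K_0^{wc}(w) = 0$. This is the one structural departure from Proposition 2.1, where the always-satisfied case contributed $1$ rather than $0$: here the extra constant $-1$ in the objective (inherited from the identity $\mathbbm{1}_{\{w \cdot S_1 > 0\}} = 1 - \mathbbm{1}_{\{w \cdot S_1 \leq 0\}}$ used to pass from \ref{strongSAPwc} to \ref{strongSAP2wc}) cancels the constant term in Case 1, so the ``free'' group vanishes entirely.

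There is essentially no obstacle to this argument; it is a routine bookkeeping of the two cases. The only point requiring a moment of care is the boundary convention, namely that the relevant closed set is now $\{w \cdot s_i \leq 0\}$ rather than $\{w \cdot s_i \geq 0\}$, so that ties ($w \cdot s_i = 0$) fall into Case 1 and contribute $0$. The two cases as stated already encode this convention through the indicators, so no further verification is needed, and the proof reduces to the single sentence that the decomposition follows by a straightforward application of the two cases above.
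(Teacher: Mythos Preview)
Your proposal is correct and takes exactly the same approach as the paper, which simply states that the result follows by a straightforward application of the two cases above. You have merely spelled out the bookkeeping in more detail than the paper does.
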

\begin{proof}
This follows by a straightforward application of the two cases above.
\end{proof}

\subsubsection{Middle Optimization Problem}
As before, in Section 2.1.2, the objective is to evaluate $\inf_{ \lambda \geq 0} H^{wc}(\lambda) := [ \: \lambda \delta + K^{wc}_1(\lambda,w) \: ]$. As such the solution is also the same, with one \textit{exception}: replace $\mathbbm{1}_{\{ w \cdot s_i < 0 \}}$ with $\mathbbm{1}_{\{ w \cdot s_i > 0 \}}$ in those results. Therefore one can apply Propositions 2.2, 2.3 and Algorithm 1 (with the above replacement of indicator functions) to compute $\lambda^*$ and $H^{wc}(\lambda^*)$.

\subsubsection{Outer Optimization Problem}
As before, in Section 2.1.3, the objective is to evaluate $v^{wc}_s(\delta) := \sup_{ w \in \Gamma_{s} } \: -\{ \lambda^*(w,\delta) \delta + K^{wc}_1(\lambda^*(w,\delta),w) \}$. As such the solution is similar, with the following adjustments: replace $F_k(w)$ with $-F_k^{wc}(w)$ where
\[
-F_k^{wc}(w) := \lambda_k \delta + \frac{1}{N} \bigg[ \sum_{i=1}^N (z^+_i - 1) \mathbbm{1}_{\{ w \cdot s_i > 0 \}} \bigg]
\] and place a \textit{minus} sign in front of the $\min$ term in the \textit{maximin} expression for $v_{w(s)}(\delta)$.
Therefore one can apply Theorem 2.1 and Corollary 2.1.1 (with the above adjustments) to evaluate $v^{wc}_s(\delta)$. 
The revised formulation is shown below.
\begin{theorem}
$v^{wc}_s(\delta)$ is approximated by the (global) solution to nonlinear program (NLP) N\_SSA (listed below).
\end{theorem}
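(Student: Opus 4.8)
The plan is to replicate the derivation of Theorem 2.1 essentially verbatim, feeding in the worst-case ingredients assembled in Sections 3.3.1--3.3.2 in place of their no-arbitrage counterparts. First I would invoke Proposition 3.1 to collapse the inner sum, writing $\frac{1}{N}\sum_{i=1}^N \Psi^{wc}_{\lambda,w}(s_i) = K^{wc}_1(\lambda,w)$ (the term $K^{wc}_0(w)$ vanishes identically). This reduces the bracketed dual objective to $H^{wc}(\lambda) = \lambda\delta + K^{wc}_1(\lambda,w)$, which is exactly the quantity minimized in the middle problem of Section 3.3.2.

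Next I would transfer Propositions 2.2 and 2.3 under the indicator replacement $\mathbbm{1}_{\{w\cdot s_i < 0\}} \mapsto \mathbbm{1}_{\{w\cdot s_i > 0\}}$ already recorded in Section 3.3.2, concluding that the minimizer satisfies $\lambda^*(w,\delta) \in \{1/c_k : k \in \{1,\dots,N\}\} \cup \{\lambda_0 := 0\}$. Because this candidate set is finite, the infimum over $\lambda \geq 0$ becomes a minimum over the index $k \in \{0,1,\dots,N\}$, which is precisely what permits a single-level NLP rather than a bilevel formulation.

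The third step substitutes this discretization into $v^{wc}_s(\delta) = \sup_{w \in \Gamma_s} -\{\lambda^*(w,\delta)\delta + K^{wc}_1(\lambda^*(w,\delta),w)\}$. Setting $z_i = [1 - \lambda_k c_i]$, each summand of $K^{wc}_1$ reads $(z_i^+ - 1)\mathbbm{1}_{\{w\cdot s_i > 0\}}$, so the bracket equals $-F^{wc}_k(w)$ with $-F^{wc}_k(w) = \lambda_k\delta + \frac{1}{N}\sum_{i=1}^N (z_i^+ - 1)\mathbbm{1}_{\{w\cdot s_i > 0\}}$. The overall minus sign inherited from the identity $\inf(S) = -\sup(-S)$ used to pass to \ref{strongSAP2wc} then converts $-\min_k(-F^{wc}_k(w))$ into $\max_k F^{wc}_k(w)$, so the dual becomes a \emph{maximax} problem; the remaining constraints $c_i = |w^\top s_i|/\|w\|_2$, $\lambda_k = 1/c_k$, $\lambda_0 = 0$, and $w \in \Gamma_s$ carry over unchanged from Corollary 2.1.1 (N\_SNA). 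Assembling these pieces yields N\_SSA.

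The only genuinely new feature relative to Theorem 2.1 is this sign inheritance, so the main task is a sign-and-index bookkeeping verification rather than a deep obstacle. Concretely, I must confirm that the constant shift $z_i^+ \mapsto z_i^+ - 1$ and the restriction to the indices $\{i : w\cdot s_i > 0\}$ leave $H^{wc}(\lambda)$ convex and piecewise linear with breakpoints still located at $\{1/c_i\}$; only then do Propositions 2.2 and 2.3 legitimately apply and pin $\lambda^*$ to the finite candidate set. As in Theorem 2.1, the word ``approximated'' reflects that N\_SSA is solved only to \emph{local} optimality by \emph{fminimax}: the assertion is that the \emph{global} optimum of N\_SSA equals $v^{wc}_s(\delta)$, not that the solver attains it.
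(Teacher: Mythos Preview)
Your proposal is correct and follows essentially the same approach as the paper: the paper's own proof is the single sentence ``The NLP formulation follows from the definition of $v^{wc}_s$ and the fact that $\lambda^* \in \{ \frac{1}{c_k }: k \in \{1,\dots,N\} \} \cup \{ \lambda_0 := 0 \}$,'' and your write-up is a careful unpacking of exactly those two ingredients together with the sign bookkeeping already spelled out in Section 3.3.3. The only minor quibble is that the maximax program N\_SSA is not handled by \emph{fminimax} in the computations (the paper uses \emph{fmincon} with a grid search over $\lambda$ in Section 6.2), but this concerns implementation rather than the argument itself.
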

\noindent The constraints on variables below, with index $i$, apply for $i \in \{1,\dots,N\}$, although this is suppressed. 
\begin{maxi}|l|<b>
{\substack{w \in \mathbb{R}^n} }
{\max_{\lambda_k \: : \: k \in \{0,1,...,N\}} \quad F^{wc}_k(w) = -\lambda_k \delta + \frac{1}{N} \bigg[ \sum_{i=1}^N (1 - z^+_i) \mathbbm{1}_{\{ w \cdot s_i > 0 \}} \bigg] }{\label{N_SSA}}
{v^{wc}_s(\delta)=}
\addConstraint{c_i}{= \frac{| w^\top s_i |}{\| w \|_2}}
\addConstraint{\lambda_k}{=\frac{1}{c_k} \quad \forall k \in \{1,\dots,N\}}
\addConstraint{\lambda_0}{=0}
\addConstraint{|w_i|}{\leq M }
\addConstraint{w \cdot S_0}{\leq -\epsilon}
\addConstraint{z_i}{= [1-\lambda_k c_i]}
\end{maxi}
\begin{proof}
The NLP formulation follows from the definition of $v^{wc}_s$ and the fact that $\lambda^* \in \{ \frac{1}{c_k }: k \in \{1,\dots,N\} \} \cup \{ \lambda_0 := 0 \}$.
\end{proof}
\noindent The analog to Theorem 2.2 is given below.
\begin{theorem}
The critical radius $\delta^{wc}_{\alpha}$ can be expressed as $\inf \{\delta \geq 0 :v^{wc}_s(\delta) \leq \alpha^{wc}\}$. Furthermore, $\delta^{wc}_{\alpha}$ can be explicitly computed via binary search. Let $\delta < \delta^{wc}_{\alpha}$. For $Q \in \mathcal{U}_{\delta^{wc}_{\alpha}}(Q_N)$, it follows that $Q$ admits (strong) statistical arbitrage, for level $\alpha \geq v^{wc}_s(\delta^{wc}_{\alpha})$. For $Q \notin \mathcal{U}_{\delta^{wc}_{\alpha}}(Q_N)$, it follows that $Q$ may not admit (strong) statistical arbitrage for level $\alpha < v^{wc}_s(\delta^{wc}_{\alpha})$. 
\end{theorem}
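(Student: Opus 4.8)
The plan is to follow the template established for Theorem 2.2 and Theorem 3.2, adapting every step to the worst-case value function $v^{wc}_s$. The characterization $\delta^{wc}_{\alpha} = \inf\{\delta \geq 0 : v^{wc}_s(\delta) \leq \alpha^{wc}\}$ is read directly off the robust worst-case condition \ref{strongSADwc}: a ball $\mathcal{U}_{\delta}(Q_N)$ robustly guarantees a probability of profit of at least $\alpha^{wc}$ precisely when $v^{wc}_s(\delta) \geq \alpha^{wc}$, so the transition to failure of this guarantee occurs at the first $\delta$ where $v^{wc}_s$ falls to $\alpha^{wc}$. Everything then reduces to three analytic properties of the scalar map $\delta \mapsto v^{wc}_s(\delta)$.

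First I would record the boundedness $0 \leq v^{wc}_s(\delta) \leq 1$, immediate since the objective is an expectation of an indicator. Next, and this is the one genuine point of departure from the best-case arguments, I would establish that $v^{wc}_s$ is \emph{non-increasing} in $\delta$, the reverse of the monotonicity used for $v_s$ in Theorem 3.2. This sign flip is forced by the inner \emph{infimum}: for $\delta_1 \leq \delta_2$ the nesting $\mathcal{U}_{\delta_1}(Q_N) \subseteq \mathcal{U}_{\delta_2}(Q_N)$ makes the inner $\inf_{Q}\mathbb{E}^Q[\,\mathbbm{1}_{\{w\cdot S_1 \geq 0\}}\,]$ smaller for every fixed $w$, and taking $\sup_{w \in \Gamma_s}$ preserves the inequality. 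Equivalently, in the dual \ref{strongSADwc} each fixed $\lambda \geq 0$ contributes the term $\lambda\delta$, so $\inf_{\lambda \geq 0}[\lambda\delta + \tfrac1N\sum_i \Psi^{wc}_{\lambda,w}(s_i)]$ is non-decreasing in $\delta$ and its negative is non-increasing.

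Third I would pin down the limit $\lim_{\delta\to\infty} v^{wc}_s(\delta) = 0$. Here a clean, \emph{uniform} bound is available: for any $w$, moving each profitable atom $s_i$ (those with $w\cdot s_i > 0$) to its projection on the hyperplane $\{\,w\cdot\tilde s = 0\,\}$ costs exactly $c_i = |w^\top s_i|/\|w\|_2 \leq \|s_i\|_2$ by Cauchy--Schwarz, so the total transport cost needed to neutralize all profit is at most $\bar\delta := \tfrac1N\sum_{i=1}^N \|s_i\|_2$, a bound independent of $w$. Hence for every $\delta \geq \bar\delta$ the adversary drives the inner infimum to $0$ for each $w$, giving $v^{wc}_s(\delta) = 0$; in particular the limit is $0$ and is in fact attained at finite radius. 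Combining boundedness, monotonicity, and this limit, the defining set $\{\delta : v^{wc}_s(\delta)\leq\alpha^{wc}\}$ is a nonempty up-set, its infimum $\delta^{wc}_{\alpha}$ is well defined, and monotonicity justifies locating it by binary search. The inside/outside-ball interpretation then follows by fixing the robust maximizer $w^{\ast}$ attaining (to within tolerance) $v^{wc}_s(\delta^{wc}_{\alpha})$: for any $Q \in \mathcal{U}_{\delta^{wc}_{\alpha}}(Q_N)$ one has $\mathbb{E}^Q[\,\mathbbm{1}_{\{w^{\ast}\cdot S_1 \geq 0\}}\,] \geq \inf_{Q'\in\mathcal{U}_{\delta^{wc}_{\alpha}}}\mathbb{E}^{Q'}[\,\mathbbm{1}_{\{w^{\ast}\cdot S_1 \geq 0\}}\,] = v^{wc}_s(\delta^{wc}_{\alpha})$, whereas outside the critical ball no such robust portfolio is guaranteed.

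The main obstacle I anticipate is not the monotonicity itself but the behaviour of $v^{wc}_s$ exactly at the threshold. Because $v^{wc}_s$ can be discontinuous (it is a supremum over $w$ of worst-case values that are piecewise linear in $\delta$, and the binary indicator can jump), the crossing value $v^{wc}_s(\delta^{wc}_{\alpha})$ need not equal $\alpha^{wc}$, and one must argue carefully whether the infimum in the definition is attained. This is exactly what determines the strict-versus-weak inequalities in the statement, namely why $\delta < \delta^{wc}_{\alpha}$ yields a guarantee at the sharp level $v^{wc}_s(\delta^{wc}_{\alpha})$ while $Q \notin \mathcal{U}_{\delta^{wc}_{\alpha}}(Q_N)$ only permits the weaker \emph{may not admit} conclusion. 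I would handle this by appealing to the non-increasing, one-sided-limit structure of $v^{wc}_s$ rather than assuming continuity, mirroring the treatment of kinks in Proposition 2.2.
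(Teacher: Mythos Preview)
Your proposal is correct and follows essentially the same approach as the paper: the paper's proof simply asserts that the characterization follows from condition \ref{strongSADwc} and the definition of statistical arbitrage, notes that $v^{wc}_s(0) > 0$, $\lim_{\delta\to\infty} v^{wc}_s(\delta) = 0$, and $v^{wc}_s$ is non-increasing, and concludes that binary search applies. Your treatment is considerably more detailed---you actually justify the monotonicity via the nesting of Wasserstein balls, give a uniform Cauchy--Schwarz bound for the limit, and discuss threshold discontinuities---but these fill in steps the paper leaves as bare assertions rather than constituting a different route.
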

\begin{proof}
This characterization of the critical radius $\delta^{wc}_{\alpha}$ follows from the condition (\ref{strongSADwc}) as well as the definition of (strong) statistical arbitrage. The asymptotic properties of $v^{wc}_{s}$ are such that $v^{wc}_{s}(0) > 0$ and $\lim_{\delta \to \infty} v^{wc}_{s}(\delta) = 0$. Furthermore, since $v^{wc}_{s}(\delta)$ is a non-increasing function of $\delta$, it follows that $\delta^{wc}_{\alpha}$ can be computed via binary search.
\end{proof}
\noindent One can view critical radius $\delta^{wc}_{\alpha}$ as a relative measure of the \textit{degree} of (strong) statistical arbitrage in reference measure $Q_N$. Those $Q_N$ which are ``close" to not admitting statistical arbitrage of level $\alpha^{wc}$ will have a relatively smaller value of $\delta^{wc}_{\alpha}$. 

\subsection{Worst Case Distribution for SA Problem}
The characterization of worst case distributions for NA problems carries over into the SA context. In particular, one is interested in worst case distributions $Q^{\alpha}_w \in WC(w,\delta^{\alpha})$ such that $\mathbb{E}^{Q^{\alpha}} [\, \mathbbm{1}_{\{ w \cdot S_1 \geq 0 \}} \,] = \inf_{ Q \in \mathcal{U}_{\delta^\alpha}(Q_N) } \mathbb{E}^Q[\, \mathbbm{1}_{\{ w \cdot S_1 \geq 0 \}} \,]$.
By selecting portfolio weights $w$ with their associated worst case distributions, it follows that
\[
\sup_{ w \in \Gamma_{s} } \: { \mathbb{E}^{Q^{\alpha}_w} [ \, \mathbbm{1}_{\{ w \cdot S_1 \geq 0 \}} \, ] } \leq \alpha^{wc}.
\]
Applying the greedy algorithm to $\mathbbm{1}_{\{ w \cdot S_1 < 0 \}} = 1 -\mathbbm{1}_{\{ w \cdot S_1 \geq 0 \}}$, one \textit{can} recover $Q^{\alpha}_{w}$ that is the most punitive for $w$ and admits statistical arbitrage of level at most $\alpha^{wc}$ for a given $w \in \Gamma_s$. See Section 6.2 for a concrete example.

\subsection{Portfolio Restrictions, SA Under No Short Sales}
The portfolio restrictions for NA problems apply in the SA context as well. We refer the reader to Section 2.3 and do not duplicate the material here.
The Farkas Lemma characterization of classical weak (strong) no arbitrage via the existence (and uniqueness for complete markets) of risk neutral measures does not yield any \textit{new} relationships in the context of statistical arbitrage under no short sales. As such, we do not establish any \textit{new} results in this subsection. Note that the theorem given in Section 2.4 still holds for probability measures $Q^\alpha \: \text{for} \: \alpha \in (0,1)$; in words, it holds for market models that admit statistical arbitrage but \textit{not} classical arbitrage. 


\subsection{Nearest SA Problem}
As above, the Farkas Lemma characterization does not yield any \textit{new} relationships for the nearest no-arbitrage problem in the context of statistical arbitrage. However, the nuances of how one uses the existing results in Section 2.5 (vs. Section 2.4) are different. In particular, one can apply those results for probability measures $Q^\alpha \: \text{for} \: \alpha = 1$; in words, it holds for market models that \textit{admit} classical arbitrage. 

\subsection{Alternate Robust SA Conditions}
The concept of exchanging the order of the $\sup$ and $\inf$ operators for the robust NA conditions (see Section 2.6) can be extended to cover SA. As before, exchanging the order of the operators gives the robust SA best case conditions
\begin{equation*}\label{strongRSAPbc}
\sup_{ Q \in \mathcal{U}_{\delta}(Q_N) } \: \sup_{ w \in \Gamma_s } \: \mathbb{E}^Q [ \, \mathbbm{1}_{\{ w \cdot S_1 \geq 0 \}} \, ] \leq \alpha^{bc}. \tag{RSSAP\textsuperscript{bc}}
\end{equation*}
Similarly, an alternate formulation of the robust SA worst case conditions is
\begin{equation*}\label{strongRSAPwc}
\inf_{ Q \in \mathcal{U}_{\delta}(Q_N) } \: \sup_{ w \in \Gamma_s } \: \mathbb{E}^Q [ \, \mathbbm{1}_{\{ w \cdot S_1 \geq 0 \}} \, ] \geq \alpha^{wc}. \tag{RSSAP\textsuperscript{wc}}
\end{equation*}
The intuitive meaning of these formulations is that the market adversary \textit{first} chooses a punitive distribution $Q \in \mathcal{U}_{\delta}(Q_N)$ and \textit{then} the portfolio manager chooses an optimal $w \in \Gamma_s$. Although one can invoke the $\min$-$\max$ inequality to establish the relation 
\[
\inf_{ Q \in \mathcal{U}_{\delta}(Q_N) } \: \sup_{ w \in \Gamma_s } \: \mathbb{E}^Q [ \, \mathbbm{1}_{\{ w \cdot S_1 \geq 0 \}} \, ] \geq \:
\sup_{ w \in \Gamma_s } \: { \inf_{ Q \in \mathcal{U}_{\delta}(Q_N) } \mathbb{E}^Q [ \, \mathbbm{1}_{\{ w \cdot S_1 \geq 0 \}} \, ] },
\]
finding a method to compute the LHS of \ref{strongRSAPbc} or \ref{strongRSAPwc} is not really achievable (to our knowledge) since the inner problem is \textbf{NP} Hard (see Section 5 for a proof) and the outer problem is infinite dimensional.

\section{Applications}
Section 4 presents applications of the theory developed in Sections 2 and 3 to robust option pricing and robust portfolio selection. In the latter we consider two examples: the classical Markowitz problem and a more modern view of risk using \text{CVaR} (as opposed to variance) as the measure of risk.

\subsection{Robust Option Pricing}
This subsection is a refinement (simplification) of the result for robust pricing of European options given in \citet{Bartl17}. For clarity, we adopt the notation and problem setup of Example 2.14 (Robust Call) \citep{Bartl17}. The approach taken there is to add an additional constraint on the probability measure $\mu$ to reside within Wasserstein radius $\delta$ of the reference (arbitrage-free) measure $\mu_0$. For this example, let us assume $\mu_0$ is arbitrage-free, distance function $d_c$ is the second order Wasserstein distance with associated quadratic cost function $c(x,y) = (x-y)^2/2$, $\mathcal{M}_1(\mathbb{R})$ denotes the set of probability measures on $\mathbb{R}$, and the penalty function is $\phi := \infty \mathbbm{1}_{(\delta,\infty]}$ with associated convex conjugate $\phi^*(\lambda) = \lambda \delta$. The authors show that the robust call option with maturity $T$, strike $k$, on a single asset, satisfies the relation:
\begin{equation}\label{call:rob}
\text{CALL}^{\text{robust}}(k) = \sup_{\{ \mu \in \mathcal{M}_1(\mathbb{R}): \int_{\mathbb{R}} S d\mu = s \}} \text{CALL}(k) - \phi(d_c(\mu_0,\mu)) = \inf_{\beta \in \mathbb{R}}   \inf_{\lambda > 0} \left\{ \lambda \delta + \text{CALL}(k - (2\beta+1)/(2\lambda)) + \beta^2/(2\lambda) \right\}
\end{equation}
where $\beta$ denotes the Lagrange multiplier for the arbitrage-free probability measure constraint $\{ \mu \in \mathcal{M}_1(\mathbb{R}): \int_{\mathbb{R}} S d\mu = s \}$, and $\lambda$ denotes the Lagrange multiplier for the Wasserstein distance constraint $d_c(\mu_0,\mu) \leq \delta$. Here $\text{CALL}(\tilde{k})$ denotes the non-robust call option price for strike $\tilde{k}$. Now let us assume that we have calculated the critical radius $\delta^*_{w(s)}$ for this problem (assume the reference measure $\mu_0$ is empirical) and we have chosen $\delta^\alpha < \min(\delta^*_w,\delta^*_s)$. Here $\delta^\alpha$ denotes the radius of a Wasserstein ball of probability measures that allow statistical arbitrage (up to some level $\alpha < 1$) but not classical arbitrage. It follows from Theorem 2.2 that the arbitrage-free probability measure constraint is not needed, hence one can simply set $\beta := 0$ in the above formula \ref{call:rob} to reduce it to the simpler formula:
\begin{equation}\label{call:rob2}
\text{CALL}^{\text{robust}}(k) = \inf_{\lambda > 0} G(\lambda) := \big\{ \lambda \delta^\alpha + \text{CALL}(k - 1/(2\lambda)) \big\}.
\end{equation}
Note that in formula \ref{call:rob2} above, $G(\lambda)$ is convex in $\lambda$. Once again, following the approach in our earlier work \citep{Singh19xva1}, we can simplify further to arrive at the following result.
\begin{prop}
$
Let \: \lambda^{*} = \sup_{\lambda \geq 0} \{ \lambda : \delta^\alpha - \frac{1}{N} [ \sum_{i \in J^+_1(\lambda)} 1/(2\lambda^2) ] \leq 0 \} = \inf_{\lambda \geq 0} \{  \lambda :  \delta^\alpha - \frac{1}{N} [ \sum_{i \in J_1(\lambda)} 1/(2\lambda^2) ] \geq 0 \}, where\\
$
$J^+_1(\lambda) \: = \{ i \in \{1,\dots,N\} : [1/(2\lambda) + s_i - k] > 0 \}$, 
$J_1(\lambda) \: = \{ i \in \{1,\dots,N\} : [1/(2\lambda) + s_i - k] \geq 0 \}$. 
\end{prop}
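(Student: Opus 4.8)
The plan is to follow the same route as Proposition 2.2, specializing the convex-minimization argument to the call-option objective $G$. First I would make the empirical reference measure explicit: under $\mu_0 = Q_N$ the non-robust call price is $\text{CALL}(\tilde k) = \frac{1}{N}\sum_{i=1}^N (s_i - \tilde k)^+$, so that the substitution $\tilde k = k - 1/(2\lambda)$ yields
\[
G(\lambda) = \lambda\delta^\alpha + \frac{1}{N}\sum_{i=1}^N \Big[\tfrac{1}{2\lambda} + s_i - k\Big]^+ .
\]
Convexity of $G$ in $\lambda$ is already asserted in the text, so it suffices to locate $\lambda^*$ via the first-order condition $0 \in \partial G(\lambda^*)$ and then translate that membership into the stated $\sup$/$\inf$ characterization.

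The key computation is the subdifferential of this piecewise-smooth sum. Each summand $[\tfrac{1}{2\lambda}+s_i-k]^+$ is smooth away from its kink at $\lambda_i := 1/\bigl(2(k-s_i)\bigr)$, which exists only for out-of-the-money scenarios $s_i<k$; in-the-money scenarios $s_i\ge k$ keep a strictly positive argument for every $\lambda>0$. Where the argument is positive the derivative of the inner term $1/(2\lambda)$ contributes $-1/(2\lambda^2)$, and where it is negative the contribution is $0$. Since $1/(2\lambda)$ is decreasing in $\lambda$, an index sits on the positive branch just below $\lambda$ precisely when $\tfrac{1}{2\lambda}+s_i-k\ge 0$, and just above $\lambda$ precisely when $\tfrac{1}{2\lambda}+s_i-k>0$. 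Hence the left and right derivatives are
\[
G'_-(\lambda) = \delta^\alpha - \frac{1}{N}\sum_{i\in J_1(\lambda)} \frac{1}{2\lambda^2}, \qquad
G'_+(\lambda) = \delta^\alpha - \frac{1}{N}\sum_{i\in J^+_1(\lambda)} \frac{1}{2\lambda^2},
\]
which are exactly the two expressions appearing in the claim, with $J_1$ and $J^+_1$ as defined there; the inclusion $J^+_1\subseteq J_1$ gives $G'_+\ge G'_-$, consistent with convexity.

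Finally I would invoke the convex optimality condition $G'_-(\lambda^*)\le 0\le G'_+(\lambda^*)$ together with monotonicity. Both the shrinking of $J^+_1(\lambda)$ and the decay of $1/(2\lambda^2)$ make $G'_-$ and $G'_+$ nondecreasing in $\lambda$; they tend to $-\infty$ as $\lambda\to 0^+$ (where every index is active and $1/(2\lambda^2)\to\infty$) and to $\delta^\alpha>0$ as $\lambda\to\infty$, which also certifies that the minimizer is interior, $\lambda^*>0$, consistent with the $\lambda>0$ infimum in (\ref{call:rob2}). By monotonicity the set $\{\lambda : G'_+(\lambda)\le 0\}$ is an interval whose right endpoint is $\lambda^*$, so its supremum equals $\lambda^*$; likewise $\{\lambda : G'_-(\lambda)\ge 0\}$ is an interval whose left endpoint is $\lambda^*$, so its infimum equals $\lambda^*$. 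Substituting the two formulas for $G'_\pm$ yields precisely the two displayed characterizations, proving that their common value is $\lambda^*$. The one place demanding care — and the main obstacle — is the bookkeeping of which index set governs the left versus the right derivative; unlike Proposition 2.2, where the per-index slopes $c_i$ were constant, here the slope $1/(2\lambda^2)$ itself varies with $\lambda$, so one must verify that this extra $\lambda$-dependence reinforces rather than disturbs the monotonicity of $G'_\pm$ on which the $\sup$/$\inf$ equality rests.
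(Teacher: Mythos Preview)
Your proposal is correct and follows essentially the same route as the paper's proof: write the first-order subdifferential condition for the convex function $G$, identify the one-sided derivatives via the index sets $J_1$ and $J_1^+$, and use monotonicity in $\lambda$ to convert $0\in\partial G(\lambda^*)$ into the stated $\sup$/$\inf$ descriptions. You are in fact more careful than the paper about which index set governs which one-sided derivative and about the asymptotics that guarantee an interior minimizer; the paper's proof records only the optimality inequality and the monotonicity observation without this bookkeeping.
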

\begin{proofsketch}
This result follows from writing down the first order conditions for left and right derivatives for convex objective function $G(\lambda)$. Inspection of the left and right derivatives for $G(\lambda)$ reveals that they will cross zero (as $\lambda$ sweeps from 0 to $\infty$) and hence the $\sup$ and $\inf$ operators will apply over non-empty sets. For each index $i \in J^+_1 (J_1)$ we pick up another $1/(2\lambda^2)$ term in the left (right) derivative. Search on $\lambda$ (from the left or the right) until we find $\lambda^*$ such that $0 \in \partial G(\lambda^*)$.
\end{proofsketch}
\begin{proof}
The first order optimality condition says
\[ \delta^\alpha - \frac{1}{N} \sum_{i \in J^+_1(\lambda)} 1/(2\lambda^2) \leq 0 \leq \delta^\alpha - \frac{1}{N} \sum_{i \in J_1(\lambda)} 1/(2\lambda^2). \]
Note the LHS is an increasing function in $\lambda$. Hence one can write
\[ \lambda^* = \sup_{\lambda \geq 0} \{ \lambda : \delta^\alpha - \frac{1}{N} \sum_{i \in J^+_1(\lambda)} 1/(2\lambda^2) \leq 0 \}. \]
Similarly the RHS is also an increasing function in $\lambda$. Equivalently, one can write
\[ \lambda^* = \inf_{\lambda \geq 0} \{ \lambda : \delta^\alpha - \frac{1}{N} \sum_{i \in J_1(\lambda)} 1/(2\lambda^2) \geq 0 \}. \]
\end{proof}

\begin{corollaryP}
\[ \text{CALL}^{\text{robust}}(k) = G(\lambda^*) := \big[ \lambda^* \delta^\alpha + \text{CALL}(k - 1/(2\lambda^*)) \big] \]
where  $\lambda^*$ is given by Proposition 4.1 above.
\end{corollaryP}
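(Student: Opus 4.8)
The plan is to treat this as an immediate consequence of convex optimality, since equation \ref{call:rob2} already identifies $\text{CALL}^{\text{robust}}(k)$ with the infimum of $G$ over $\lambda > 0$ and Proposition 4.1 supplies the stationary point of $G$. First I would recall from \ref{call:rob2} that $\text{CALL}^{\text{robust}}(k) = \inf_{\lambda > 0} G(\lambda)$ with $G(\lambda) = \lambda \delta^\alpha + \text{CALL}(k - 1/(2\lambda))$, and that under the empirical reference measure $\text{CALL}(k - 1/(2\lambda)) = \frac{1}{N}\sum_{i=1}^N (s_i - k + 1/(2\lambda))^+$. Each summand is the composition of the convex map $\lambda \mapsto s_i - k + 1/(2\lambda)$ on $(0,\infty)$ with the convex nondecreasing map $t \mapsto t^+$, hence convex; adding the linear term $\lambda \delta^\alpha$ preserves convexity, confirming the claim (already noted after \ref{call:rob2}) that $G$ is convex on $(0,\infty)$.

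Next I would establish that the infimum is attained at an interior point. As $\lambda \to 0^+$ the quantities $s_i - k + 1/(2\lambda)$ all become positive and grow like $1/(2\lambda)$, so $G(\lambda) \to +\infty$; as $\lambda \to \infty$ the penalty $\lambda \delta^\alpha \to +\infty$ while $\text{CALL}(k - 1/(2\lambda)) \to \text{CALL}(k)$ stays bounded, so again $G(\lambda) \to +\infty$. Thus $G$ is coercive on $(0,\infty)$ and its infimum is a genuine minimum attained at some interior $\lambda^\dagger > 0$.

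The key step is then to identify this minimizer with the $\lambda^*$ of Proposition 4.1. For a convex $G$, a point $\lambda^\dagger \in (0,\infty)$ is a global minimizer if and only if $0 \in \partial G(\lambda^\dagger)$, equivalently $G'_-(\lambda^\dagger) \leq 0 \leq G'_+(\lambda^\dagger)$. Differentiating termwise, the only kinks occur where $s_i - k + 1/(2\lambda) = 0$; away from such points $G'(\lambda) = \delta^\alpha - \frac{1}{N}\sum 1/(2\lambda^2)$ summed over the active indices, which reproduces exactly the one-sided expressions using $J^+_1(\lambda)$ and $J_1(\lambda)$ that appear in Proposition 4.1. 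Since Proposition 4.1 characterizes $\lambda^*$ as precisely the value satisfying this two-sided subgradient condition (equivalently, via the coinciding $\sup$ and $\inf$ descriptions), we get $\lambda^* = \lambda^\dagger$ as the global minimizer, and therefore $\text{CALL}^{\text{robust}}(k) = \inf_{\lambda > 0} G(\lambda) = G(\lambda^*)$, which is the stated formula.

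The main obstacle, such as it is, lies not in the convex-analytic step but in the bookkeeping of the one-sided derivatives at the kinks: one must check that the right derivative $G'_+$ picks up exactly the strictly active set $J^+_1(\lambda)$ while the left derivative $G'_-$ picks up $J_1(\lambda)$, so that the subgradient condition aligns with the $\sup$/$\inf$ characterization of Proposition 4.1 rather than its mirror image. Once this identification is made, coercivity guarantees attainment and convexity upgrades the stationarity of $\lambda^*$ to global optimality, completing the proof.
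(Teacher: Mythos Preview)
Your proposal is correct and follows essentially the same logic as the paper's proof, which is a one-line ``direct substitution of $\lambda^*$ from Proposition 4.1 into formula \eqref{call:rob2}.'' You simply make explicit what the paper leaves implicit: the convexity of $G$, its coercivity on $(0,\infty)$ so that the infimum is attained, and the identification of the first-order (subgradient) condition from Proposition 4.1 with global optimality. None of this departs from the paper's approach; it just fills in the justification behind the substitution step.
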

\begin{proof}
This follows by direct substitution of $\lambda^*$ from Proposition 4.1 into formula \ref{call:rob2} above.
\end{proof}


\subsection{Robust Portfolio Selection}
\subsubsection{Robust Markowitz Portfolio Selection} 
This subsection is a refinement of the result(s) for robust Markowitz (mean variance) portfolio selection given in \citet{blanchetMV}. For clarity, we adopt the notation and problem setup of that paper. The convex primal problem is a distributionally robust Markowitz problem given by
\begin{equation}
\min_{\phi \in \mathcal{F}_{\delta,\bar{r}}(N)} \:\: \max_{P \in \mathcal{U}_\delta(P_N)} \{ \phi^\top \mbox{\text{Var}}_P(R) \phi \}
\end{equation}
where $\phi \in \mathbb{R}^d$ denotes the portfolio weight vector, $R \in \mathbb{R}^d$ denotes the random (gross) asset returns, $P_N$ denotes the empirical measure, $\mathcal{U}_\delta(P_N)$ denotes the uncertainty set for probability measures, with associated cost function $c(u,v) = \| v - u \|^2_q$ for $q \geq 1$, $Var_P(R)$ denotes the covariance matrix of returns under $P$, and $\mathcal{F}_{\delta,\bar{r}}(N) = \{ \phi : \phi^\top 1 = 1 \: ; \: \min_{P \in \mathcal{U}_\delta(P_N)} \mathbb{E}_P (\phi^\top R) \geq \bar{r} \}$ denotes the feasible region for portfolios. Using Lagrangian duality techniques (similar to this paper) the authors show that this primal problem is equivalent to the convex dual problem
\begin{equation}
\min_{\phi \in \mathcal{F}_{\delta,\bar{r}}(N)} \:\: \bigg( \sqrt{ \phi^\top \mbox{\text{Var}}_{P_N}(R) \phi } + \sqrt{\delta} \| \phi \|_p  \bigg)^2
\end{equation}
in terms of optimal value and solution(s), with $1/p + 1/q = 1$. Following the approach in the previous subsection, let us assume the reference measure $P_N$ is arbitrage-free and we have chosen $\delta^\alpha < \min(\delta^*_w,\delta^*_s)$. Again $\delta^\alpha$ denotes the radius of a Wasserstein ball of probability measures that allow statistical arbitrage (up to some level $\alpha < 1$) but not classical arbitrage. It follows from Theorem 2.2 that the arbitrage-free probability measure constraint is not needed hence the \textit{arbitrage-free} primal problem 
\begin{equation}
\min_{\phi \in \mathcal{F}_{\delta^\alpha,\bar{r}}(N)} \:\: \max_{P \in \mathcal{\tilde{U}}_{\delta^\alpha}(P_N)} \{ \phi^\top \mbox{\text{Var}}_P(R) \phi \}
\end{equation}
where $\mathcal{\tilde{U}}_{\delta^\alpha}(P_N) = \mathcal{U}_{\delta^\alpha}(P_N) \cap \{ P : \sup_{ \phi \in \{ \Gamma_w \cup \Gamma_s \} } \: \mathbb{E}^P [ \, \mathbbm{1}_{\{ \phi \cdot S_T \geq 0 \}} \, ] < 1 \}$ is equivalent to the primal and dual problems above. In this setting $R = R(S_0,S_T)$ is the random vector of asset returns calculated based on initial asset prices $S_0 \in \mathbb{R}^d$ and terminal asset prices $S_T \in \mathbb{R}^d$.
\subsubsection{Robust Mean Risk Portfolio Selection}
This subsection is a refinement of the result(s) for robust mean risk portfolio optimization given in \citet{Esfahani17}. For clarity, we adopt the notation and problem setup of that paper. Let $\xi \in \mathbb{R}^m$ denote a random vector of (gross) asset returns and $x \in \mathbb{X}$ denote a vector of portfolio percentage weights ranging over the probability simplex $\mathbb{X} = \{ x \in \mathbb{R}^m_{+} : \sum_{i=1}^m x_i = 1 \}$. Thus we consider a ``long only" portfolio. However, the reader is advised that  today's market includes securities such as exchange traded funds (ETFs) that behave like short positions hence the long portfolio setting is not as restrictive as it might seem at first glance. The portfolio return is given by $\langle x, \xi \rangle$. A single stage stochastic program which minimizes a weighted sum of the mean and \text{CVaR} of portfolio loss at confidence level $\bar{\alpha} \in (0,1]$, given the investor's risk aversion $\rho \in \mathbb{R}_{+}$ and distribution $\mathbb{P}$ is given by
\begin{equation}
J^* = \inf_{x \in \mathbb{X}} \mathbb{E}^{\mathbb{P}} [ - \langle x, \xi \rangle + \rho \: \text{CVaR}_{\bar{\alpha}}( - \langle x, \xi \rangle ) ].
\end{equation}
Substituting the formal definition of \text{CVaR} into the above, they show that 
\begin{align}
J^* &= \inf_{x \in \mathbb{X}} \mathbb{E}^{\mathbb{P}} [ - \langle x, \xi \rangle ] + \rho \inf_{\tau \in \mathbb{R}} \mathbb{E}^{\mathbb{P}} [ \tau + (1/\bar{\alpha}) \:\: \max ( - \langle x, \xi \rangle - \tau, 0) ] \\
	  &= \inf_{x \in \mathbb{X}, \tau \in \mathbb{R}} \mathbb{E}^{\mathbb{P}} [  \max_{k \leq K} a_k \langle x, \xi \rangle + b_k \tau ]
\end{align}
where $K = 2, \: a_1 = -1, \: a_2 = -1 - (\rho/\bar{\alpha}), \: b_1 = \rho, \: b_2 = \rho (1 - (1/\bar{\alpha}))$.

For Wasserstein ambiguity set $\mathbb{B}_{\epsilon}(\hat{\mathbb{P}}_N)$ of radius $\epsilon$ about reference measure $\hat{\mathbb{P}}_N$, the authors formulate the distributionally robust primal problem
\begin{equation}
\hat{J}_N(\epsilon) := \inf_{x \in \mathbb{X}} \:\: \sup_{\mathbb{Q} \in \mathbb{B}_{\epsilon}(\hat{\mathbb{P}}_N)} \mathbb{E}^{\mathbb{Q}} [ - \langle x, \xi \rangle + \rho \: \text{CVaR}_{\bar{\alpha}}( - \langle x, \xi \rangle ) ]
\end{equation}
Applying techniques of Lagrangian duality, Esfahani and Kuhn formulate the equivalent dual problem
\begin{equation}
\hat{J}_N(\epsilon) = 
	\begin{cases}
		\inf\limits_{x,\tau,\lambda,s_i,\gamma_{ik}} \:\: \lambda \epsilon + \frac{1}{N} \sum_{i=1}^N s_i \\
		\text{such that} \:\: x \in \mathbb{X}, \\
		\quad \quad \quad \quad b_k \tau + a_k \langle x, \hat{\xi_i} \rangle + \langle \gamma_{ik}, d - C \hat{\xi_i} \rangle \leq s_i, \\
		\quad \quad \quad \quad \| C^\top \gamma_{ik} - a_k x \|_{*} \leq \lambda, \\
		\quad \quad \quad \quad \gamma_{ik} \geq 0
	\end{cases}
\end{equation}
$\forall i \leq N, \forall k \leq K$. Following the approach in the previous subsection, let us assume the reference measure $\hat{\mathbb{P}}_N$ is arbitrage-free and we have chosen $\epsilon^\alpha < \min(\epsilon^*_w,\epsilon^*_s)$. As before $\epsilon^\alpha$ denotes the radius of a Wasserstein ball of probability measures that allow statistical arbitrage (up to some level $\alpha < 1$) but not classical arbitrage. It follows from Theorem 2.2 that the arbitrage-free probability measure constraint is not needed hence the \textit{arbitrage-free} primal problem 
\begin{equation}
 \inf_{x \in \mathbb{X}} \:\: \sup_{\mathbb{Q} \in \tilde{\mathbb{B}}_{\epsilon^\alpha}(\hat{\mathbb{P}}_N)} \mathbb{E}^{\mathbb{Q}} [ - \langle x, \xi \rangle + \rho \: \text{CVaR}_{\bar{\alpha}}( - \langle x, \xi \rangle ) ]
\end{equation}
where $\tilde{\mathbb{B}}_{\epsilon^\alpha}(\hat{\mathbb{P}}_N) = \mathbb{B}_{\epsilon^\alpha}(\hat{\mathbb{P}}_N) \cap \{ \mathbb{Q} : \sup_{\phi(x) \in \{ \Gamma_w \cup \Gamma_s \} } \: \mathbb{E}^{\mathbb{Q}} [ \, \mathbbm{1}_{\{ \phi(x) \cdot \tilde{S}_T \geq 0 \}} \, ] < 1 \}$ is equivalent to the primal and dual problems above. In this setting, $\xi = \xi(S_0,S_T)$ is the random vector of asset returns calculated based on initial asset prices $S_0 \in \mathbb{R}^m$ and terminal asset prices $S_T \in \mathbb{R}^m$. Also, $\tilde{S}_0 = (S_0,B_0)$ appends the initial cash bond (borrowing) $B_0$ used to purchase the portfolio (at zero or negative cost) and $\tilde{S}_T = (S_T,B_T)$ appends the bond repayment (principal plus interest) at the end of the investment period.  Finally, $\phi(x) \in \mathbb{R}^{m+1} \: \text{for} \:\, x \in \mathbb{X}$  denotes the portfolio weight vector corresponding to the portfolio purchase and cash loan. By construction, the first $m$ components of $\phi$ are non-negative whereas the last component has a negative sign.

\section{Complexity of NA Problem}
This section gives formal proofs for the complexity of the No-Arbitrage Problem.
We establish that the weak and strong no-arbitrage problems are $\mathbf{NP}$ Hard.
\noindent The approach taken here is to use reduction on the known $\mathbf{NP}$ complete closed (open) hemisphere decision problem \citep{johnson1978densest}.
The optimization problem, using the notation of this paper, is stated below \citep{avis2005graph}.
\begin{enumerate}
\item closed hemisphere:
\[ \text{Find} \: w \in \mathbb{R}^n \: \text{such that} \:\: \mathbf{card}( \{i : s_i \in S ; \: w \cdot s_i \geq 0 \} ) \: \text{is maximized.} \]
\item open hemisphere:
\[ \text{Find} \: w \in \mathbb{R}^n \: \text{such that} \:\: \mathbf{card}( \{i : s_i \in S ; \: w \cdot s_i > 0 \} ) \: \text{is maximized.} \]
To complete the problem statement, note that the set $S$ above denotes a finite subset of $\mathbb{Q}^n$ containing $N$ points.
It follows that the \textit{mixed} hemisphere problem (where $c_i \geq 0 \:\:\: \forall i$) is also $\mathbf{NP}$ complete.
\item mixed hemisphere:
\begin{equation*}\label{mixed1}
 \sup_{ w \in \mathbb{R}^n } \: \big[ \sum_{i=1}^N \mathbbm{1}_{\{ w \cdot s_i \geq 0 \}} + \sum_{i=1}^N c_i \mathbbm{1}_{\{ w \cdot s_i < 0 \}} \big]. \tag{M1}
\end{equation*}
\end{enumerate}
One can write a \textit{simplified} version of the weak and strong no-arbitrage optimization problems as follows (see Section 2.1.3). To construct these simplified versions, we have fixed $\lambda^*$ to a constant, relabeled $[1-\lambda c_i]^+$ as $c_i$, and omitted the initial cost constraint $w \cdot S_0 = \kappa$. Recall $\kappa$ is zero in the weak case, but strictly less than zero (for arbitrary $\kappa$) in the strong case.
\begin{equation*}\label{mixed2}
 \sup_{ w \in \mathbb{R}^n } \: F(w) := \big[ \sum_{i=1}^N \mathbbm{1}_{\{ w \cdot s_i \geq 0 \}} + \sum_{i=1}^N c_i \mathbbm{1}_{\{ w \cdot s_i < 0 \}} \big]. \tag{WD3,SD3}
\end{equation*}
However, there is some work to be done to incorporate the initial cost constraint \textit{back} to formulate the no-arbitrage problems. First, think of the unconstrained initial cost as the union of three possibilities: (i) $w \cdot S_0 < 0$, (ii) $w \cdot S_0 = 0$, (iii) $w \cdot S_0 > 0$.
Some thought suggests that the following proposition holds.
\begin{prop}
Asumming $\mathbf{P} \neq \mathbf{NP}$, there can be no polynomial time algorithm to solve the simplified no-arbitrage problem under initial cost constraint $w \cdot S_0 \leq \kappa$ \, for $\kappa \in \mathbb{R}$.
\end{prop}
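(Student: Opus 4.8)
The plan is to give a polynomial-time many-one reduction from the mixed hemisphere problem \ref{mixed1}, which the preceding discussion establishes is $\mathbf{NP}$-complete, to the simplified no-arbitrage problem subject to $w \cdot S_0 \leq \kappa$. Since the unconstrained objective $F$ in \ref{mixed2} is \emph{literally} the mixed-hemisphere objective, the entire difficulty is the initial-cost constraint, and the goal is to show that this constraint can be rendered inert.

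First I would record the key structural fact that $F$ is positively homogeneous of degree zero, i.e.\ $F(cw) = F(w)$ for every $c > 0$, since each indicator $\mathbbm{1}_{\{ w \cdot s_i \geq 0 \}}$ and $\mathbbm{1}_{\{ w \cdot s_i < 0 \}}$ depends only on the ray through $w$. Decomposing $\mathbb{R}^n$ by the sign of $w \cdot S_0$ as suggested above, this homogeneity shows that the supremum of $F$ over each open sign-region is scale-invariant. In particular, for $\kappa > 0$ the feasible set $\{ w : w \cdot S_0 \leq \kappa \}$ meets all three regions -- region (iii) being reached by rescaling any $w$ with $w \cdot S_0 > 0$ down to $w \cdot S_0 = \kappa$ -- so the constrained optimum already coincides with the unconstrained value $v^* := \sup_{w} F(w)$, and $\mathbf{NP}$-hardness follows at once for that range of $\kappa$.

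To obtain a single reduction valid for \emph{every} $\kappa \in \mathbb{R}$ (the troublesome case being $\kappa \leq 0$, where the feasible region is merely a halfspace and the rescaling trick into region (iii) is unavailable), I would augment the dimension. Given an instance of \ref{mixed1} with points $s_1, \dots, s_N \in \mathbb{Q}^n$ and weights $c_i \geq 0$, set $\hat{s}_i = (s_i, 0) \in \mathbb{Q}^{n+1}$, keep the same $c_i$, and take $S_0 = e_{n+1}$. Then for $\hat{w} = (w, w_{n+1})$ one has $\hat{w} \cdot \hat{s}_i = w \cdot s_i$, so $F(\hat{w})$ depends only on $w$, while the initial-cost constraint reads $\hat{w} \cdot S_0 = w_{n+1} \leq \kappa$ and thus constrains only the decoupled last coordinate. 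Choosing any $w_{n+1} \leq \kappa$ leaves the objective untouched, so
\[
\sup_{ \hat{w} \, : \, \hat{w} \cdot S_0 \leq \kappa } F(\hat{w}) = \sup_{ w \in \mathbb{R}^n } F(w) = v^*
\]
for \emph{all} $\kappa$. The map $(s_i, c_i) \mapsto (\hat{s}_i, c_i, S_0, \kappa)$ is computable in polynomial time, so a polynomial-time algorithm for the constrained problem would solve \ref{mixed1} in polynomial time, forcing $\mathbf{P} = \mathbf{NP}$ and contradicting the hypothesis.

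The main obstacle is conceptual rather than computational: arranging, uniformly in $\kappa$, that the initial-cost constraint neither cuts away the optimizing direction nor introduces a spurious better solution. The dimension-augmentation resolves this by making $w \cdot S_0$ depend on a coordinate that the objective ignores. One remaining detail I would verify is the degenerate direction $w = 0$, which gives $F(0) = N$; because \ref{mixed1} is posed over all of $\mathbb{R}^n$ (so $w = 0$ is admissible and contributes exactly $N$ on both sides), the two optima match and no special handling is needed, and if one preferred to exclude $w = 0$ the last coordinate can simply be fixed to a nonzero value $\leq \kappa$ to keep $\hat{w} \neq 0$ without changing $F(\hat{w})$.
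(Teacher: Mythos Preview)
Your argument is correct, but the route differs from the paper's. The paper proceeds by a symmetry/covering argument: assuming a polynomial-time oracle $A$ for $\sup_{w \cdot S_0 \leq \kappa} F(w)$, it calls $A$ a second time on the instance with $S_0$ and $\kappa$ negated, which solves $\sup_{w \cdot S_0 \geq \kappa} F(w)$; since the two halfspaces cover $\mathbb{R}^n$, the larger of the two values equals the unconstrained optimum $v^*$, solving \ref{mixed1} in polynomial time and yielding the contradiction. In other words, the paper gives a two-call Turing reduction and never needs to pick $S_0$ cleverly. You instead give a single-instance many-one reduction via dimension augmentation, decoupling the constraint from the objective by placing $S_0$ in a fresh coordinate. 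Your construction is a touch more elaborate but buys a formally stronger (Karp) reduction and works uniformly in $\kappa$ without a second oracle call; the paper's proof is shorter and exploits only the obvious sign symmetry of the constraint. Either approach suffices for the stated proposition.
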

\begin{proof}
Proceed by contradiction. Suppose there is a polynomial time algorithm $A$ that can solve the following problem:
\begin{equation*}\label{mixed2}
 \sup_{ w \in \mathbb{R}^n, w \cdot S_0 \leq \kappa } \: F(w). \tag{M2}
\end{equation*}
Exploiting symmetry, one can then also use algorithm $A$ to solve this problem:
\begin{equation*}\label{mixed3}
 \sup_{ w \in \mathbb{R}^n, w \cdot S_0 \geq \kappa } \: F(w). \tag{M3}
\end{equation*}
Returning the better answer now gives us a polynomial time algorithm to solve the mixed hemisphere problem which contradicts $\mathbf{P} \neq \mathbf{NP}$. Hence it must be that there is no polynomial time algorithm $A$ to solve either \ref{mixed2} or \ref{mixed3}.
\end{proof}

\begin{corollaryP}
Asumming $\mathbf{P} \neq \mathbf{NP}$, there can be no polynomial time algorithms to solve \textbf{both} the weak and strong no-arbitrage problems.
\end{corollaryP}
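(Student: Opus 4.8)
The plan is to derive this corollary directly from the preceding proposition by a decomposition argument, reasoning by contradiction. Suppose, toward a contradiction, that there exist polynomial time algorithms for \emph{both} the simplified weak and strong no-arbitrage problems, namely for $\sup_{w \in \Gamma_w} F(w)$ with $\Gamma_w = \{ w \neq 0 : w \cdot S_0 = 0 \}$ and for $\sup_{w \in \Gamma_s} F(w)$ with $\Gamma_s = \{ w : w \cdot S_0 < 0 \}$. I will show these two oracles can be combined into a polynomial time algorithm for problem (M2) specialized to $\kappa = 0$, that is, for $\sup_{w \cdot S_0 \le 0} F(w)$, contradicting the proposition at $\kappa = 0$.

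The key observation is that the objective $F$ depends only on the signs of the inner products $w \cdot s_i$ and never on $S_0$, and that the closed half-space feasible region of (M2) at $\kappa = 0$ partitions cleanly as
\[
\{ w \in \mathbb{R}^n : w \cdot S_0 \le 0 \} = \{ 0 \} \;\cup\; \Gamma_w \;\cup\; \Gamma_s ,
\]
since the hyperplane $\{ w \cdot S_0 = 0 \}$ equals $\Gamma_w \cup \{ 0 \}$ and the open half-space $\{ w \cdot S_0 < 0 \}$ equals $\Gamma_s$. Taking suprema over the pieces of this partition yields
\[
\sup_{w \cdot S_0 \le 0} F(w) = \max\!\big( F(0),\; \sup_{w \in \Gamma_w} F(w),\; \sup_{w \in \Gamma_s} F(w) \big),
\]
where the degenerate point contributes only the constant $F(0) = N$, since at $w = 0$ every $w \cdot s_i = 0 \ge 0$ fires the first group of indicators and none of the second. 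Under the contradiction hypothesis each of the two nontrivial suprema is computable in polynomial time, and $F(0) = N$ is immediate, so the right-hand maximum---and hence (M2) at $\kappa = 0$---is polynomial time computable, contradicting the proposition.

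The steps I would carry out, in order, are: (i) specialize the proposition to $\kappa = 0$ to isolate the target hardness claim; (ii) verify the set-theoretic partition above and confirm that $F$ is the \emph{same} objective across all three problems; (iii) record that $F$ is homogeneous of degree zero in $w$ and takes only finitely many values, so that the strong problem as posed with a fixed cost level $\kappa < 0$ has the same optimal value as $\sup_{w \in \Gamma_s} F(w)$, and each supremum is in fact an attained maximum over a cone; and (iv) assemble the two oracle calls together with the trivial evaluation $F(0) = N$ into a single polynomial time procedure. The main subtlety---and the one point that deserves genuine care---is the open/closed boundary bookkeeping: one must check that the strict inequality defining $\Gamma_s$ and the excluded origin in $\Gamma_w$ together cover the closed half-space exactly once, so that the maximum over the partition really equals the value of (M2). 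Once this accounting is settled the contradiction is immediate, and the corollary follows.
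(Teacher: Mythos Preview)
Your proof is correct and follows the same approach the paper has in mind: the paper's one-line proof (``This follows directly from the definitions of the weak and strong no-arbitrage conditions'') is exactly the decomposition $\{w \neq 0 : w \cdot S_0 \le 0\} = \Gamma_w \cup \Gamma_s$ that you spell out in detail, yielding a polynomial time solver for (M2) at $\kappa = 0$ and hence contradicting Proposition~5.1. Your extra bookkeeping around $F(0)=N$ is harmless but unnecessary, since the underlying hemisphere problems implicitly exclude $w=0$.
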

\begin{proof}
This follows directly from the definitions of the weak and strong no-arbitrage conditions (see Section 1.3.1).
\end{proof}

\begin{corollaryP}
Asumming $\mathbf{P} \neq \mathbf{NP}$, there can be no polynomial time algorithms to solve \textbf{either} the weak \textbf{or} strong no-arbitrage problems.
\end{corollaryP}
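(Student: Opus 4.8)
The plan is to strengthen Corollary 5.1.1. That corollary only rules out a \emph{single} algorithm solving \emph{both} problems at once (equivalently, it shows at least one of the two is hard); here I would instead establish that each of the weak and strong simplified no-arbitrage problems is \textbf{NP}-hard on its own. Because the two feasible regions $\Gamma_w = \{w : w \cdot S_0 = 0\}$ and $\Gamma_s = \{w : w \cdot S_0 < 0\}$ are genuinely different objects (a hyperplane through the origin versus an open half-space), I would treat them by two separate arguments rather than trying to bootstrap one from the other, since knowing one is hard gives no information about the other.

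For the strong problem I would exploit Proposition 5.1 together with the degree-zero positive homogeneity of $F$ (note $F(cw) = F(w)$ for every $c > 0$, since the objective depends on $w$ only through the signs of the $w \cdot s_i$). Fix any $\kappa < 0$. On one hand $\{w : w \cdot S_0 \le \kappa\} \subset \{w : w \cdot S_0 < 0\}$; on the other, given any $w$ with $w \cdot S_0 < 0$, rescaling by the positive factor $\kappa/(w \cdot S_0)$ lands $w$ on the hyperplane $\{w : w \cdot S_0 = \kappa\}$ while leaving $F$ unchanged. Hence $\sup_{w \cdot S_0 < 0} F(w) = \sup_{w \cdot S_0 \le \kappa} F(w)$, and the right-hand side is exactly the quantity Proposition 5.1 shows to be intractable (its statement permitting any $\kappa \in \mathbb{R}$). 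Thus the strong problem admits no polynomial-time algorithm unless $\mathbf{P} = \mathbf{NP}$.

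For the weak problem Proposition 5.1 does not apply directly, since its constraint is $w \cdot S_0 \le \kappa$ rather than $w \cdot S_0 = 0$, so I would give a fresh reduction from the mixed hemisphere problem \ref{mixed1}. Given a mixed hemisphere instance with points $t_1, \dots, t_N \in \mathbb{Q}^{m}$ and weights $c_i \ge 0$ (the hard instances may be taken with $c_i \in [0,1]$, e.g.\ the closed hemisphere case $c_i = 0$), build a weak instance in $\mathbb{R}^{m+1}$ by setting $S_0 = e_{m+1}$ and lifting each point to $s_i = (t_i, 0)$. Then $\{w : w \cdot S_0 = 0\} = \{(v,0) : v \in \mathbb{R}^m\}$, and for $w = (v,0)$ one has $w \cdot s_i = v \cdot t_i$, so $F(w)$ coincides with the mixed hemisphere objective evaluated at $v$; moreover $w \neq 0 \iff v \neq 0$, so the excluded origin matches exactly on both sides. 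Consequently solving the weak problem on the lifted instance returns the optimal value of the original mixed hemisphere instance, and the weak problem is \textbf{NP}-hard as well. Combining the two arguments gives the corollary.

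The main obstacle, and the part requiring the most care, is the weak reduction: one must verify that restricting $F$ to the hyperplane faithfully reproduces the mixed hemisphere objective, with the correct handling of the $w = 0$ / $v = 0$ degeneracy (the constraint $w \cdot S_0 = 0$ is what makes $w \neq 0 \iff v \neq 0$, so unlike the open half-space there is no spurious trivial optimum) and of the admissible range of the $c_i$, and that the lift is polynomial in the input size. The strong case, by contrast, is essentially immediate once homogeneity is invoked, so I would present it first and spend the bulk of the write-up on checking faithfulness of the weak embedding.
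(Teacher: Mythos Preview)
Your argument is correct, and for the strong case it coincides with the paper's: both use positive homogeneity of $F$ to collapse the open half-space constraint $w\cdot S_0<0$ onto an equality constraint $w\cdot S_0=\kappa$ with $\kappa<0$. The genuine difference is in how the weak case is handled. The paper does \emph{not} give a separate reduction for the weak problem; instead, after establishing the identity $\sup_{w\cdot S_0<0}F(w)=\sup_{w\cdot S_0=\kappa}F(w)$ it simply observes that the right-hand side is ``equivalent in form'' to the weak problem (equality constraint with $\kappa=0$), so that weak and strong are interreducible, and then falls back on Corollary~5.1.1. Your lifting reduction (embed a hemisphere instance in one extra dimension with $S_0=e_{m+1}$, $s_i=(t_i,0)$) is a different route: it reduces the weak problem directly from the mixed/closed hemisphere problem without passing through Corollary~5.1.1 at all. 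What you gain is a self-contained argument that does not rest on the somewhat informal ``equivalent in form'' step; what the paper gains is brevity, once one accepts that interreducibility. One small remark: your comment that ``knowing one is hard gives no information about the other'' is slightly at cross-purposes with the paper's logic, since the paper's point in displaying the equality-constrained form is precisely to argue that information \emph{does} flow between the two---but your direct reduction sidesteps having to justify that transfer carefully, which is arguably the cleaner outcome.
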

\begin{proof}
Recall that weight vectors $w$ satisfy the homogeneity property. Hence the optimal solution to the strong no-arbitrage problem is invariant to the actual choice of $\kappa$ up to the sign. In other words, we have the following relation:
\begin{equation*}\label{mixed4}
 \sup_{ w \in \mathbb{R}^n, \: w \cdot S_0 < 0 } \:  F(w) = \sup_{ w \in \mathbb{R}^n, \: w \cdot S_0 = \kappa < 0 (\kappa \: \text{arbitrary})} \: F(w). \tag{M4}
\end{equation*}
Furthermore, the RHS formulation above is equivalent in form to the weak no-arbitrage problem.
\end{proof}

\section{Computational Study}

This computational study uses the Matlab \textit{fminimax} and \textit{fmincon} solvers to work out a couple of concrete examples to find the critical radii at the cusp of (statistical) arbitrage assuming short sales are \textit{allowed}. Best (worst) case distributions and optimal portfolios are computed as well. Suitable values (for the problem instances below) for $M$ range from 100 to 10,000 and for $\epsilon$ from 0.001 to 0.0001. Other choices may be suitable. 
Recall that Matlab \textit{fminimax} solves to \textit{local} optimality using a sequential quadratic programming (SQP) method with modifications \citep{fletcher2010sequential}. Similarly, \textit{fmincon} solves to local optimality using gradient based techniques. Our algorithm incorporates a few \textit{additional} features to improve the robustness of the approach. These are listed next.
\begin{enumerate}
\item multi search: multiple search paths (that evolve candidate solutions) are used, similar to a genetic algorithm.
\item hot start: the optimal portfolio from the previous run $\delta_{prev}$ becomes the initial portfolio for the next run $\delta_{next}$. 
\item function smoothing: the indicator function can be relaxed using a sigmoid with appropriate scale factor.
\end{enumerate}

\noindent As mentioned in Section 2, developing an approach to solve for \textit{global} optimality would be a topic for further research. Meanwhile, for this section, the computed values for $v_{w(s)}$ and corresponding critical values for $\delta^*_{w(s)}$ represent \textit{local} optimality (upper bounds for globally optimal $\delta^*_{w(s)}$). This comment also applies for the statistical arbitrage calculations for $\delta^{bc}_\alpha$ and $\delta^{wc}_\alpha$.

\subsection{Binomial Tree Asset Pricing}
For the first example, consider the simple setting of a one-period binomial tree asset pricing model. There is a riskless bond priced at par at time zero that earns a deterministic risk free rate of return r at time 1. In addition there is a risky asset (stock) with initial price $s_0$ and time 1 price $s_u = u s_0$ that occurs with probability $p = 1/2$ and price $s_d = d s_0$ that occurs with probability $q = 1-p = 1/2$. The (weak) no-arbitrage conditions can be stated as: $0 < d < 1+r < u$ \citep{shreve2005stochastic}. Let us mock up an example to satisfy these conditions. Consider the problem setting below. Here $0 < d=0.966... < 1+r \in \{ 0.995, 1.005 \} < u = 1.0333...$ thus the conditions are satisfied. Intuitively the investor could either make or lose money depending on what happens.

\begin{figure}[!htb]
\caption{One-Period Binomial Tree}
\begin{adjustbox}{center}
\begin{tikzpicture}[>=stealth,sloped]
    \matrix (tree) [%
      matrix of nodes,
      minimum size=1cm,
      column sep=3.5cm,
      row sep=1cm,nodes={text width=8em}
          ]
    {
          &   &  \\
          & {Stock =\$310\\ Bond =\$100.5} &   \\
     {Stock = \$300\\ Bond = \$100} &   &  \\
          &  {Stock =\$290\\ Bond =\$99.5} &   \\
          &   &  \\
    };
    \node[bullet,right=0mm of tree-3-1.east](b-3-1){};
    \node[bullet,left=0mm of tree-2-2.west](b-2-2){};
    \node[bullet,left=0mm of tree-4-2.west](b-4-2){};
    \draw[->] (b-3-1) -- (b-2-2) node [midway,above] {$p$};
    \draw[->] (b-3-1) -- (b-4-2) node [midway,below] {$q=(1-p)$};
\end{tikzpicture}
\end{adjustbox}
\end{figure}
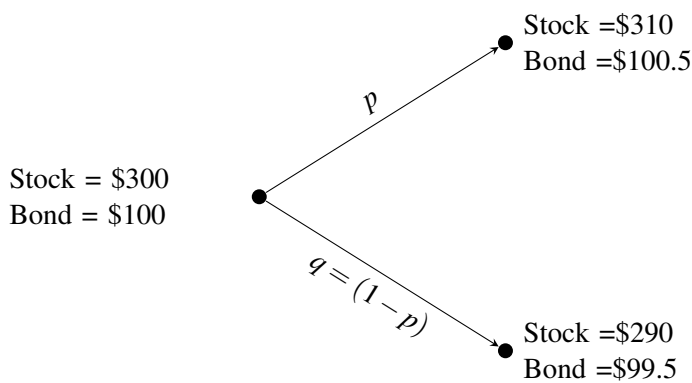


Solving NLP N\_WNA (see Theorem 2.1) for various values of $\delta$ gives the results in Table 2 (including the optimal portfolios). The critical radius $\delta^*_w$ from Theorem 2.2 is \textit{at most} 1.5. Solving NLP N\_SNA (see Corollary 2.1.1) for various values of $\delta$ gives the results in Table 3. The critical radius $\delta^*_s$ is \textit{at most} 1.5 as well. For this problem setup it appears that weak and strong arbitrage occur together. A plot of these values (from both tables) is shown 
in Figure 2 below.

\begin{table}[!htb]
\begin{center}
\caption{$v_w(\delta) < 1$: Weak No-Arbitrage Condition}
\begin{tabular}{ |r|r|r|r|r|r|r|r| }
 \hline
$\delta$ & 0.001 & 0.1 & 0.25 & 0.5 & 1.0 & 1.25 & 1.5 \\
 \hline
$v_w$ & 0.50 & 0.54 & 0.59 & 0.69 & 0.87 & 0.97 & 1.0 \\
 \hline
$w_{stock}$ & 1.3 & -0.7 & -0.7 & -0.7 & -0.7 & -0.7 & -0.5 \\
 \hline
$w_{bond}$ & -3.9 & 2.1 & 2.1 & 2.1 & 2.1 & 2.1 & 1.5 \\
 \hline
\end{tabular}
\end{center}
\end{table}


\begin{table}[!htb]
\begin{center}
\caption{$v_s(\delta) < 1$: Strong No-Arbitrage Condition}
\begin{tabular}{ |r|r|r|r|r|r|r|r| }
 \hline
$\delta$ & 0.001 & 0.1 & 0.25 & 0.50 & 1.0 & 1.25 & 1.5 \\
 \hline
$v_s$ & 0.50 & 0.54 & 0.59 & 0.69 & 0.87 & 0.97 & 1.0 \\
 \hline
$w_{stock}$ & 1.3 & 188 & 188 & 188 & -300 & -300 & -82 \\
 \hline
$w_{bond}$ & -3.9 & -565 & -565 & -565 & 899 & 899 & 247 \\
 \hline
\end{tabular}
\end{center}
\end{table}

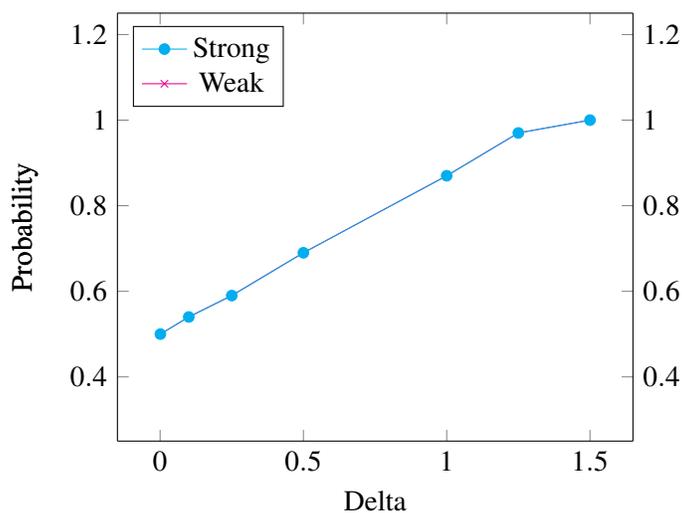
\begin{figure}[H]
\caption{Arbitrage Probabilities for One-Period Binomial Asset Pricing}
\begin{center}
\begin{tikzpicture}
	\begin{axis}[legend pos=north west,
		xlabel=Delta,
		ylabel=Probability,
		ymin = 0.25, ymax = 1.25,
		axis y line* = left ]
	\addplot[color=magenta,mark=x] coordinates {
		(0.001,0.5)
		(0.1,0.54)
		(0.25,0.59)
		(0.50,0.69)
		(1.0,0.87)
		(1.25,0.97)
		(1.5,1.0)
	}; \label{plot3_y1}
	\end{axis}

	\begin{axis}[legend pos=north west,
		xlabel=Delta,
		ylabel=Probability,
		ymin = 0.25, ymax = 1.25,
		axis y line* = right,
		axis x line = none ]
	\addplot[color=cyan,mark=*] coordinates {
		(0.001,0.5)
		(0.1,0.54)
		(0.25,0.59)
		(0.50,0.69)
		(1.0,0.87)
		(1.25,0.97)
		(1.5,1.0)
	}; \label{plot3_y2}
	\addlegendimage{/pgfplots/refstyle=plot3_y1}\addlegendentry{Strong}
    \addlegendimage{/pgfplots/refstyle=plot3_y2}\addlegendentry{Weak}

	\end{axis}
	
\end{tikzpicture}
\end{center}
\end{figure}

\subsection{Pairs Trading}

A typical example of a pairs trade would be to trade a linear combination of cointegrated tickers. The idea is to exploit temporary divergence from the long run relationship in the belief that convergence to the long run mean will result in a profitable trading strategy \citep{wojcik2005pairs}. The following annual data set of month end closing prices is taken from Yahoo finance website.

\begin{table}[!htb]
\begin{center}
\caption{U.S. Tech Pair Market Data 2019}
\begin{tabular}{ |c|c|c|c|c|c|c| }
 \hline
Date & 04/01 & 05/01 & 06/01 & 07/01 & 08/01 & 09/01 \\
 \hline
 Google & 1,188.48 & 1,103.63 & 1,080.91 & 1,216.68 & 1,188.10 & 1,219.00 \\
 \hline
 Amazon & 1,926.52 & 1,775.07 & 1,893.63 & 1,866.78 & 1,776.29 & 1,735.91 \\
 \hline
\end{tabular}
\end{center}
\end{table}

\begin{table}[!htb]
\begin{center}
\caption{U.S. Tech Pair Market Data 2019/2020}
\begin{tabular}{ |c|c|c|c|c|c|c| }
 \hline
Date & 10/01 & 11/01 & 12/01 & 01/01 & 02/01 & 03/01 \\
 \hline
 Google & 1,260.11 & 1,304.96 & 1,337.02 & 1,434.23 & 1,339.33 & 1,298.41  \\
 \hline
 Amazon & 1,776.66 & 1,800.80 & 1,847.84 & 2,008.72 & 1,883.75 & 1,901.09 \\
 \hline
\end{tabular}
\end{center}
\end{table}

\noindent A plot of this market data is shown in Figure 3 below.

\begin{figure}[H]
\caption{U.S. Tech Pair Market Data}
\begin{center}
\begin{tikzpicture}
	\begin{axis}[legend pos=north west,
		xlabel=Month,
		ylabel=Closing Prices,
		ymin = 1000, ymax = 1500,
		axis y line* = left ]
	\addplot[color=green,mark=x] coordinates {
		(1,1188)
		(2,1104)
		(3,1081)
		(4,1217)
		(5,1188)
		(6,1219)
		(7,1260)
		(8,1305)
		(9,1337)
		(10,1434)
		(11,1339)
		(12,1298)
	}; \label{plot1_y1}
	\end{axis}

	\begin{axis}[legend pos=north west,
		xlabel=Month,
		ylabel=Closing Prices,
		ymin = 1700, ymax = 2200,
		axis y line* = right ]
	\addplot[color=blue,mark=*] coordinates {
		(1,1927)
		(2,1775)
		(3,1894)
		(4,1867)
		(5,1776)
		(6,1736)
		(7,1777)
		(8,1801)
		(9,1848)
		(10,2009)
		(11,1884)
		(12,1901)
	}; \label{plot1_y2}
	\addlegendimage{/pgfplots/refstyle=plot1_y1}\addlegendentry{Amazon}
    \addlegendimage{/pgfplots/refstyle=plot1_y2}\addlegendentry{Google}

	\end{axis}
	
\end{tikzpicture}
\end{center}
\end{figure}
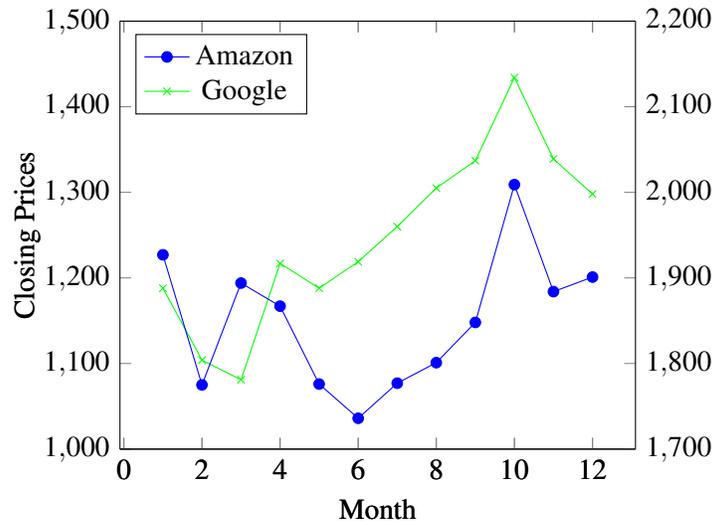

Solving NLP N\_SNA for various values of $\delta$ gives the results in Table 6. A plot of these values is shown in Figure 4 below. The entire 12 point data set is used as the support for the time 1 distribution. The arithmetic average is used for the time 0 prices. The data tuples of closing prices are assigned to the (uniform) discrete distribution  for time 1.

\begin{table}[!htb]
\begin{center}
\caption{$v_s(\delta)$: SA Best Case}
\begin{tabular}{ |r|r|r|r|r|r|r|r|r| }
 \hline
$\delta$ & 0.001 & 1 & 2 & 5 & 10 & 20 & 31 & 31.7 \\
 \hline
$v_s$  & 0.58 & 0.67 & 0.69 & 0.77 & 0.83 & 0.93 & 0.99 & 1.0 \\
 \hline
$w_{google}$ & 10.1 & 100.0 & 100.0 & 100.0 & 100.0 & 100.0 & 100.0 & 100.0 \\
 \hline
$w_{amazon}$ & -6.9 & -67.5 & -67.5 & -67.5 & -67.5 & -67.5 & -67.5 & -67.5 \\
 \hline
\end{tabular}
\end{center}
\end{table}

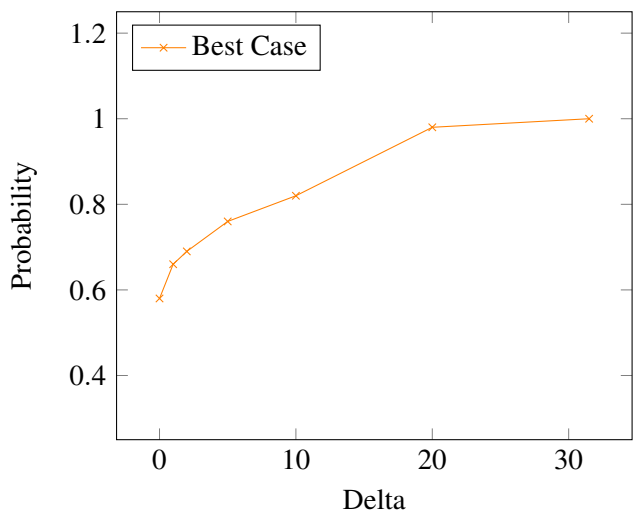
\begin{figure}[!htb]
\caption{Arbitrage Probabilities for U.S. Tech Pair}
\begin{center}
\begin{tikzpicture}
	\begin{axis}[legend pos=north west,
		xlabel=Delta,
		ylabel=Probability,
		ymin = 0.25, ymax = 1.25 ]
	\addplot[color=orange,mark=x] coordinates {
		(0.001,0.58)
		(1.0,0.66)
		(2.0,0.69)
		(5.0,0.76)
		(10,0.82)
		(20,0.98)
		(31.5,1.0)
	}; \label{plot4_y1}

	\addlegendimage{/pgfplots/refstyle=plot4_y1}\addlegendentry{Best Case}
	\end{axis}
	
\end{tikzpicture}
\end{center}
\end{figure}

\noindent A plot of the best case (bc) distribution is shown in Figure 5 below. Recall the robust (strong) no-arbitrage conditions are
\[
\sup_{ w \in \Gamma_{w(s)} } \: { \sup_{ Q \in \mathcal{U}_{\delta}(Q_N) } \mathbb{E}^Q [ \, \mathbbm{1}_{\{ w \cdot S_1 \geq 0 \}} \, ] } < 1. 
\]
The best case distribution has the property that the inner $\sup$ evaluates to 1 for $\delta \geq \delta^* = 31.7$ (critical radius from Table 6 above). Using the optimal portfolio $w^* = \{100.0, -67.5\}$ from Table 6, corresponding to $\delta = \delta^*$, the outer $\sup$ also evaluates to 1. Using the greedy algorithm discussed in Section 2.2 one recovers an \textit{arbitrage} distribution. From the plot in Figure 5 it is clear that Google dominates Amazon which allows for the profit making opportunity. \par
Also from Table 6 one case see that for $\alpha=0.99$ the critical radius is $\delta^{bc}_\alpha = 31$. It turns out that point 3 is the most expensive to move towards the arbitrage conditions, as Amazon dominates Google here (instead of the other way around). Moving 95\% of its mass towards the new values (and using the arbitrage admissible distribution for the remaining points) recovers the \textit{statistical arbitrage} distribution for $\alpha=0.99$. See Table 7 for the detailed probability mass function (PMF) for $\alpha=0.99$. Recall that one point mass from the reference distribution can be split into two pieces according to the budget constraint $\delta$. In this case, this happens for point 3. 95\% of its mass is moved towards the new values in point 13.

\begin{figure}[!htb]
\caption{U.S. Tech Pair Best Case Distribution}
\begin{center}
\begin{tikzpicture}
	\begin{axis}[legend pos=north west,
		xlabel=Month,
		ylabel=Closing Prices,
		ymin = 1100, ymax = 1500,
		axis y line* = left ]
	\addplot[color=green,mark=x] coordinates {
		(1,1265)
		(2,1169)
		(3,1217)
		(4,1247)
		(5,1196)
		(6,1219)
		(7,1260)
		(8,1305)
		(9,1337)
		(10,1434)
		(11,1339)
		(12,1298)
	}; \label{plot2_y1}
	\end{axis}

	\begin{axis}[legend pos=north west,
		xlabel=Month,
		ylabel=Closing Prices,
		ymin = 1700, ymax = 2100,
		axis y line* = right ]
	\addplot[color=blue,mark=*] coordinates {
		(1,1875)
		(2,1731)
		(3,1802)
		(4,1847)
		(5,1771)
		(6,1736)
		(7,1777)
		(8,1801)
		(9,1848)
		(10,2009)
		(11,1884)
		(12,1901)
	}; \label{plot2_y2}
	\addlegendimage{/pgfplots/refstyle=plot2_y1}\addlegendentry{Amazon}
    \addlegendimage{/pgfplots/refstyle=plot2_y2}\addlegendentry{Google}

	\end{axis}
	
\end{tikzpicture}
\end{center}
\end{figure}
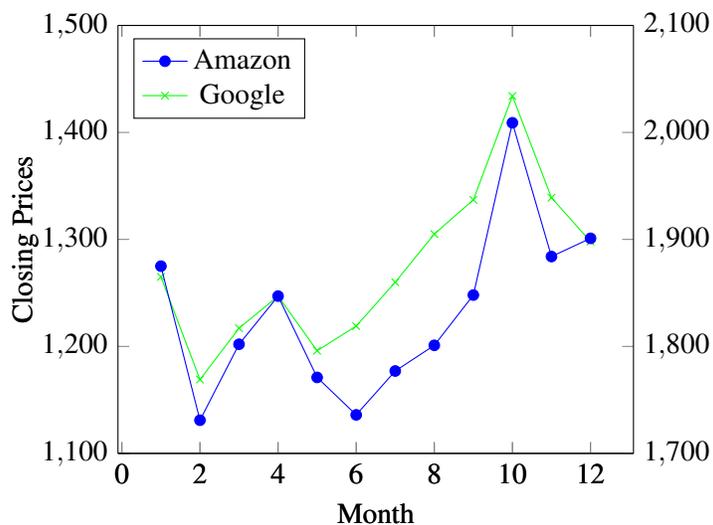

\begin{table}[H]
\begin{center}
\caption{Best Case PMF for $\alpha=0.99$}
\begin{tabular}{ |r|r|r|r|r|r|r|r|r|r|r|r|r|r|r|r| }
 \hline
Point & 1 & 2 & 3 & 4 & 5 & 6 & 7 & 8 & 9 & 10 & 11 & 12 & 13 \\
 \hline
Prob & 1/12 & 1/12 & 0.0041 & 1/12 & 1/12 & 1/12 & 1/12 & 1/12 & 1/12 & 1/12 & 1/12 & 1/12 & 0.0792 \\
 \hline
Google & 1,265 & 1,169 & 1,081 & 1,247 & 1,196 & 1,219 & 1,260 & 1,305 & 1,337 & 1,434 & 1,339 & 1,298 & 1,217 \\
 \hline
Amazon & 1,875 & 1,731 & 1,894 & 1,847 & 1,771 & 1,736 & 1,777 & 1,801 & 1,848 & 2,009 & 1,884 & 1,901 & 1,802 \\
 \hline
\end{tabular}
\end{center}
\end{table}

Switching to the worst case SA conditions, solving NLP N\_SSA for various values of $\delta$ gives the results in Table 8. A plot of these values is shown in Figure 6 below. The problem setup is the same as for the best case SA conditions above. A plot of the worst case (wc) distribution $Q^\alpha_{w^\alpha}$ for $\alpha=0$ is shown in Figure 7 below. The corresponding portfolio is $w^\alpha = \{ 99.11, -66.864 \}$. These results were calculated using the Matlab \textit{fmincon} solver, aplpying a grid search over $\lambda$ to solve the \textit{maximax} problem (which is convex in $\lambda$). Using the greedy algorithm discussed in Section 2.2 one recovers a \textit{no-win} distribution. From the plot in Figure 7 it is clear that \textit{neither} Amazon \textit{nor} Google dominates at all points but for the optimal portfolio $w^\alpha$, a quick check verifies that this distribution leads to a no-win situation, meaning $\mathbb{E}^{Q^\alpha_{w^\alpha}}  [ \, \mathbbm{1}_{\{ w^\alpha \cdot S_1 \geq 0 \}} \, ] = 0$. Our calculations show that for $\alpha=0$ the critical radius is $\delta^{wc}_\alpha = 31.4$. See Table 9 for the detailed probability mass function (PMF). Finally, Figure 8 shows the absolute values of these two positions in the optimal portfolio with weights $w^\alpha$. Here the dominance of the short Amazon position is easier to see. \par

\begin{table}[!htb]
\begin{center}
\caption{$v^{wc}_s(\delta)$: SA Worst Case}
\begin{tabular}{ |r|r|r|r|r|r|r|r|r| }
 \hline
$\delta$ & 0.001 & 1 & 2 & 5 & 10 & 20 & 31 & 31.4 \\
 \hline
$v^{wc}_s$  & 0.58 & 0.51 & 0.47 & 0.41 & 0.31 & 0.11 & 0.004 & 0.0 \\
 \hline
$w_{google}$ & 2.94 & 99.97 & 98.55 & 99.08 & 99.08 & 99.36 & 99.36 & 2.94 \\
 \hline
$w_{amazon}$ & -1.98 & -67.44 & -67.48 & -66.84 & -66.84 & -67.02 & -67.02 & -1.98 \\
 \hline
\end{tabular}
\end{center}
\end{table}

\begin{figure}[!htb]
\caption{Arbitrage Probabilities for U.S. Tech Pair}
\begin{center}
\begin{tikzpicture}
	\begin{axis}[legend pos=north west,
		xlabel=Delta,
		ylabel=Probability,
		ymin = 0.0, ymax = 1.0 ]
	\addplot[color=orange,mark=x] coordinates {
		(0.001,0.58)
		(1.0,0.51)
		(2.0,0.47)
		(5.0,0.41)
		(10,0.31)
		(20,0.11)
		(31,0.004)
		(31.4,0.0)
	}; \label{plot4wc_y1}
	\addlegendimage{/pgfplots/refstyle=plot4wc_y1}\addlegendentry{Worst Case}
	\end{axis}
	
\end{tikzpicture}
\end{center}
\end{figure}

\begin{figure}[!htb]
\caption{U.S. Tech Pair Worst Case Distribution}
\begin{center}
\begin{tikzpicture}
	\begin{axis}[legend pos=north west,
		xlabel=Month,
		ylabel=Closing Prices,
		ymin = 1000, ymax = 1400,
		axis y line* = left ]
	\addplot[color=green,mark=x] coordinates {
		(1,1188)
		(2,1104)
		(3,1081)
		(4,1217)
		(5,1188)
		(6,1186)
		(7,1218)
		(8,1243)
		(9,1275)
		(10,1380)
		(11,1292)
		(12,1288)
	}; \label{plot2wc_y1}
	\end{axis}

	\begin{axis}[legend pos=north west,
		xlabel=Month,
		ylabel=Closing Prices,
		ymin = 1700, ymax = 2100,
		axis y line* = right ]
	\addplot[color=blue,mark=*] coordinates {
		(1,1927)
		(2,1775)
		(3,1894)
		(4,1867)
		(5,1776)
		(6,1758)
		(7,1805)
		(8,1843)
		(9,1890)
		(10,2045)
		(11,1916)
		(12,1908)
	}; \label{plot2wc_y2}
	\addlegendimage{/pgfplots/refstyle=plot2wc_y1}\addlegendentry{Amazon}
    \addlegendimage{/pgfplots/refstyle=plot2wc_y2}\addlegendentry{Google}

	\end{axis}
	
\end{tikzpicture}
\end{center}
\end{figure}
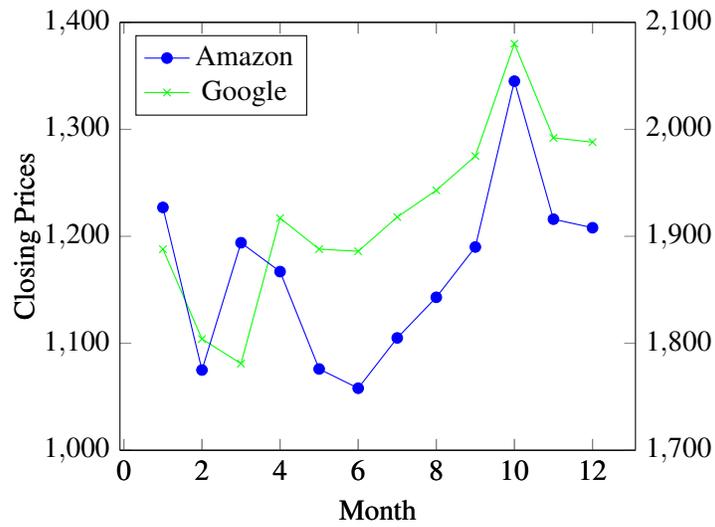

\begin{table}[H]
\begin{center}
\caption{Worst Case PMF for $\alpha=0$}
\begin{tabular}{ |r|r|r|r|r|r|r|r|r|r|r|r|r|r|r|r| }
 \hline
Point & 1 & 2 & 3 & 4 & 5 & 6 & 7 & 8 & 9 & 10 & 11 & 12 \\
 \hline
Prob & 1/12 & 1/12 & 1/12 & 1/12 & 1/12 & 1/12 & 1/12 & 1/12 & 1/12 & 1/12 & 1/12 & 1/12 \\
 \hline
Google & 1,188 & 1,104 & 1,081 & 1,217 & 1,188 & 1,186 & 1,218 & 1,243 & 1,275 & 1,380 & 1,292 & 1,288 \\
 \hline
Amazon & 1,927 & 1,775 & 1,894 & 1,867 & 1,776 & 1,758 & 1,805 & 1,843 & 1,890 & 2,045 & 1,916 & 1,908 \\
 \hline
\end{tabular}
\end{center}
\end{table}

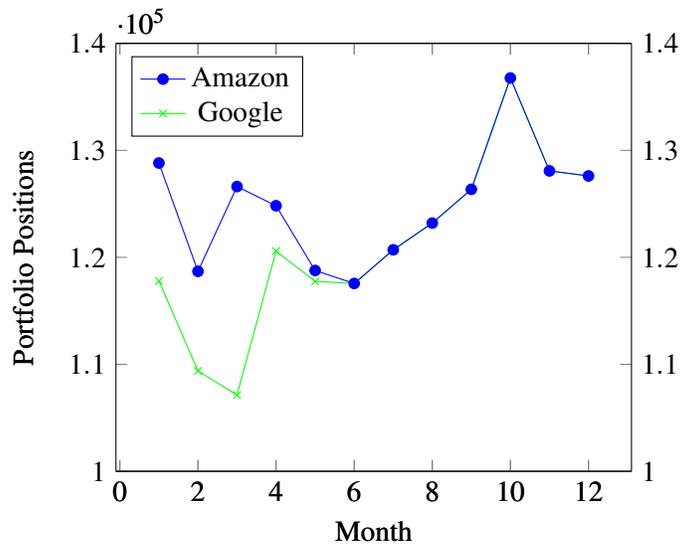
\begin{figure}[!htb]
\caption{U.S. Tech Pair Worst Case Positions: Absolute Values}
\begin{center}
\begin{tikzpicture}
	\begin{axis}[legend pos=north west,
		xlabel=Month,
		ylabel=Portfolio Positions,
		ymin = 100000, ymax = 140000,
		axis y line* = left ]
	\addplot[color=green,mark=x] coordinates {
		(1,117790)
		(2,109381)
		(3,107129)
		(4,120585)
		(5,117753)
		(6,117551)
		(7,120698)
		(8,123198)
		(9,126353)
		(10,136759)
		(11,128075)
		(12,127603)
	}; \label{plot2pos_y1}
	\end{axis}

	\begin{axis}[legend pos=north west,
		xlabel=Month,
		ylabel=Portfolio Positions,
		ymin = 100000, ymax = 140000,
		axis y line* = right ]
	\addplot[color=blue,mark=*] coordinates {
		(1,128815)
		(2,118688)
		(3,126616)
		(4,124820)
		(5,118770)
		(6,117555)
		(7,120702)
		(8,123202)
		(9,126357)
		(10,136763)
		(11,128079)
		(12,127607)
	}; \label{plot2pos_y2}
	\addlegendimage{/pgfplots/refstyle=plot2pos_y1}\addlegendentry{Amazon}
    \addlegendimage{/pgfplots/refstyle=plot2pos_y2}\addlegendentry{Google}

	\end{axis}
	
\end{tikzpicture}
\end{center}
\end{figure}

\subsection{Basket Trading}
Basket trading involves simultaneous trading of a basket of stocks. This example computes the critical radius for a small basket of U.S. equities from the S\&P 500 index used in the statistical arbitrage study by \citep{zhao2018optimal}. Table 10 below lists the stock tickers, names, and industries. Table 11 displays a \textit{partial} listing of the 5y historical market data set from March 2015 through March 2020 used in this study. As before, the arithmetic average is used for time 0 and the data tuples for time 1. Table 12 and Figure 9 display the optimal portfolios and best case arbitrage probabilities. Figures 10 and 11 show different views of the best case distribution for $\alpha =1 $ and optimal portfolio $w^\alpha = \{ 17.20, -0.24, 10.90, -11.19, -41.23, 2.07, -3.90 \}$. The quantiles in Figure 11b are $\{ 0.25, 0.5, 0.75 \}$ respectively.

\begin{table}[!htb]
\begin{center}
\caption{Basket Constituents}
\begin{tabular}{ |c|c|c|c| }
 \hline
  Ticker & Name & Industry & Market Cap (bn) \\
 \hline
 APA & Apache Corporation & Energy: Oil and Gas & 10.68 \\
 \hline
 AXP & American Express Company & Credit Services & 109.0 \\
 \hline
 CAT & Caterpillar Inc. & Farm Machinery & 74.94 \\
 \hline
 COF & Capital One Financial Corp. & Credit Services & 46.19 \\
 \hline
 FCX & Freeport-McMoRan Inc. & Copper & 17.33 \\
 \hline
 IBM & 1nternational Business Machines Corp. & Technology & 132.70 \\
 \hline
 MMM & 3M Company & Industrial Machinery & 90.33 \\
 \hline
\end{tabular}
\end{center}
\end{table}

\begin{table}[H]
\begin{center}
\caption{Basket 2019 Market Data}
\begin{tabular}{ |r|r|r|r|r|r|r|r| }
 \hline
Date & 06/01 & 07/01 & 08/01 & 09/01 & 10/01 & 11/01 & 12/01 \\
 \hline
 APA & 28.13 & 23.71 & 21.17 & 25.12 & 21.25 & 22.11 & 25.39 \\
 \hline
 AXP & 122.16 & 123.08 & 119.50 & 117.42 & 116.43 & 119.71 & 124.06 \\
 \hline
 CAT & 133.26 & 128.74 & 117.25 & 124.45 & 135.77 & 143.72 & 146.65 \\
 \hline
 COF & 89.62 & 91.28 & 85.56 & 90.26 & 92.51 & 99.22 & 102.51 \\
 \hline
 FCX & 11.45 & 10.91 & 9.10 & 9.48 & 9.73 & 11.34 & 13.07 \\
 \hline
 IBM & 133.31 & 143.31 & 131.02 & 142.24 & 130.80 & 131.51 & 132.65 \\
 \hline
 MMM & 168.77 & 170.11 & 157.45 & 161.53 & 162.11 & 166.80 & 174.84 \\
 \hline
\end{tabular}
\end{center}
\end{table}


\begin{table}[H]
\begin{center}
\caption{$v_s(\delta)$: SA Best Case}
\begin{tabular}{ |r|r|r|r|r|r|r|r| }
 \hline
$\delta$ & 0.001 & 0.01 & 0.05 & 0.1 & 0.5 & 1 \\
 \hline
$v_s$ & 0.67 & 0.71 & 0.75 & 0.81 & 0.95 & 1.0 \\
 \hline
$w_{apa}$ & 5.31 & 16.37 & 7.03 & -7.61 & 6.38 & 17.2 \\
 \hline
$w_{axp}$ & -3.90 & -7.50 & -6.44 & -14.45 & -4.75 & -0.24 \\
 \hline
$w_{cat}$ & -3.03 & 2.69 & -0.94 & 5.54 & 6.36 & 10.90 \\
 \hline
$w_{cof}$ & 5.29 & -11.00 & -5.27 & 14.53 & 1.66 & -11.19 \\
 \hline
$w_{fcx}$ & 9.59 & -7.39 & 17.31 & -92.13 & -52.97 & -41.23 \\
 \hline
$w_{ibm}$ & -7.12 & -6.56 & -5.83 & 3.00 & 0.68 & 2.07 \\
 \hline
$w_{mmm}$ & 4.82 & 9.08 & 7.86 & 3.38 & -0.41 & -3.90 \\
 \hline
\end{tabular}
\end{center}
\end{table}

\begin{figure}[H]
\caption{Arbitrage Probabilities for U.S. Equity Basket}
\begin{center}
\begin{tikzpicture}[scale=0.975]
	\begin{axis}[legend pos=north west,
		xlabel=Delta,
		ylabel=Probability,
		ymin = 0.25, ymax = 1.25 ]
	\addplot[color=black,mark=x] coordinates {
		(0.001,0.67)
		(0.01,0.71)
		(0.05,0.73)
		(0.1,0.76)
		(0.5,0.91)
		(1.0,1.0)
	}; \label{plot6_y1}
	\addlegendimage{/pgfplots/refstyle=plot6_y1}\addlegendentry{Best Case}
	\end{axis}
	
\end{tikzpicture}
\end{center}
\end{figure}

\begin{figure}[!htb]
\caption{Correlation Matrix for Equity Basket BC Distribution}
\centerline{\scalebox{0.15}[0.15]{\includegraphics{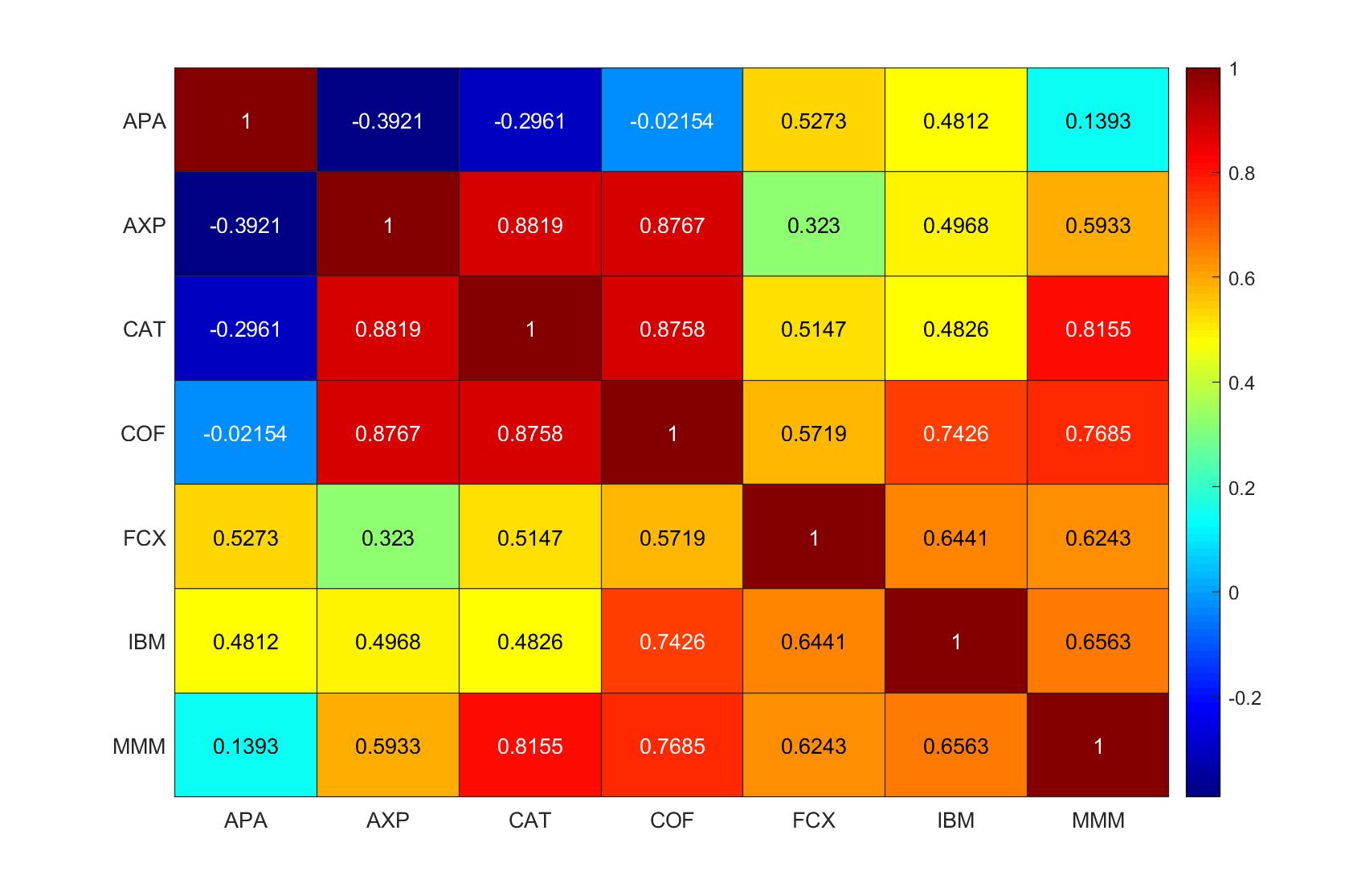}}}
\end{figure}

\begin{figure}[H]
	\centering
	\caption{Equity Basket BC Distribution}%
	\subfloat[Parallel Coords]{\scalebox{0.145}[0.15]{\includegraphics{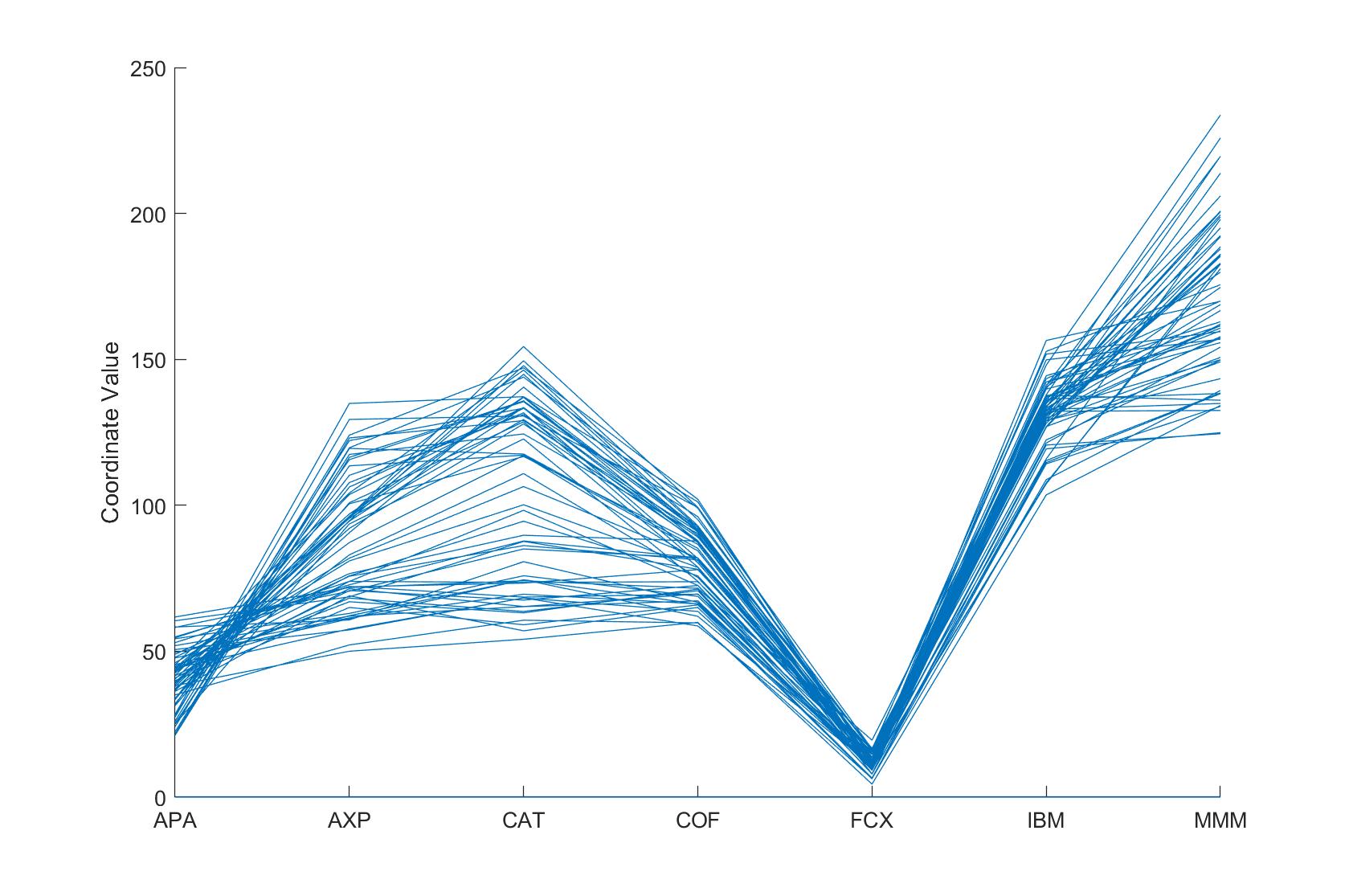}}}%
	\quad
	\subfloat[Quantiles]{\scalebox{0.145}[0.15]{\includegraphics{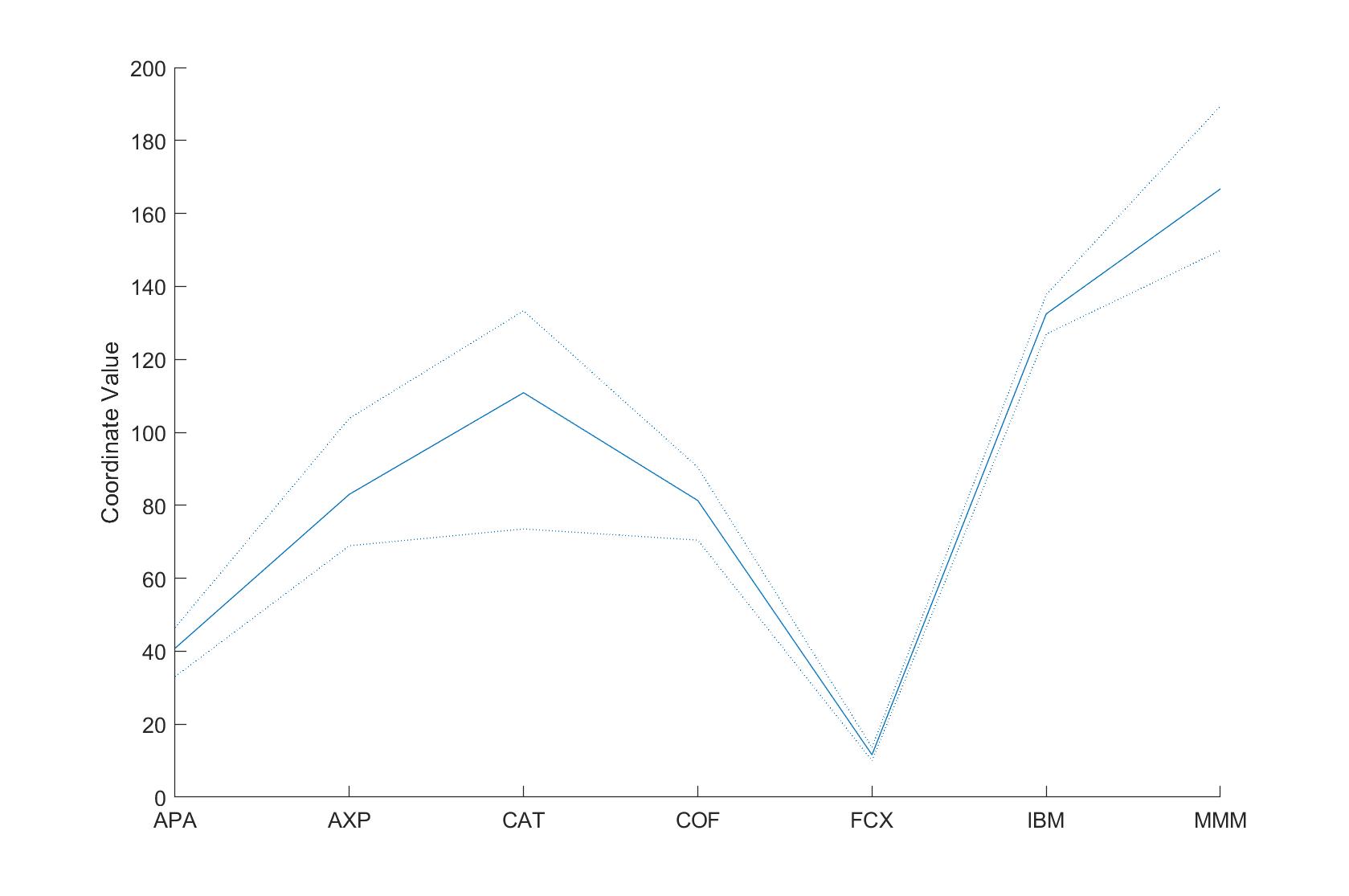}}}%
\end{figure}

%

Switching to the worst case SA conditions, solving NLP N\_SSA for various values of $\delta$ gives the results in Table 13. A plot of these values is shown in Figure 12 below. The problem setup is the same as for the best case SA conditions above. Note that the critical value $\delta^{wc}_{\alpha=0} = 13.5$ is significantly higher than the critical value $\delta^{bc}_{\alpha=1} = 1$. The reference distribution is much closer to admitting arbitrage than admitting a \textit{no-win} situation. Figures 13 and 14 show different views of the worst case distribution $Q^\alpha_{w^\alpha}$ for $\alpha=0$ and optimal portfolio $w^\alpha = \{ 7.45, 9.97, -6.84, 9.90, 3.14, -6.54, -3.11 \}$. The quantiles in Figure 14b are $\{ 0.25, 0.5, 0.75 \}$ respectively.

\begin{table}[H]
\begin{center}
\caption{$v^{wc}_s(\delta)$: SA Worst Case}
\begin{tabular}{ |r|r|r|r|r|r|r|r| }
 \hline
$\delta$ & 0.001 & 0.5 & 1.0 & 2.0 & 5.0 & 10.0 & 13.5 \\
 \hline
$v^{wc}_s$ & 0.68 & 0.53 & 0.46 & 0.39 & 0.27 & 0.07 & 0.0 \\
 \hline
$w_{apa}$ & 25.96 & 784.18 & -45.29 & -657.58 & -654.99 & 685.87 & 7.45 \\
 \hline
$w_{axp}$ & -43.41 & -948.06 & --895.60 & 379.42 & 353.38 & -540.94 & 9.97 \\
 \hline
$w_{cat}$ & -40.50 & 117.04 & -321.50 & 975.81 & 973.14 & -994.01 & -6.84 \\
 \hline
$w_{cof}$ & 87.52 & 667.87 & 166.52 & -115.33 & -103.22 & -0.26 & 9.90 \\
 \hline
$w_{fcx}$ & 52.76 & 484.11 & 360.76 & -50.30 & -36.49 & 55.17 & 3.14 \\
 \hline
$w_{ibm}$ & -46.11 & -523.48 & 143.70 & -834.35 & -810.86 & 820.05 & -6.54 \\
 \hline
$w_{mmm}$ & 31.10 & 280.00 & 450.25 & 61.49 & 50.99 & 90.97 & -3.11 \\
 \hline
\end{tabular}
\end{center}
\end{table}

\begin{figure}[H]
\caption{Arbitrage Probabilities for U.S. Equity Basket}
\begin{center}
\begin{tikzpicture}[scale=0.975]
	\begin{axis}[legend pos=north west,
		xlabel=Delta,
		ylabel=Probability,
		ymin = 0.0, ymax = 1.0 ]
	\addplot[color=black,mark=x] coordinates {
		(0.001,0.68)
		(0.5,0.53)
		(1.0,0.46)
		(2.0,0.39)
		(5.0,0.27)
		(10.0,0.07)
		(13.5,0.0)
	}; \label{plot6_y1}
	\addlegendimage{/pgfplots/refstyle=plot6_y1}\addlegendentry{Worst Case}
	\end{axis}
	
\end{tikzpicture}
\end{center}
\end{figure}

\begin{figure}[H]
\caption{Correlation Matrix for Equity Basket WC Distribution}
\centerline{\scalebox{0.2}[0.2]{\includegraphics{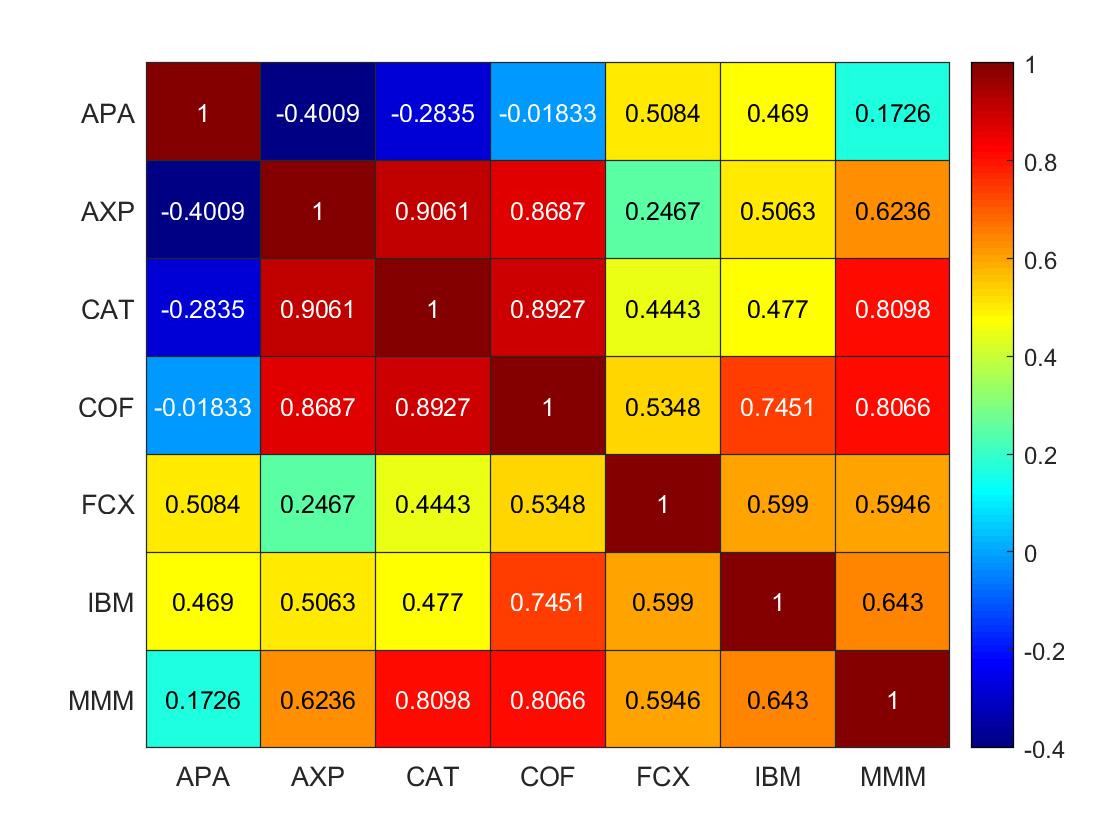}}}
\end{figure}

\begin{figure}[H]
	\centering
	\caption{Equity Basket WC Distribution}%
	\subfloat[Parallel Coords]{\scalebox{0.22}[0.2]{\includegraphics{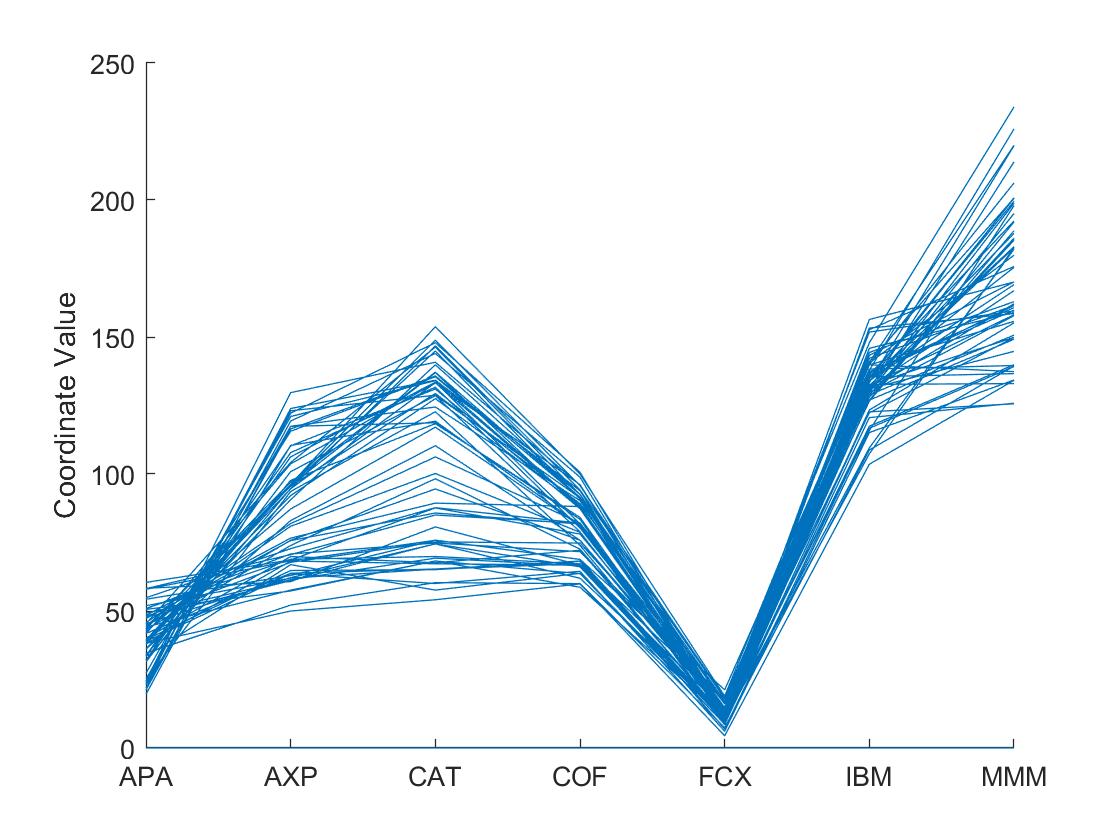}}}%
	\quad
	\subfloat[Quantiles]{\scalebox{0.22}[0.2]{\includegraphics{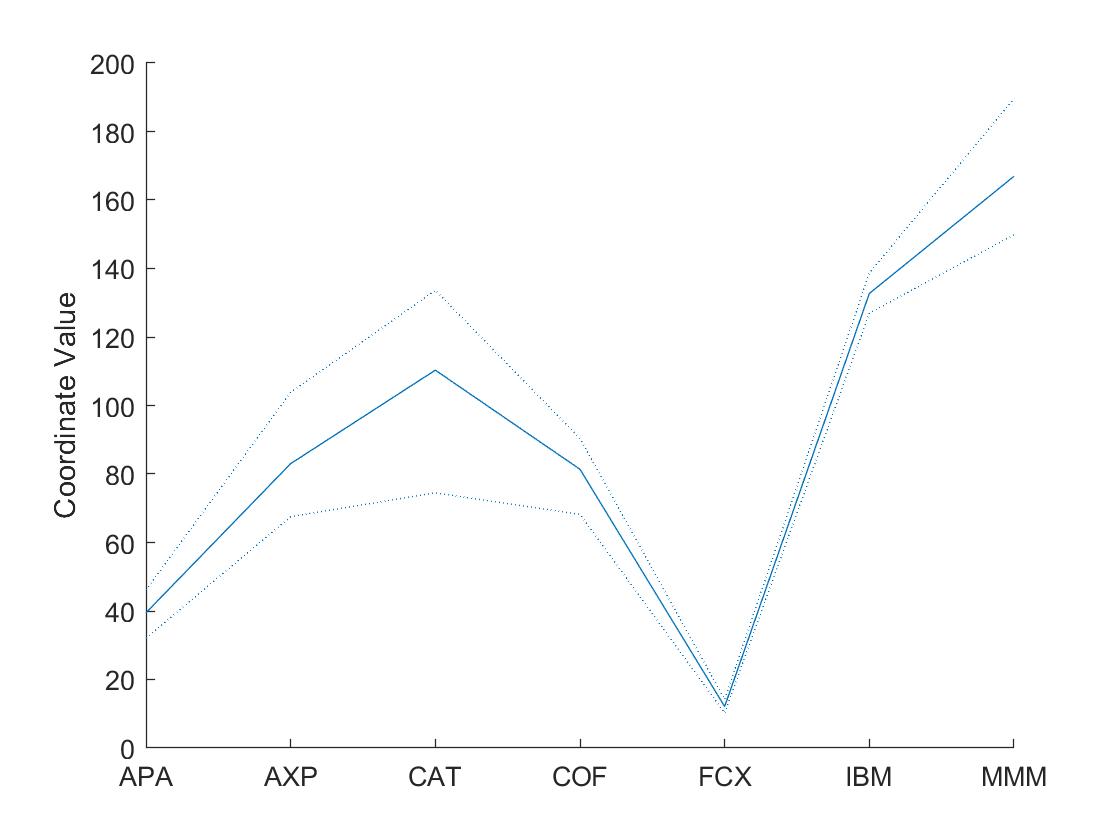}}}%
\end{figure}

%

As another example, let us consider a basket of stock indices, taken from the Market Watch financial website. In particular, we look at broad based equity indices ( Dow Jones 30, S\&P 500 ), the Nasdaq technology stock index ( IXIC ), the USO oil exchange traded fund (ETF), and a gold ETF (SGOL). Table 14 displays a \textit{partial} listing of the historical data set from March 2015 to March 2020 used in this study. As before, the arithmetic average is used for time 0 and the data tuples for time 1.

\begin{table}[H]
\begin{center}
\caption{Basket 2019 Market Data}
\begin{tabular}{ |r|r|r|r|r|r|r|r| }
 \hline
Date & 06/01 & 07/01 & 08/01 & 09/01 & 10/01 & 11/01 & 12/01 \\
 \hline
 DJI & 26,600 & 26,864 & 26,403 & 26,917 & 27,046 & 28,051 & 28,538 \\
 \hline
 GSPC & 2,942 & 2,980 & 2,926 & 2,977 & 3,038 & 3,141 & 3,231 \\
 \hline
 IXIC & 8,006 & 8,175 & 7,963 & 7,999 & 8,292 & 8,665 & 8,973 \\
 \hline
 USO & 12.04 & 12.04 & 11.46 & 11.34 & 11.30 & 11.62 & 12.81 \\
 \hline
 SGOL & 13.60 & 13.61 & 14.69 & 14.20 & 14.56 & 14.16 & 14.62 \\
 \hline
\end{tabular}
\end{center}
\end{table}

Solving NLP N\_SNA for various values of $\delta$ gives the results in Table 15 below. The arbitrage probability curve is plotted in Figure 15.  Different views of the best case distribution for $\alpha=1$ and optimal portfolio $w^\alpha = \{ -0.16, -5.13, 1.61, 179.35, 290.23 \}$ are shown in Figures 16 and 17.

\begin{table}[H]
\begin{center}
\caption{$v_s(\delta)$: SA Best Case}
\begin{tabular}{ |r|r|r|r|r|r|r|r| }
 \hline
$\delta$ & 0.001 & 0.01 & 0.05 & 0.1 & 0.5 & 0.6 \\
 \hline
$v_s$ & 0.68 & 0.68 & 0.70 & 0.81 & 0.99 & 1.0 \\
 \hline
$w_{dji}$ & 0.80 & 1.55 & 1.37 & 0.83 & 0.95 & -0.16 \\
 \hline
$w_{gspc}$ & -0.04 & -0.77 & 2.73 & 2.54 & -11.65 & -5.13 \\
 \hline
$w_{ixic}$ & -2.70 & -4.97 & -5.45 & -3.21 & 0.08 & 1.61 \\
 \hline
$w_{uso}$ & -6.01 & 4.42 & 105.07 & 21.27 & -392.33 & 179.35 \\
 \hline
$w_{sgol}$ & -0.01 & -17.15 & -232.03 & -334.96 & 1,000.00 & 290.23 \\
 \hline
\end{tabular}
\end{center}
\end{table}

\begin{figure}[H]
\caption{Arbitrage Probabilities for Basket of Indices}
\begin{center}
\begin{tikzpicture}[scale=0.975]
	\begin{axis}[legend pos=north west,
		xlabel=Delta,
		ylabel=Probability,
		ymin = 0.25, ymax = 1.25 ]
	\addplot[color=brown,mark=x] coordinates {
		(0.001,0.68)
		(0.01,0.68)
		(0.05,0.70)
		(0.1,0.81)
		(0.5,0.99)
		(0.6,1.0)
	}; \label{plot7_y1}
	\addlegendimage{/pgfplots/refstyle=plot7_y1}\addlegendentry{Best Case}
	\end{axis}
	
\end{tikzpicture}
\end{center}
\end{figure}

\begin{figure}[H]
\caption{Correlation Matrix for Indices BC Distribution}
\centerline{\scalebox{0.2}[0.2]{\includegraphics{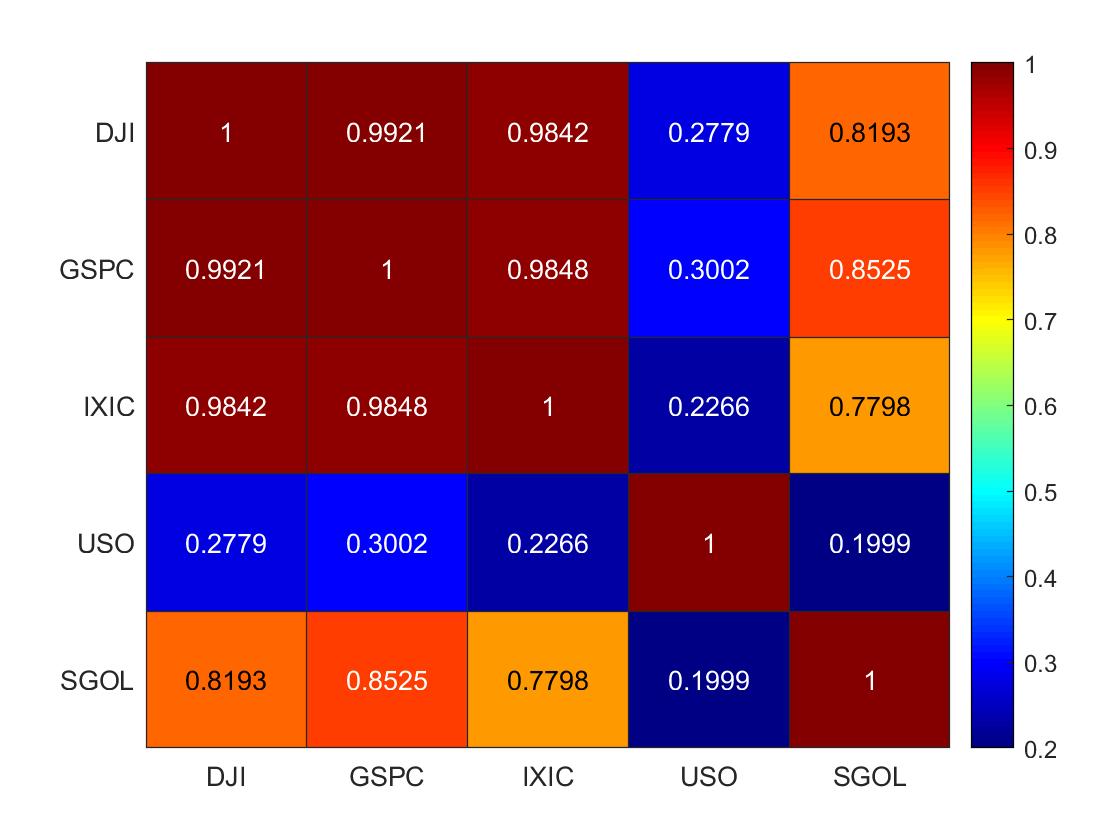}}}
\end{figure}

\begin{figure}[H]
	\centering
	\caption{Indices BC Distribution}%
	\subfloat[Parallel Coords]{\scalebox{0.2}[0.2]{\includegraphics{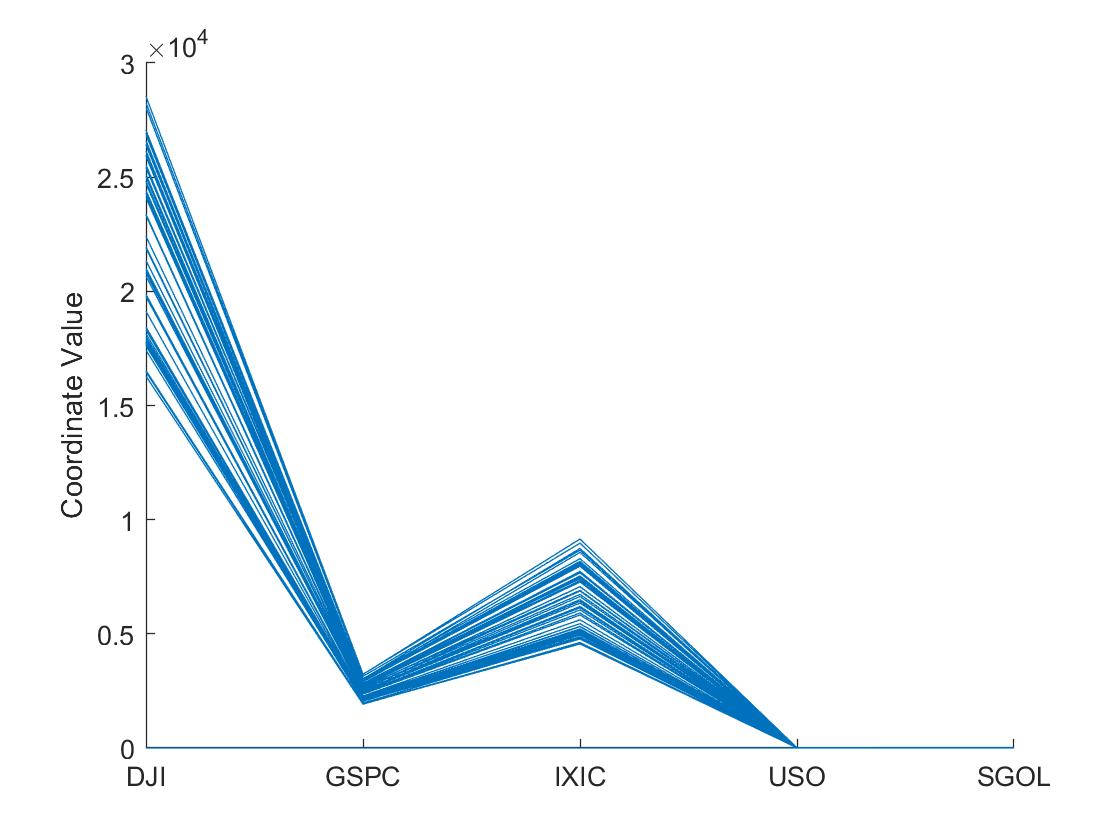}}}%
	\quad
	\subfloat[Quantiles]{\scalebox{0.2}[0.2]{\includegraphics{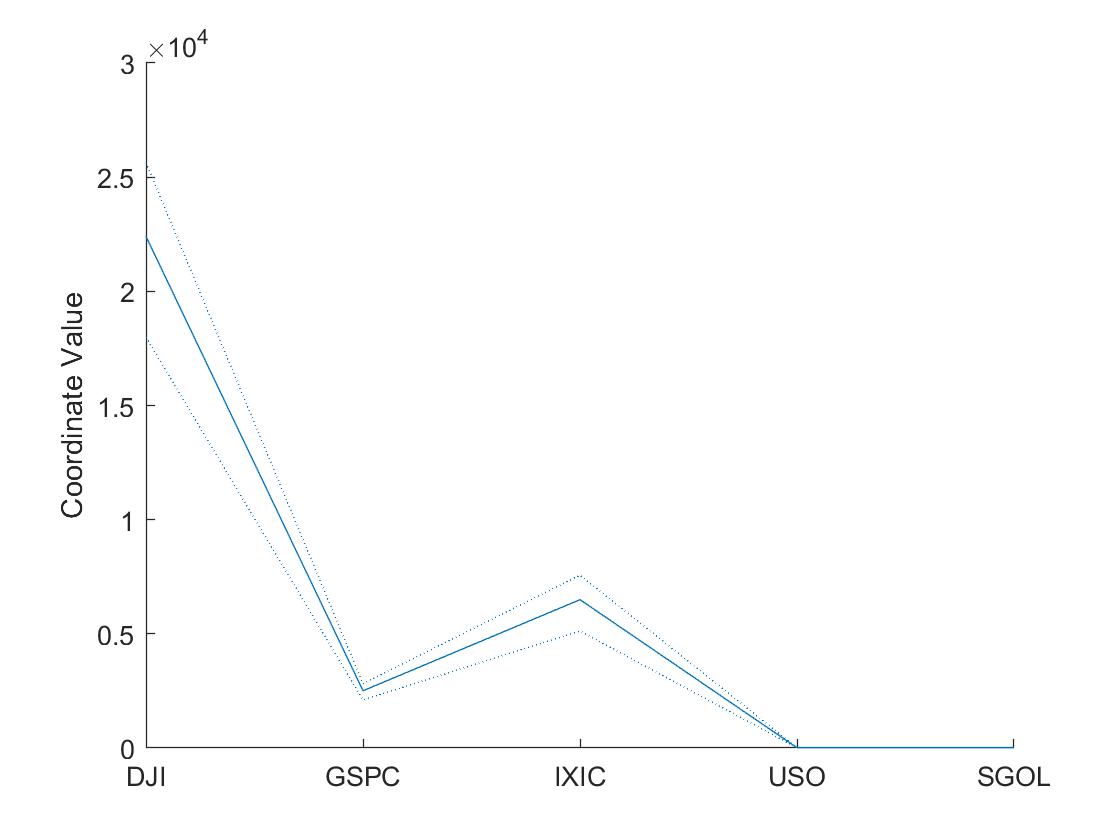}}}%
\end{figure}

%

Switching to the worst case gives the results in Table 16 below. The arbitrage probability curve is plotted in Figure 18. Note that the critical value $\delta^{wc}_{\alpha=0} = 32.6$ is significantly higher than the critical value $\delta^{bc}_{\alpha=1} = 0.6$. As before, the reference distribution is much closer to admitting arbitrage than admitting a \textit{no-win} situation. Different views of the worst case distribution for $\alpha=0$ and optimal portfolio $w^\alpha = \{ 1.84, 8.42, -9.52, 9.0, 3.0 \}$ are shown in Figures 19 and 20.

\begin{table}[H]
\begin{center}
\caption{$v^{wc}_s(\delta)$: SA Worst Case}
\begin{tabular}{ |r|r|r|r|r|r|r|r| }
 \hline
$\delta$ & 0.001 & 1.0 & 2.0 & 5.0 & 10.0 & 20.0 & 32.6 \\
 \hline
$v^{wc}_s$ & 0.68 & 0.64 & 0.61 & 0.55 & 0.45 & 0.25 & 0.0 \\
 \hline
$w_{dji}$ & 110.62 & 184.78 & 142.34 & 49.08 & 189.16 & 260.27 & 1.84 \\
 \hline
$w_{gspc}$ & -71.35 & -83.14 & -53.80 & -18.59 & -71.69 & -98.82 & 8.42 \\
 \hline
$w_{ixic}$ & -349.21 & -597.31 & -464.08 & -160.00 & -616.62 & -848.45 & -9.52 \\
 \hline
$w_{uso}$ & -55.20 & 29.16 & 16.30 & 5.79 & 14.07 & 46.47 & 9.00 \\
 \hline
$w_{sgol}$ & 19.75 & -232.03 & 5.59 & 5.33 & 4.45 & 23.25 & 3.00\\
 \hline
\end{tabular}
\end{center}
\end{table}

\begin{figure}[H]
\caption{Arbitrage Probabilities for Basket of Indices}
\begin{center}
\begin{tikzpicture}[scale=0.975]
	\begin{axis}[legend pos=north west,
		xlabel=Delta,
		ylabel=Probability,
		ymin = 0.0, ymax = 1.0 ]
	\addplot[color=brown,mark=x] coordinates {
		(0.001,0.68)
		(1.0,0.64)
		(2.0,0.61)
		(5.0,0.55)
		(10.0,0.45)
		(20.0,0.25)
		(31.6,0.0)
	}; \label{plot7wc_y1}
	\addlegendimage{/pgfplots/refstyle=plot7wc_y1}\addlegendentry{Worst Case}
	\end{axis}
	
\end{tikzpicture}
\end{center}
\end{figure}

\begin{figure}[H]
\caption{Correlation Matrix for Indices WC Distribution}
\centerline{\scalebox{0.2}[0.2]{\includegraphics{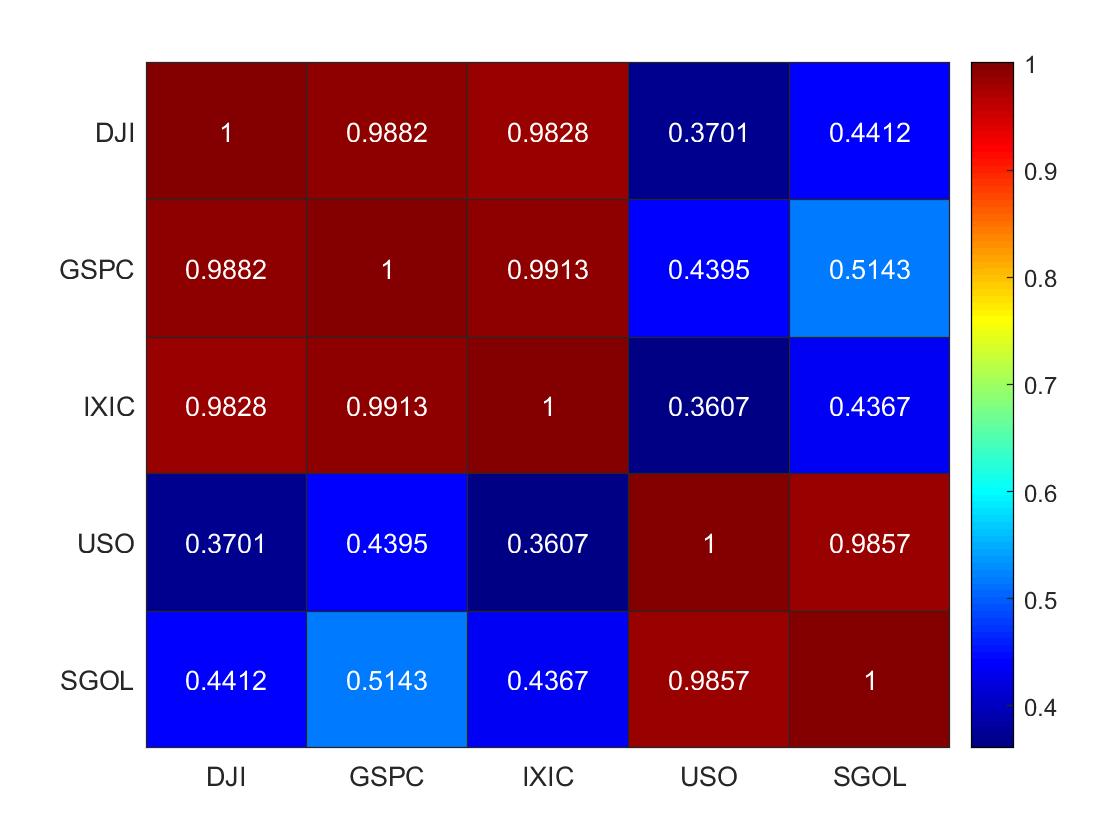}}}
\end{figure}

\begin{figure}[H]
	\centering
	\caption{Indices WC Distribution}%
	\subfloat[Parallel Coords]{\scalebox{0.22}[0.2]{\includegraphics{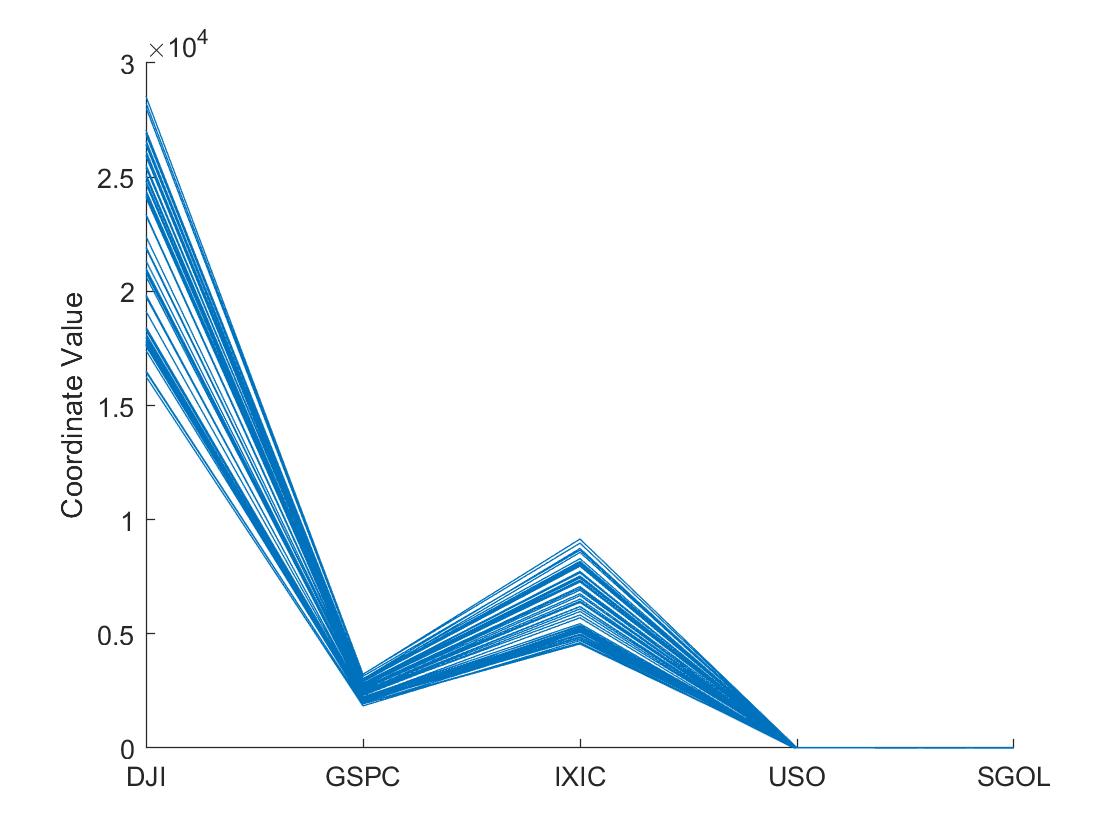}}}%
	\quad
	\subfloat[Quantiles]{\scalebox{0.22}[0.2]{\includegraphics{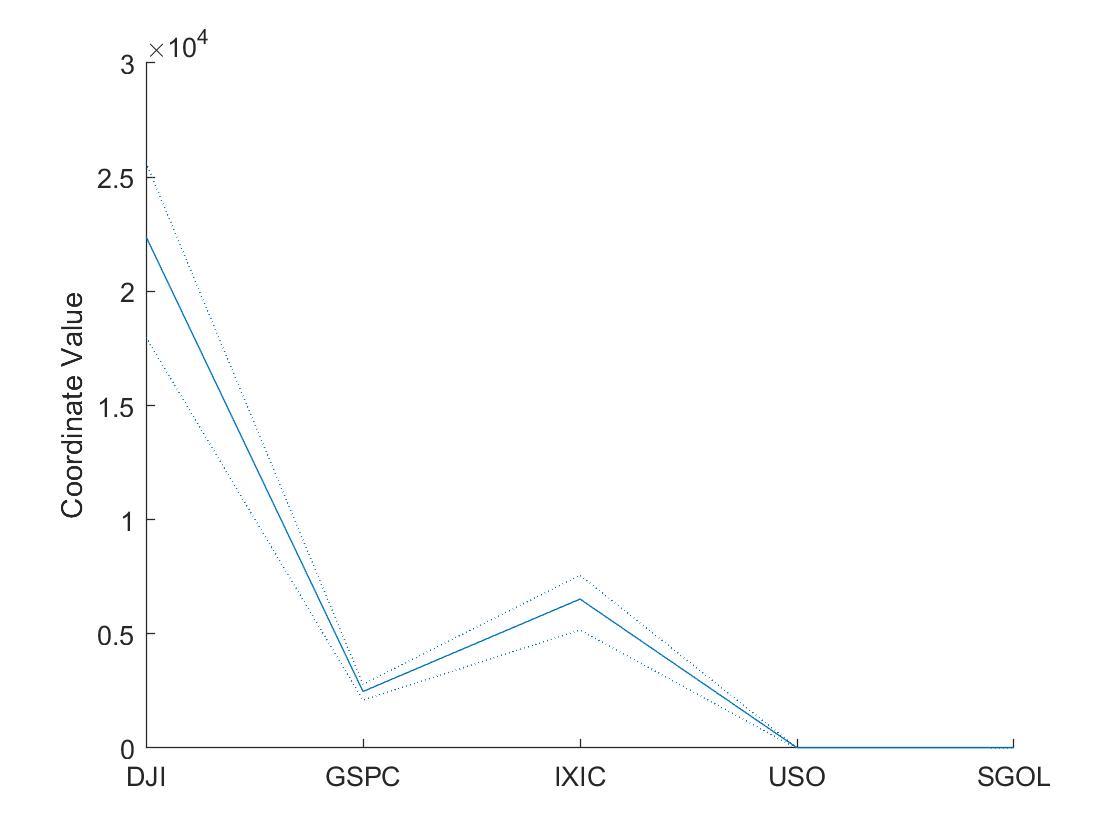}}}%
\end{figure}

%

\subsection{Nearest NA Problem}
This subsection looks at a couple of concrete examples for the nearest NA problem discussed in Section 2.5. In particular, short sales are allowed so we consider the problem setting of Section 2.5.1. The first example is a simple one-period binomial tree asset pricing model. The second is a one-period pairs trading example using the Russell 2000 small-cap index and the S\&P 500 index. The third example looks at basket trading using the index basket from Section 6.3.

\subsubsection{Binomial Tree Asset Pricing}
For this example, we again consider the simple setting of a one-period binomial tree asset pricing model. There is a riskless bond priced at par at time zero that earns a deterministic risk free rate of return $r$ at time 1. In addition there is a risky asset (stock) with initial price $s_0$ and time 1 price $s_u = u s_0$ that occurs with probability $p = 1/2$ and price $s_d = d s_0$ that occurs with probability $q = 1-p = 1/2$. The (weak) no-arbitrage conditions can be stated as: $0 < d < 1+r < u$ \citep{shreve2005stochastic}. Let us mock up an example to violate this. Consider the problem setting below. Here $0 < 1+r = 1.01 < d = u = 1.01333...$ thus the conditions are violated. Intuitively the investor could always make money by going long the stock and borrowing via the bond.

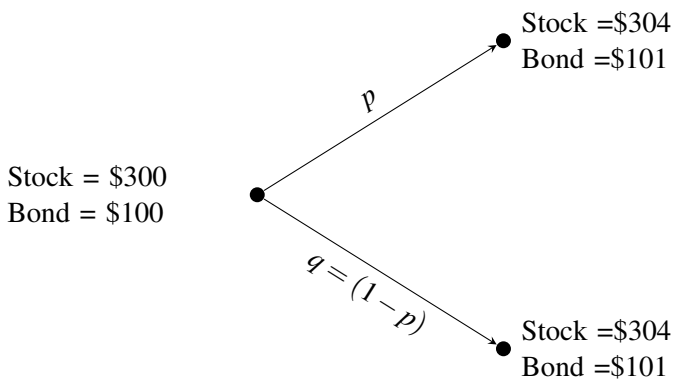
\begin{figure}[!htb]
\caption{One-Period Binomial Tree}
\begin{adjustbox}{center}
\begin{tikzpicture}[>=stealth,sloped]
    \matrix (tree) [%
      matrix of nodes,
      minimum size=1cm,
      column sep=3.5cm,
      row sep=1cm,nodes={text width=8em}
          ]
    {
          &   &  \\
          & {Stock =\$304\\ Bond =\$101} &   \\
     {Stock = \$300\\ Bond = \$100} &   &  \\
          &  {Stock =\$304\\ Bond =\$101} &   \\
          &   &  \\
    };
    \node[bullet,right=0mm of tree-3-1.east](b-3-1){};
    \node[bullet,left=0mm of tree-2-2.west](b-2-2){};
    \node[bullet,left=0mm of tree-4-2.west](b-4-2){};
    \draw[->] (b-3-1) -- (b-2-2) node [midway,above] {$p$};
    \draw[->] (b-3-1) -- (b-4-2) node [midway,below] {$q=(1-p)$};
\end{tikzpicture}
\end{adjustbox}
\end{figure}

Solving the penalty relaxation problem \ref{NstrongPR} using Neos / Baron nonlinear programming (NLP) solver \citep{byrdintegrated} for a set of values for $\beta$ gives the results in Table 17. Using a subgradient method we find the solution to the tight relaxation problem \ref{NstrongPRT} to be $\delta^*_{nst} \approx 0.316$. The corresponding values 
for $\tilde{X}^*$ and $q^*$ are shown as well.
\begin{table}[!htb]
\begin{center}
\caption{Min Distance to Arbitrage-Free Measure}
\begin{tabular}{ |c|c|c|c|c|c|c|c|c|c|c|c| }
 \hline
$\beta$ & 1 & 2 & 4 & 8 & 16 & 32 & 64 & 128 & 256 & 512 & 1024 \\
 \hline
$\delta^*_{nsr}$ & 0.098 & 0.188 & 0.252 & 0.284 & 0.300 & 0.308 & 0.312 & 0.314 & 0.315 & 0.316 & 0.316 \\
 \hline
\end{tabular}
\end{center}
\end{table}
\noindent Calculations show that
\[
p = 
\begin{bmatrix}
300 \\
100
\end{bmatrix}
\quad \wedge \quad X =
\begin{bmatrix} 
304 & 304 \\ 
101 & 101 
\end{bmatrix} 
\quad \implies \quad
\tilde{X}^* = 
\begin{bmatrix} 
303.9 & 303.9 \\ 
101.3 & 101.3 
\end{bmatrix}
\quad \wedge \quad q^* = 
\begin{bmatrix} 
0.493583 \\ 
0.493583 
\end{bmatrix}
\quad \implies \quad
\| p - \tilde{X}^* q^* \|^2 = \num{3.058e-08}
\]
For the complete markets problem, using the Neos / Knitro solver,
\[
p = 
\begin{bmatrix}
300 \\
100
\end{bmatrix}
\quad \wedge \quad X =
\begin{bmatrix} 
304 & 304 \\ 
101 & 101 
\end{bmatrix} 
\quad \implies \quad
\tilde{X}^* = 
\begin{bmatrix} 
304.0 & 303.9 \\ 
101.0 & 101.3 
\end{bmatrix}
\quad \wedge \quad q^* = 
\begin{bmatrix} 
0 \\
0.987167 
\end{bmatrix}
\quad \implies \quad
\| p - \tilde{X}^* q^* \|^2 = \num{8.786e-23}
\]
with minimal distance $\delta^*_{cns} \approx 0.316$. As another example (using the subgradient method) we find that
\[
p = 
\begin{bmatrix}
300 \\
100
\end{bmatrix}
\quad \wedge \quad X =
\begin{bmatrix} 
309 & 306 \\ 
101 & 101 
\end{bmatrix} 
\quad \implies \quad
\tilde{X}^* = 
\begin{bmatrix} 
309 & 305.7 \\ 
101.0 & 101.899 
\end{bmatrix}
\quad \wedge \quad q^* = 
\begin{bmatrix} 
0 \\ 
0.981354 
\end{bmatrix}
\quad \implies \quad
\| p - \tilde{X}^* q^* \|^2 = \num{1.038e-08}
\]
with tight relaxation $\delta^*_{nst} \approx 0.949$. For the complete markets problem, we arrive at essentially the same solution.

\subsubsection{Pairs Trading}
This example uses the Russell 2000 and S\&P 500 indices to conduct pairs trading on an annual data set of month end closing prices from the Yahoo website, as shown in Tables 18 and 19. A plot of this market data is shown in Figure 22. To satisfy the (strong) arbitrage conditions, an initial asset price vector $S_0 = \{ 1,660, 2,750 \}$ is selected. The portfolio $w^* = \{ -1.0, 0.6 \}$ satisfies the (strong) arbitrage condition, for time 1 asset price vector $S_1$ following a uniform discrete distribution with the annual data set as its support. Converting to the nearest NA problem setting of $p = X q$, this support is used as the scenario matrix X and the initial asset price vector $S_0$ is used as the price vector $p$.

\begin{table}[!htb]
\begin{center}
\caption{Russell 2k and S\&P 500 Market Data 2019}
\begin{tabular}{ |c|c|c|c|c|c|c| }
 \hline
Date & 04/01 & 05/01 & 06/01 & 07/01 & 08/01 & 09/01 \\
 \hline
 Russell 2k & 1,591 & 1,466 & 1,567 & 1,577 & 1,495 & 1,523 \\
 \hline
S\&P 500 & 2,946 & 2,752 & 2,942 & 2,980 & 2,926 & 2,977 \\
 \hline
\end{tabular}
\end{center}
\end{table}
\begin{table}[!htb]
\begin{center}
\caption{Russell 2k and S\&P 500 Market Data 2019/2020}
\begin{tabular}{ |c|c|c|c|c|c|c| }
 \hline
Date & 10/01 & 11/01 & 12/01 & 01/01 & 02/01 & 03/01 \\
 \hline
 Russell 2k & 1,562 & 1,625 & 1,668 & 1,614 & 1,476 & 1,153  \\
 \hline
 S\&P 500 & 3,038 & 3,141 & 3,230 & 3,226 & 2,954 & 2,585 \\
 \hline
\end{tabular}
\end{center}
\end{table}

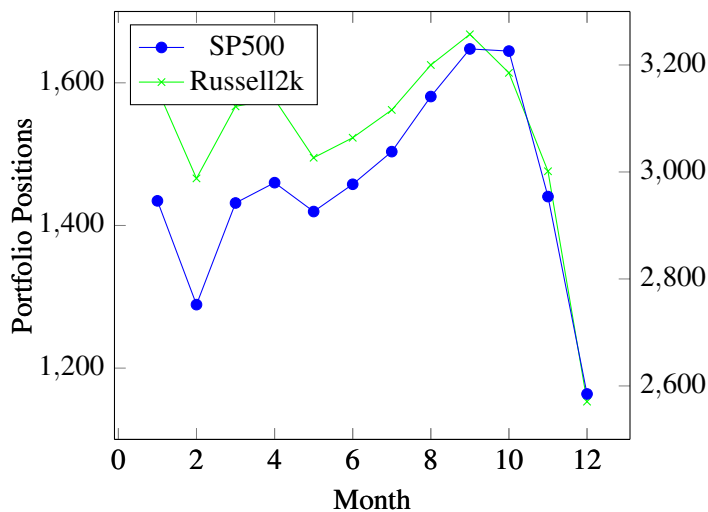
\begin{figure}[!htb]
\caption{Russell 2k and S\&P 500 Market Data}
\begin{center}
\begin{tikzpicture}
	\begin{axis}[legend pos=north west,
		xlabel=Month,
		ylabel=Portfolio Positions,
		ymin = 1100, ymax = 1700,
		axis y line* = left ]
	\addplot[color=green,mark=x] coordinates {
		(1,1591)
		(2,1466)
		(3,1567)
		(4,1577)
		(5,1495)
		(6,1523)
		(7,1562)
		(8,1625)
		(9,1668)
		(10,1614)
		(11,1476)
		(12,1153)
	}; \label{plot12pos_y1}
	\end{axis}

	\begin{axis}[legend pos=north west,
		xlabel=Month,
		ylabel=Portfolio Positions,
		ymin = 2500, ymax = 3300,
		axis y line* = right ]
	\addplot[color=blue,mark=*] coordinates {
		(1,2946)
		(2,2752)
		(3,2942)
		(4,2980)
		(5,2926)
		(6,2977)
		(7,3038)
		(8,3141)
		(9,3230)
		(10,3226)
		(11,2954)
		(12,2585)
	}; \label{plot12pos_y2}
	\addlegendimage{/pgfplots/refstyle=plot12pos_y1}\addlegendentry{SP500}
    \addlegendimage{/pgfplots/refstyle=plot12pos_y2}\addlegendentry{Russell2k}
	\end{axis}
	
\end{tikzpicture}
\end{center}
\end{figure}

Solving the penalty relaxation problem \ref{NstrongPR} using Neos / Knitro nonlinear programming (NLP) solver \citep{byrdintegrated} for a set of values for $\beta$ gives the results in Table 20. Using a subgradient method we find the solution to the tight relaxation problem \ref{NstrongPRT} to be $\delta^*_{nst} \approx 160.36$. The corresponding values for $\tilde{X}^*$ and $q^*$ are shown as well.

\begin{table}[!htb]
\begin{center}
\caption{Min Distance to Arbitrage-Free Measure}
\begin{tabular}{ |c|c|c|c|c|c|c|c|c|c|c|c| }
 \hline
$\beta$ & 1 & 2 & 4 & 8 & 16 & 32 & 64 & 128 & 256 & 512 & 1024 \\
 \hline
$\delta^*_{nsr}$ & 160.09 & 160.23 & 160.29 & 160.33 & 160.35 & 160.35 & 160.36 & 160.36 & 160.36 & 160.36 & 160.36 \\
 \hline
\end{tabular}
\end{center}
\end{table}

\[
p = 
\begin{bmatrix}
1,660 \\
2,750
\end{bmatrix}
\quad \wedge \quad
X = 
\begin{bmatrix*}[r]
1,591 & 1,466 & 1,567 & 1,577 & 1,495 & 1,523 & 1,562 & 1,625 & 1,668 & 1,614 & 1,476 & 1,153 \\
2,946 & 2,752 & 2,942 & 2,980 & 2,926 & 2,977 & 3,038 & 3,141 & 3,230 & 3,226 & 2,954 & 2,585
\end{bmatrix*}
\implies
\]

\[
\tilde{X}^* = 
\begin{bmatrix*}[r]
1,728.29 & 1,466 & 1,567 & 1,577 & 1,495 & 1,523 & 1,562 & 1,625 & 1,668 & 1,614 & 1,476 & 1,153 \\
2,863.13 & 2,752 & 2,942 & 2,980 & 2,926 & 2,977 & 3,038 & 3,141 & 3,230 & 3,226 & 2,954 & 2,585
\end{bmatrix*}
\quad \wedge
\]
\[
q^* = 
\begin{bmatrix}
0.960488 & 0 & 0 & 0 & 0 & 0 & 0 & 0 & 0 & 0 & 0 & 0
\end{bmatrix}
\implies \| p - \tilde{X}^* q^* \|^2 = \num{2.58e-07}.
\]

\subsubsection{Basket Trading}
This example uses the index basket from Section 6.3 to conduct trading. The reference data set is the 2019 month end closing prices from the Yahoo website, as shown in Tables 21 and 22. A plot of this market data is shown in Figure 23. To satisfy the (strong) arbitrage conditions, an initial asset price vector $S_0 = \{ 28256, 3226, 9151, 10.84, 15.27 \}$ is selected. The portfolio $w^* = \{ 0.16, 5.13, -1.61, -179.35, -290.23 \}$ satisfies the (strong) arbitrage condition, for time 1 asset price vector $S_1$ following a uniform discrete distribution with the annual data set as its support. Converting to the nearest NA problem setting of $p = X q$, this support is used as the scenario matrix X and the initial asset price vector $S_0$ is used as the price vector $p$.

\begin{table}[!htb]
\begin{center}
\caption{Index Basket 2019 Market Data}
\begin{tabular}{ |r|r|r|r|r|r|r|r| }
 \hline
Date & 01/01 & 02/01 & 03/01 & 04/01 & 05/01 & 06/01 \\
 \hline
 DJI & 25,000 & 25,916 & 25,929 & 26,593 & 24,815 & 26,600 \\
 \hline
 GSPC & 2,704 & 2,785 & 2,834 & 2,946 & 2,752 & 2,942 \\
 \hline
 IXIC & 7,282 & 7,533 & 7,729 & 8,095 & 7,453 & 8,006 \\
 \hline
 USO & 11.35 & 11.95 & 12.50 & 13.29 & 11.10 & 12.04 \\
 \hline
 SGOL & 12.73 & 12.65 & 12.46 & 12.37 & 12.59 & 13.60 \\
 \hline
\end{tabular}
\end{center}
\end{table}

\begin{table}[!htb]
\begin{center}
\caption{Index Basket 2019 Market Data}
\begin{tabular}{ |r|r|r|r|r|r|r|r| }
 \hline
Date & 07/01 & 08/01 & 09/01 & 10/01 & 11/01 & 12/01 \\
 \hline
 DJI & 26,864 & 26,403 & 26,917 & 27,046 & 28,051 & 28,538 \\
 \hline
 GSPC & 2,980 & 2,926 & 2,977 & 3,038 & 3,141 & 3,231 \\
 \hline
 IXIC & 8,175 & 7,963 & 7,999 & 8,292 & 8,665 & 8,973 \\
 \hline
 USO & 12.04 & 11.46 & 11.34 & 11.30 & 11.62 & 12.81 \\
 \hline
 SGOL & 13.61 & 14.69 & 14.20 & 14.56 & 14.16 & 14.62 \\
 \hline
\end{tabular}
\end{center}
\end{table}

\begin{figure}[H]
	\centering
	\caption{Indices Reference Distribution}%
	\subfloat[Parallel Coords]{\scalebox{0.22}[0.2]{\includegraphics{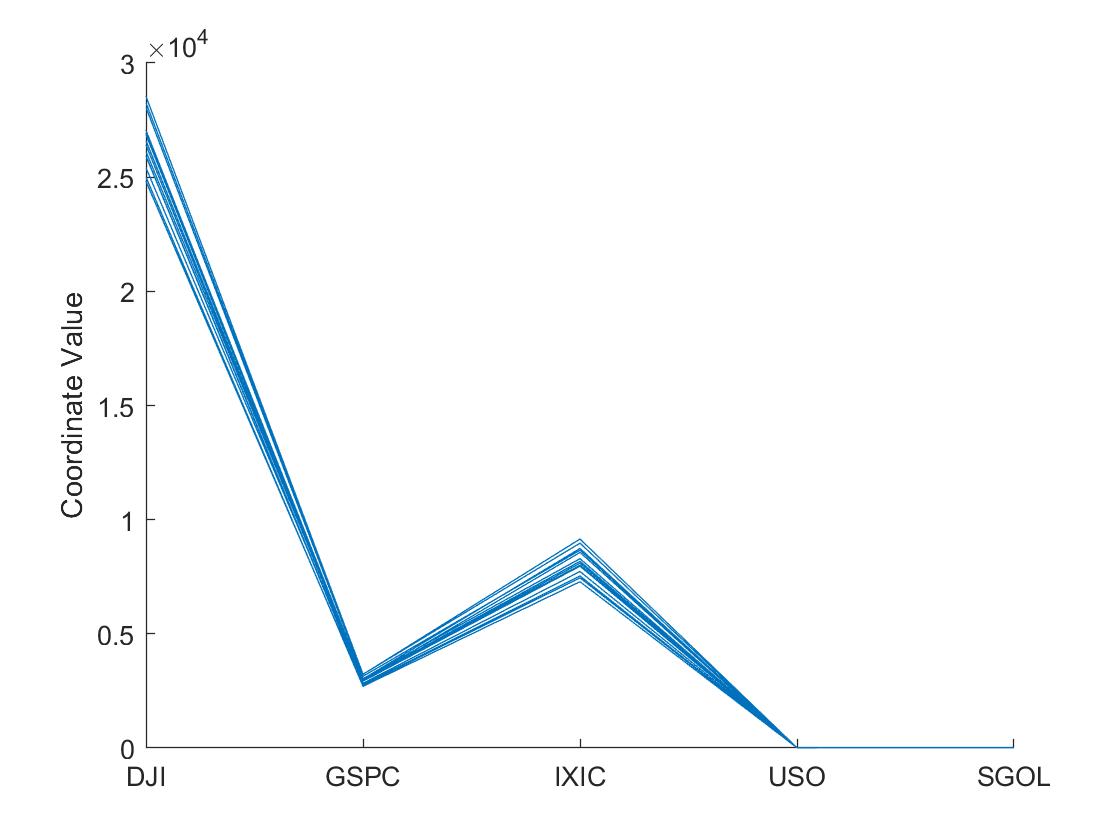}}}%
	\quad
	\subfloat[Quantiles]{\scalebox{0.22}[0.2]{\includegraphics{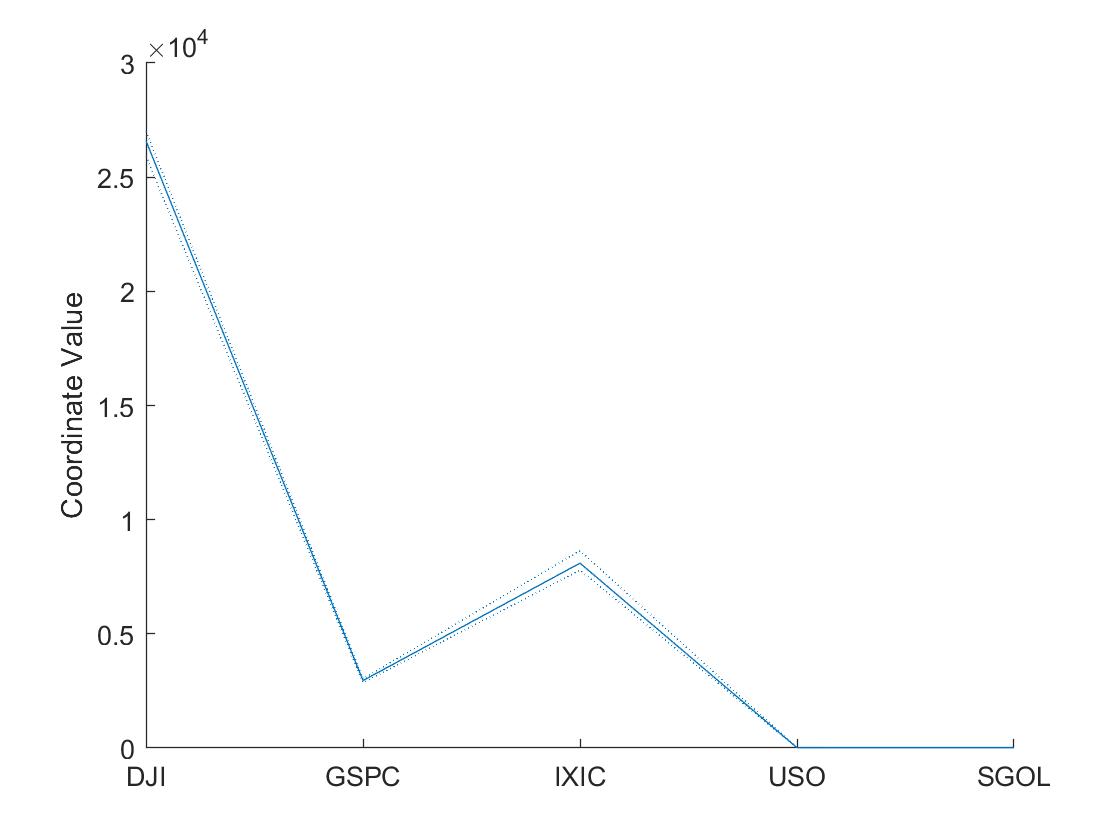}}}%
\end{figure}

Solving the penalty relaxation problem \ref{NstrongPR} using Neos / Knitro nonlinear programming (NLP) solver \citep{byrdintegrated} for a set of values for $\beta$ gives the results in Table 23. Using a subgradient method we find the solution to the tight relaxation problem \ref{NstrongPRT} to be $\delta^*_{nst} \approx 130.07$. The corresponding values for $\tilde{X}^*$ and $q^*$ are shown as well.

\begin{table}[!htb]
\begin{center}
\caption{Min Distance to Arbitrage-Free Measure}
\begin{tabular}{ |c|c|c|c|c|c|c|c|c|c|c|c| }
 \hline
$\beta$ & 1 & 2 & 4 & 8 & 16 & 32 & 64 & 128 & 256 & 512 & 1024 \\
 \hline
$\delta^*_{nsr}$ & 129.57 & 129.82 & 129.95 & 130.01 & 130.04 & 130.06 & 130.07 & 130.07 & 130.07 & 130.07 & 130.07 \\
 \hline
\end{tabular}
\end{center}
\end{table}

\[
p = 
\begin{bmatrix*}[r]
25,000 \\
2,704 \\
7,729 \\
12.50 \\
12.46
\end{bmatrix*}
\quad \wedge 
\]
\[
X = 
\begin{bmatrix*}[r]
25,000	& 25,916	& 25,929 & 26,593 & 24,815	& 26,600 & 26,864  & 264,03	& 26,917	& 27,046 & 28,051 & 28,538 \\
2,704	& 2,785	& 2,834  & 2,946   & 2,752	    & 2,942  & 2,980	 & 2,926	& 2,977	& 3,038  & 3,141	 & 3,230 \\
7,282	& 7,533	& 7,729  & 8,095	  & 7,453	    & 8,006  & 8,175	& 7,963    & 7,999  	& 8,292  & 8,665   &	8,973 \\
11.35	& 11.95	& 12.50  & 13.29	  & 11.10	    & 12.04 & 12.04  & 11.46   & 11.34    & 11.30  & 11.62	 & 12.81 \\
12.73	& 12.65	& 12.46 &	12.37	  & 12.59     & 13.60  &	13.61  & 14.69	& 14.20	& 14.56  &	 14.16  &	14.62
\end{bmatrix*}
\implies
\]

\[
\tilde{X}^* = 
\begin{bmatrix*}[r]
25,000	& 25,918.88	& 25,929 & 26,593 & 24,815	& 26,600 & 26,864  & 264,03	& 26,917	& 27,046 & 28,051 & 28,546.396 \\
2,704	& 2,743.19	& 2,834  & 2,946   & 2,752	    & 2,942  & 2,980	 & 2,926	& 2,977	& 3,038  & 3,141	 & 3,108.249 \\
7,282	& 7,538.30	& 7,729  & 8,095	  & 7,453	    & 8,006  & 8,175	& 7,963    & 7,999  	& 8,292  & 8,665   &	 8,988.435 \\
11.35	& 12.51	& 12.50  & 13.29	  & 11.10	    & 12.04 & 12.04  & 11.46   & 11.34    & 11.30  & 11.62	 & 14.429 \\
12.73	& 12.56	& 12.46 &	12.37	  & 12.59     & 13.60  &	13.61  & 14.69	& 14.20	    & 14.56  &	 14.16  &	14.351
\end{bmatrix*}
\]
\[
 \wedge \quad q^* = 
\begin{bmatrix}
0 & 0.229248 & 0 & 0 & 0 & 0 & 0 & 0 & 0 & 0 & 0 & 0.667620
\end{bmatrix}
\implies \| p - \tilde{X}^* q^* \|^2 = \num{4.78e-07}.
\]

\section{Conclusions and Further Work}

This work has developed theoretical results and investigated calculations of robust arbitrage-free markets under distributional uncertainty using Wasserstein distance as an ambiguity measure. The financial market overview and foundational notation and problem definitions were introduced in Section 1. Using recent duality results \citep {blanchetFirst}, the simpler dual formulation and its mixture of analytic and computational solutions were derived in Section 2. In Section 3 the robust arbitrage methodology was extended to encompass statistical arbitrage. In Section 4, some applications to robust option pricing and portfolio selection were presented. Section 5 gave formal proofs for the NP Hardness of the NA problem. In Section 6, we performed a computational study to calculate the critical radii (for the arbitrage conditions), optimal portfolios, and best (worst) case distributions for some concrete examples. The examples included a simple binomial tree, a pairs trading data set, and two trading baskets. The nearest NA problem was also explored to complete the study. Finally, we conclude with some commentary on directions for further research. \par

One direction for future research, as has been previously discussed in Section 1.4.2, would be to investigate robust arbitrage properties in a multi period \textit{continuous} time setting for a suitable class of admissible trading strategies. Recall that a more general version of the fundamental theorem of asset pricing applies there. Additional detail on this topic can be found in \citet{delbaen2006mathematics}. Another direction for future research, as mentioned in Section 2, would be to develop (and apply) a global solution strategy for the NLP problem formulations of Section 2.1.3. One possibility (as mentioned) is to construct an MINLP problem formulation, in programming languages such as GAMS, that is solvable to \textit{global} optimality using the Baron solver, for example. Perhaps a third direction for future research would be to investigate notions of robust (modern) portfolio theory applying and/or extending the framework developed thus far. \par

\section*{Data Availability Statement}
The raw and/or processed data required to reproduce the findings from this research can be obtained from the corresponding author, [D.S.], upon reasonable request.

\section*{Conflict of Interest Statement}
The authors declare they have no conflict of interest.

\section*{Funding Statement}
The authors received no specific funding for this work.

\clearpage
\bibliographystyle{apalike}
\bibliography{RobustArbFM}

\end{document}